\newenvironment{idealfunc}[1]
  {\noindent\rule{\columnwidth}{0.8pt}
    {\bf Ideal functionality} #1

    \noindent\rule{\columnwidth}{0.4pt}
    \small\it
  }
  {\noindent\rule{\columnwidth}{0.8pt}
  }
\newcounter{protocol}
\newenvironment{protocol}[1]
  {\refstepcounter{protocol}
    \noindent\rule{\columnwidth}{0.8pt} \nopagebreak
    {\bf Protocol \theprotocol.} #1

    \nopagebreak 
    \noindent\rule{\columnwidth}{0.4pt}
    \small\it
    \nopagebreak\vskip0.1pt\nopagebreak
  }
  {\nopagebreak\vskip0.1pt\nopagebreak
  \noindent\rule{\columnwidth}{0.8pt}
  }
\newenvironment{informalthm}
  {\medskip\noindent{\bf Theorem (informal).}}
  {\medskip}
\newcommand{\defeq}{\stackrel{\mathrm{def}}{=}}
\newtheorem{theorem}{Theorem}[section]
\newtheorem{property}[theorem]{Property}
\newtheorem{lemma}[theorem]{Lemma}
\newtheorem{definition}[theorem]{Definition}
\newtheorem{assumption}{Assumption}
\newcommand{\score}{s}
\newcommand{\pparams}{\mathsf{CK}}
\newcommand{\xor}{\oplus}
\newcommand{\Share}{{\sf Share}}
\newcommand{\Recon}{{\sf Reconstruct}}
\newcommand{\outp}{{\sf out}}
\newcommand{\powset}{\mathcal{P}}
\newcommand{\zo}{ \{0,1\}}
\newcommand{\cC}{{\cal C}}
\newcommand{\cD}{{\cal D}}
\newcommand{\cH}{{\cal H}}
\newcommand{\cI}{{\cal I}}
\newcommand{\cS}{{\cal S}}
\newcommand{\cT}{{\cal T}}
\newcommand{\cX}{{\cal X}}
\newcommand{\cY}{{\cal Y}}
\newcommand{\cZ}{{\cal Z}}
\newcommand{\sM}{\mathscr{M}}
\newcommand{\sO}{\mathscr{O}}
\newcommand{\bE}{\mathbb{E}}
\newcommand{\bI}{\mathbb{I}}
\newcommand{\bR}{\mathbb{R}}
\newcommand{\from}{ \leftarrow}
\newcommand{\NL}[1]{\tabularnewline~&~&~\tabularnewline }
\newcommand{\eps}{\varepsilon}
\newcommand{\rarr}{\rightarrow}
\newcommand{\larr}{\leftarrow}
\newcommand{\RR}{\mathbb{R}}
\newcommand{\NN}{\mathbb{N}}
\newcommand{\ZZ}{\mathbb{Z}}
\newcommand{\poly}{\mathsf{poly}}
\newcommand{\negl}{\mathsf{negl}}
\newcommand{\Gen}{\mathsf{Gen}}
\newcommand{\Enc}{\mathsf{Enc}}
\newcommand{\Sign}{{\sf Sign}}
\newcommand{\Verify}{{\sf Verify}}
\newcommand{\MAC}{{\sf MAC}}
\newcommand{\compIndist}{\overset{c}{\approx}}
\newcommand{\ack}{{\sf ack}}
\newcommand{\quit}{{\sf quit}}
\newcommand{\chall}{{\sf challenge}}
\newcommand{\resp}{{\sf response}}
\newcommand{\Adv}{{\cal A}}
\newcommand{\Sim}{{\cal S}}
\newcommand{\view}{\mathscr{V}}
\newcommand{\rview}{V}
\newcommand{\before}{{\rm before}}
\newcommand{\after}{{\rm after}}
\newcommand{\HidingExp}{{\sf HidingExp}}
\newcommand{\StrongHidingExp}{{\sf StrongHidingExp}}
\newcommand{\IdealSup}{{\sf ideal}}
\newcommand{\RealSup}{{\sf real}}
\newcommand{\IdealFuncOrdered}{{\cal F}_{\rm Ordered\mhyphen MPC}}
\newcommand{\IdealFuncQueued}{{\cal F}_{\rm Timed\mhyphen Delay\mhyphen MPC}}
\newcommand{\IdealFuncMPC}{{\cal F}_{\rm MPC}}
\newcommand{\IdealViewOrdered}{{\view}^{\IdealSup}_{\rm Ordered\mhyphen MPC}}
\newcommand{\IdealViewQueued}{{\view}^{\IdealSup}_{\rm Timed\mhyphen Delay\mhyphen MPC}}
\newcommand{\IdealViewMPC}{{\view}^{\IdealSup}_{\rm MPC}}
\newcommand{\RealViewQueued}{{\rview}^{\RealSup}}
\newcommand{\RealViewOrdered}{{\rview}^{\RealSup}}
\newcommand{\Lock}{{\sf Lock}}
\newcommand{\Unlock}{{\sf Unlock}}
\newcommand{\TimeStep}{{\sf TimeStep}}
\newcommand{\IterateTimeStep}{{\sf IterateTimeStep}}
\newcommand{\UnlockFull}{{\sf CompleteUnlock}}
\newcommand{\PPT}{{\sc ppt}}
\newcommand{\hatH}{\widetilde{\cH}}
\newcommand{\hath}{\widetilde{h}}
\newcommand{\RSA}{\mathsf{RSA}}
\renewcommand{\time}{{\sf time}}
\newcommand{\SeqTimeLine}{\mathsf{BB\mhyphen TimeLinePuzzle}}
\newcommand{\RSWTimeLine}{\mathsf{Square\mhyphen TimeLinePuzzle}}
\newcommand{\rankOne}{\textrm{$n$-dimensional}}
\newcommand{\rankTwo}{\textrm{$n^2$-dimensional}}
\newcommand{\ShareData}{\mathsf{SHARE\mhyphen DATA}}
\newcommand{\MinWeightFAS}{\sf Minimum\mhyphen Weight\mhyphen Feedback\mhyphen Arc\mhyphen Set}
\mathchardef\mhyphen="2D
\DeclareMathOperator*{\argmin}{arg\,min}
\theoremstyle{definition}
\newtheorem{remark}{Remark}
\newlength{\saveparindent}
\newlength{\saveparskip}
\newcommand{\NP}{\text{NP}}
\newif\ifproportionalfairness\proportionalfairnessfalse
\newif\ifhypotheses\hypothesestrue
\newif\ifdoublecolumn\doublecolumnfalse
\begin{document}

\pagenumbering{gobble}
\title{How to Incentivize Data-Driven Collaboration \\ Among Competing Parties}

\author[1]{Pablo Daniel Azar\thanks{Supported by the Robert Solow Fellowship 3310100.}}
\author[2]{Shafi Goldwasser\thanks{Supported by NSF Eager CNS-1347364, NSF Frontier CNS-1413920, the Simons Foundation (agreement dated June 5, 2012), Air Force Laboratory FA8750-11-2-0225, and Lincoln Lab PO7000261954. This work was done in part while these authors were visiting the Simons Institute for the Theory of Computing, supported by the Simons Foundation and by the DIMACS/Simons Collaboration in Cryptography through NSF grant CNS-1523467.}}
\author[1]{Sunoo Park\protect\footnotemark[2]}
\affil[1]{MIT}
\affil[2]{MIT and the Weizmann Institute of Science}


\date{}
\maketitle

\thispagestyle{empty}

\begin{abstract}
The availability of vast amounts of data is changing how we can make medical discoveries, 
predict global market trends, save energy, 
and develop new educational strategies. 
In certain settings such as Genome Wide Association Studies or deep learning,
the sheer size of data (patient files or  labeled examples) seems critical to making discoveries. 
When data is held distributedly by many parties, as often is the case, they must share it to reap its full benefits.

One obstacle to this revolution is the lack of willingness of different entities to share their data, due
to reasons such as possible loss of privacy or competitive edge.
Whereas cryptographic works address the privacy aspects, 
they shed no light on individual parties' losses and gains when access to data carries tangible rewards.
Even if it is clear that better overall conclusions can be drawn fom collaboration, are individual collaborators better off
by collaborating? Addressing this question is the topic of this paper.

Our contributions are as follows. 
\begin{itemize}
\item We formalize a model of $n$-party collaboration for computing functions over private inputs in which the participants receive their outputs in sequence, and the order depends on their private inputs. Each output ``improves'' on all previous outputs according to a score function. 
\item We say that a mechanism for collaboration achieves a \emph{collaborative equilibrium} if it guarantees a higher reward for all participants when joining a collaboration compared to not joining it. We show that while in general computing a collaborative equilibrium is NP-complete, we can design polynomial-time algorithms for computing it for a range of natural model settings. When possible, we  design mechanisms to compute a distribution of outputs and an ordering of output delivery, based on the $n$ participants' private inputs, which achieves a collaborative equilibrium.
\end{itemize}

The collaboration mechanisms we develop are in the standard model, and thus require a central trusted party; however, we show that this assumption is not necessary under standard cryptographic assumptions. We show how the mechanisms can be implemented in a decentralized way by $n$ distrustful parties using new extensions of classical secure multiparty computation that impose order and timing constraints on the delivery of outputs to different players, in addition to guaranteeing privacy and correctness.

\end{abstract}

\newpage
\setcounter{page}{1}


\pagenumbering{arabic} 
\section{Introduction}

The availability of vast amounts of data is changing how we can make medical discoveries, 
predict global market trends, save energy, improve our infrastructures, and develop new educational strategies. 
Indeed,  it is becoming clearer that \emph{sample size} may be the most important factor in 
making surprising new discoveries in a number of areas such as \emph{genome-wide association studies}\footnote{A genome-wide association study
is an investigation of common genetic variants in a population, in order to identify genetic variants that are associated with a given trait.} 
(GWAS) and \emph{machine learning} (ML),
as witnessed by the striking success of GWAS studies with large samples for 
schizophrenia\footnote{``{\sf Dramatic increase in patient data
size enabled the discovery of more than 100 gene loci associated with the disease up from a handful loci seen with small sets of
patients. This was made possible due to an unusually large scale collaborations among many institutes.}''} 
\cite{schizo1,schizo2,schizo3}
and the success of deep learning in ML.

When large data is required, parts of the data are often held by different entities.
Such entities need to share their data, or at least engage in a collaborative 
computation where each entity manages its own private data, in order for society to reap the benefit of large sample sizes. Referring back to the 
GWAS example, success was explicitly attributed to such collaboration: ``{\sf The schizophrenia  study was made possible 
due to unusually large scale collaborations among many institutes... This level of cooperation between institutions is absolutely essential... 
If we are to continue elucidating the biology of psychiatric disease through genomic research, we must continue to work together.}'' 
\cite{schizophreniaQuote}

Unfortunately, the above example is the exception rather than the rule. A major obstacle to the big-data revolution is the lack
of willingness of different entities to share data in collaborations with each other: so-called  ``data hoarding''.
One obstacle is privacy concerns, where parties refuse to collaborate, in order to protect the privacy of their data.
Privacy, however, is not the only obstacle.

An equally important obstacle is competition between entities holding data.  
When access to data carries tangible rewards, say, if the entities are 
companies competing for a share of the same market or research laboratories competing for scientific credit, 
it is unclear whether an individual collaborator
is better off, even if it is clear that better overall conclusions can be drawn from collaboration. 
Stated in more game-theoretic terms,
the entities face the following dilemma:
{\it whereas the overall societal benefit of collaboration is clear, the utility for an individual collaborator may be negative,
so why collaborate? }  
Addressing this question is the  topic of this paper. 

In this paper, we present a formal model for collaboration in which this question can be analyzed, as well as
design mechanisms to enable collaboration where all collaborators 
are provably ``better off'', when possible.
The \emph{order} in which collaborators receive the outputs of a collaboration will be a crucial aspect of our model and mechanisms.
We believe that timing is an important and primarily unaddressed issue in data-based collaborations. 
For example, in the scientific research community, data sharing can translate to losing a prior publication date.
In financial enterprises, the timing of investments and stock trading can translate to large financial gains or losses.

We show in Section \ref{sec:queuedDef} that the collaboration mechanisms we develop can be implemented in a decentralized way by $n$ distrustful parties even in the presence of a subset of colluding polynomial-time parties who may deviate in an arbitrary fashion, under standard cryptographic assumptions. To achieve this, we extend the theory of multi-party computation (MPC) to impose order and time on the delivery of outputs to different players.

\subsection{Summary of our contributions}

\subsubsection{A model of collaboration}\label{introsec:model}

We propose a model for collaboration
which enables the determination of whether the utility obtained by a collaborator outweighs the utility he may obtain without collaboration. 
The ultimate desired outcome of a collaboration is to learn a parameter of the (unknown) joint distribution
from which the participants' input data $x_1,\dots,x_n$ is drawn.
This can be expressed as $y^*=f(\cX)$ where $\cX$ is the joint distribution of input data and $f$ is a known function.
In our model, the outcome of a collaboration is a pair $(\pi,\vec{\cZ})$ where $\pi$ is a permutation
of player identities and $\vec{\cZ}=(\cZ_1,\dots,\cZ_n)$ where each $\cZ_{\pi(i)}$ is a distribution that corresponds to
player $i$'s ``estimate'' of $y^*$. We think of $\cZ_{\pi(i)}$ as the public output of player $i$:
for example, in the setting of scientific collaboration, $\cZ_{\pi(i)}$ would be player $i$'s academic publication.
Our model setup assumes an underlying score function which assigns scores to the players' outputs.

The model includes a \emph{reward function} $R_t$ which characterizes the gain in utility for any given party $i$ in a collaboration.
The reward that a party $i$ gets depends on how much his score $\score(\cZ_{\pi(i)})$ \emph{improves on} the previous state of the art $\score(\cZ_{\pi(i)-1})$, 
and on $\pi(i)$, namely, \emph{when} the party makes his public output.
Specifically, the reward function includes a multiplicative \emph{discount factor} $\beta^t$ where $\beta\in[0,1]$ and $t$ is the time of publication,
meaning that the reward from a publication is ``discounted'' more as time goes on.
$$R_t(\pi,\vec{\cZ}) = \beta^{t} \cdot (\score(\cZ_{\pi(t-1)})-\score(\cZ_{\pi(t)}))$$

To determine whether the utility of collaboration outweighs the utility of working on one's own,
our model uses ``outside payoff'' values $\alpha_i$ which are the score that party $i$ would obtain \emph{without collaborating}.
$\alpha_i$ can be computed directly from the input $x_i$ of party $i$. 

\subsubsection{Mechanisms and collaborative equilibrium}\label{introsec:mech}
We define a notion of \emph{collaborative equilibrium} in which all parties are guaranteed a non-negative reward,
and develop {\em mechanisms} for collaboration that compute such equilibria.  
When an equilibrium exists, our mechanism
delivers a sequence of progressively improving ``partial information'' about $y^*$ to the collaborating parties.
More specifically, the mechanism will take as input the data of all parties, and output
a pair $(\pi,\vec{\cY})$ where  $\pi$ is a permutation of
player identities  and $\vec{\cY}=(\cY_1,,\dots,\cY_n)$ specifies the outcomes to be delivered to the players: 
each $\cY_{\pi(i)}$ is the approximation to $y^*$ that is given to player $i$
at time-step $\pi(i)$, such that the score of the outputs is increasing with time. That is, $\score(\cY_{\pi(1)})>\dots>\score(\cY_{\pi(n)})$.
We emphasize that both the order $\pi$ and the outputs $\cY_i$ are computed based on the inputs of all players.

When player $i$ receives an output $\cY_{\pi(i)}$ from the central mechanism, 
she may combine $\cY_{\pi(i)}$ with the information that she learned from prior public outputs and her own input $x_i$,
to generate a public output $\cZ_{\pi(i)}$.
We first prove that the ability of the players to learn from others' publications, in general, will make
the problem of deciding whether there exists an equilibrium is $\NP$-complete (see Theorem \ref{thm:NPC}).

Next, we show that there is a polynomial-time mechanism that can output an equilibrium whenever one exists (or output {\tt NONE} if one does not exist) 
for a variety of model settings and parameters which we characterize (see Theorem \ref{thm:mechanismExists}). 
An example of a setting when a polynomial-time mechanism is possible is when
\begin{itemize}
\item  there is an upper bound $\mu_j$ on the amount of information that any player can learn from a given player $j$'s publication, and 
\item it is possible to efficiently compute, for any $y^*$ and $\delta>0$, an ``approximation'' $\cY'$ such that $\score(\cY')=\delta$.
\end{itemize}

In a nutshell,
the bounds $\mu_j$ can be used to define a weighted graph in which the weight of the minimum-weight perfect matching 
determines the existence of a collaborative equilibrium.

\subsubsection{Cryptographic protocols to implement the mechanisms}\label{introsec:crypto}
We develop cryptographic protocols for implementing the mechanisms without a centralized trusted party
and in the presence of a subset of colluding players who may deviate from the protocol in an arbitrary fashion, under cryptographic assumptions.  
The protocols compute the collaboration outcome $(\pi,\vec{\cY})$ via multi-party secure computation on players' private inputs. 
Since a crucial aspect of the mechanism's ability to yield non-negative reward to all players
is the delivery of outputs in order,  we need to extend the classical notion of  MPC  to incorporate
guarantees on the order and timing of output delivery.
These extensions may be of interest independent 
of the application of mechanisms for incentivizing collaborations.

We define \emph{ordered MPC} as follows.
Let $f$ be an arbitrary $n$-ary function and $p$ be an $n$-ary function that outputs permutation $[n]\rarr[n]$.
An ordered MPC protocol is executed by $n$ parties, where each party $i\in[n]$ has a private input $x_i\in\{0,1\}^*$,
who wish to securely compute
$f(x_1,\dots,x_n) = (y_1,\dots,y_n)$
where $y_i$ is the output of party $i$.
Moreover, the parties are to receive their outputs in a particular \emph{ordering} dictated by
$p(x_1,\dots,x_n) = \pi$ where $\pi$ is a permutation of the player identities.
Since the choice of $\pi$ depends on private inputs, it may leak information: hence,
we formulate an enhanced \emph{privacy} requirement for ordered MPC
that each player should learn his output 
and his \emph{own} position in the output ordering, and nothing more (see Definition \ref{def:orderedSecurity}).



We show a simple transformation from classical MPC protocols for general functionalities $f$ to 
ordered MPC protocols for general functionalities $f$ and permutation functions $p$ that achieve enhanced privacy,
even when a minority of the $n$ players may be colluding to sabotage the protocol (see Theorem \ref{thm:orderedProtocol}).
The assumptions necessary are the same as for the classical MPC constructions (e.g. \cite{GMW87}).
When the colluding players are in majority, it is well known that output delivery to all honest parties cannot be guaranteed \cite{Cleve86}.

Next, we define {\em timed-delay MPC}, where explicit time delays are introduced into the output delivery schedule. 
Time delays between the outputs may be crucial to enable parties to reap the benefits of their position in the order.
We give two constructions of timed-delay MPC in the honest majority setting\footnote{We cannot hope to achieve timed-delay MPC
in the case of dishonest majority since, as mentioned in the preceding paragraph, even output delivery cannot be guaranteed in this setting.}. 
First, we give a conceptually simple protocol which runs ``dummy rounds'' of communication
in between issuing outputs to different players, in order to measure time-delays. The simple protocol has the flaw that
all (honest) players must continue to interact until the last party receives his output 
(that is, they must stay online until all the time-delays have elapsed).
To address this issue, we present a second protocol assuming the existence of time-lock puzzles \cite{RSW96} in addition to
the classical MPC \cite{GMW87} assumptions (see Theorem \ref{thm:timeLockProtocol}).
Informally, a time-lock puzzle is a primitive which allows ``locking'' of data, such that it will be released after a pre-specified
time delay, and no earlier. Our second timed-delay MPC protocol, instead of issuing outputs to players in the clear, gives to each party
his output \emph{locked} into a time-lock puzzle; and in order to enforce the desired ordering, the delays required to unlock the puzzles
are set to be an increasing sequence.
An issue that arises when giving out time-lock puzzles to many parties is that different parties may have different computing power, and hence
solve their puzzles at different speeds: for example, it is clear that we cannot guarantee that players learn their outputs in the
desired ordering if some players compute arbitrarily faster than others. 
Still, we show that
our protocol is secure and achieves ordered output delivery
in the case that the difference between any two players' computing power is known to be bounded by a logarithmic factor. If the
assumption about computing power does not hold, then the protocol still achieves security (i.e. correctness and privacy), 
but the ordering of outputs is not guaranteed.

The definition of ordered and timed-delay MPC inspire new notions unrelated to the central topic of this paper.
In particular:

\begin{itemize}
\item {\bf Time-lines.}
Inspired by the application of time-lock puzzles to time-delayed MPC,
we propose the new concept of a  {\em time-line}, where multiple data items can be locked so that
their unlocking must be serialized in (future) time. See Section \ref{sec:timeLine} for details.

\item {\bf Prefix-fairness.}
In the traditional MPC landscape, fairness is the one notion that addresses the idea that either all parties participating in an 
MPC should benefit, or none should. Fairness requires that either all players receive their output, or none do.
However, it is well-known that fairness is achievable when a majority of the players are honest, but it is \emph{not} achievable for general functionalities when a majority of players are faulty \cite{Cleve86}.
We propose a refinement of the classical notion of fairness in the setting of ordered MPC, called \emph{prefix-fairness},
where players are to receive their outputs one after the other according to a given ordering $\pi$, and the guarantee is that
\emph{either} no players receive an output \emph{or} those who do strictly belong to 
a prefix of the mandated order $\pi$ (see Definition \ref{def:fairOrdered}).
Prefix-fairness can be achieved for general functionalities and \emph{any number} of faulty players, 
under the same assumptions as classical MPC \cite{GMW87} (see Theorem \ref{thm:prefixFairness}).
\end{itemize}

\subsection{Discussion and interpretation of our work}

\paragraph{Slowing down scientific discovery?}
Intuitively, the mechanisms we develop  always take the following form: 
the mechanism computes the ``best possible estimate'' $\cY^*$ of $y^*$ given the input data of the players,
and then hands out a sequence of successively more accurate
(according to the score function) outcomes, where the final party receives $\cY^*$.

One may ask: why slow down scientific progress and hand out inferior results when better ones are available?
We argue that progress will in fact be \emph{enhanced}, not slowed down, by this methodology,
as it will be a decisive factor in parties' willingness to collaborate in the first place.
This bears great similarity to the original philosophy of \emph{differential privacy} and privacy-preserving
data analysis more generally. In these fields, accuracy (so-called utility) of answers to aggregate queries over items in database is partially sacrificed
in order to preserve privacy of individual data items, as a way to encourage individuals to contribute their data items to 
the database. In an analogous way, in order to get results based on the large data sets held by potential collaborators, we sacrifice
the \emph{speed} of discovery of the ``ultimate'' collaboration outcome: we are willing to pay this price to incentivize
parties to collaborate and contribute their data. In contrast to differential privacy,
we do not sacrifice ultimate accuracy. 
The last collaborator to receive an output, receives the ideal outcome $\cY^*$.  Namely, $\cY_{n}=\cY^*$. 


\paragraph{The Fort Lauderdale example: the importance of time.}
A recurring idea in this work is the importance of time and ordering of research discoveries, which is inspired in part by the
following striking example from the field of genomics.
In the 2003 Fort Lauderdale meeting on large-scale biological research \cite{FortLauderdale}, 
the gathering of leading researchers in the field recognized that 
``{\sf pre-publication data release can promote the best interests of [the field of genomics]}''
but ``{\sf might conflict with a fundamental scientific incentive -- publishing the first analysis of one's own data}''.
Researchers at the meeting agreed to adopt a set of principles by which although data is shared upon discovery, 
 researchers hold off publication until the original holder of the data has published a first analysis. 
Being a close-knit community in which reputation is key, this was a viable agreement which has led to great productivity and advancement of the field.
However, more generally, their report states that ``{\sf incentives should be developed by the scientific community to support the voluntary release of
[all sorts of] data prior to publication}''.
This example teaches us to focus on three key aspects of collaboration:
the incentive to collaborate has to be clear to all collaborators;
there must be a way to ensure adherence to the rules of collaboration;
and timing is of the essence.
 
\paragraph{Privacy implies increased utility.}
Although the goal of our work is to design mechanisms {\em to incentivize collaboration}
by increasing the utility of collaborations rather than focusing on the 
privacy of individual entities' input data, MPC protocols prove to be an important technical tool to implement the mechanisms which
guarantee increased utility. As a by-product, the use of MPC 
provides our mechanisms with the additional guarantee of privacy.

\paragraph{Future directions}
When collaboration is feasible, each party $i$ in our model is guaranteed a  reward from collaborating that is greater than the reward $\alpha_i$ they could get on their own.  
However, the contributions of the players' data to the computation of the final output $\cY^*$ 
may be asymmetric: some special player $i^*$ may have some data that helps solve the ``puzzle'', but this player $i^*$ may not be known a priori before the participants decide to collaborate\footnote{An example in the same vein is the following. In the medical setting, a hospital with a larger patient population will clearly have more patient data than a small facility,
and yet access to data of small but homogeneous or rare communities can at times be more valuable than access to larger heterogeneous sets of data.} An interesting future direction would be developing mechanisms where, even without a priori knowledge of which players have higher quality data, 
we can still design collaborations where the players whose contribution turned out most valuable get most credit.

Another future direction of interest to design \emph{truthful} mechanisms so that collaborating parties will be provably
incentivized to submit their true and accurate data as input. In our work, we assume that, while we can incentivize the players to collaborate or not, once they decide to collaborate they are truthful about the value of their dataset $x_i$. From the point of view of scientific publications, this assumption is reasonable if we believe that the experiments that generate this data can be verified or replicated, and that a failure to replicate would hurt a scientific group's reputation. However, there are many settings, such as businesses pooling their data together to generate larger profits, where the parties may be incentivized to lie about their output $x_i$. Since we are already assuming that parties are rational, a future direction would be to develop mechanisms where, even when parties can lie about $x_i$ (because $x_i$ cannot be verified by others), they are still incentivized to report it truthfully. One possible direction is where $x_i$ is the output of some long $\#P$ computation (for example, a Markov Chain Monte-Carlo simulation), where (a) replicating the computation would take a very long time and delay publication for everyone in the group and (b) player $i$ cannot prove in a classical way that their output $x_i$ is correct. Even in this case,  player $i$ can be incentivized to give the right answer via a rational proof \cite{RationalProofs,SuperEfficientRationalProofs,Hubacek}. 

Our setting is useful and most likely to lead to collaboration when there are increasing marginal
returns from adding new data. It will be interesting to discover new settings where this is provably the case.

\subsection{Other related work}

The problem of how to make progress in a scientific community has been studied in other contexts. 
Banerjee, Goel and Krishnaswamy \cite{BanerjeeEtAl} consider the problem of partial progress sharing, where a scientific task is modeled as a directed acyclic graph of subtasks. 
Their goal is to minimize the time for all tasks to be completed by selfish agents who may not wish to share partial progress.

Kleinberg and Oren \cite{KleinbergOren} study a model where researchers have different projects to choose from, and can work on at most one. 
Each researcher $i$ has a certain probability of being able to solve a problem $j$, and she gets a reward $w_j$ if she is the only person to solve it. If multiple researchers solve the problem, they study how to split the reward in a socially optimal way. 
They show that assigning credit asymmetrically can be socially optimal when researchers seek to maximize individual reward, and they suggest implementing a ``Matthew Effect'', where researchers who are already credit-rich should be allocated more credit than in an even-split system.
Interestingly, this is coherent with the results of our paper, where it is socially optimal to obfuscate data so that researchers who are already ``ahead'' (in terms of data), end up ``ahead'' in terms of credit.

Cai, Daskalakis and Papadimitriou \cite{CaiDaskalakisPapadimitriou} study the problem of incentivizing $n$ players to share data, in order to compute a statistical estimator. Their goal is to minimize the sum of rewards made to the players, as well as the statistical error of their estimator. In contrast, our goal is to give a decentralized mechanism through which players can pool their data, and distribute partial information to themselves in order so as to increase the utility of every collaborating player. 


Boneh and Naor \cite{BN00} construct timed commitments that can be ``forced open'' after a certain time delay, 
and discuss applications of their timed commitments to achieve fair two-party contract signing (and coin-flipping) 
under certain timing assumptions including bounded network delay and the \cite{RSW96} assumption about sequentiality of modular exponentiation.

\paragraph{Roadmap}
Section \ref{sec:scientificCollaborationModel} covers the scientific collaboration model, mechanisms, and feasibility theorems.
Section \ref{sec:queuedDef} covers the definitions and constructions of ordered MPC, and
Section \ref{sec:timedDelayMPC} covers definitions and constructions of timed-delay MPC, and associated primitives such as time-line puzzles.



\section{Data sharing model}\label{sec:scientificCollaborationModel}

In this section, we present and analyze mechanisms for scientific collaboration in our model.
In our exposition, we focus primarily on the setting of scientific collaboration and publication. 
However, we want to highlight that our results apply to more broad collaboration and discovery in general, 
in which case a ``publication'' should be thought of as any kind of public output. 

\paragraph{Notation} We denote by $[n]$ the set $\{1,...,n\}$ of integers between $1$ and $n$, and by $[n] \to [n]$ the set of all permutations of $[n]$. 
For a set $X$, we write $\Delta(X)$ to denote the set of all distributions over $X$. The symbol $\sqcup$ denotes the disjoint union operation.
An \emph{efficient} algorithm is one which runs in probabilistic polynomial time (\PPT{}).

\subsection{The model}\label{sec:sharingModel} 

We propose a model of collaboration between $n$ research groups which captures the following features.
Groups may pool their data, but each group will publish their own results. Moreover,
only results that improve on the ``state of the art'' may be published. That is, a new result must improve on prior publications.
However, more credit may be given to earlier publications.
Finally, a group will learn not only from pooling their data with other groups, but also from other groups' publications. 

To formalize the intuitions outlined above, we specify a model as follows.

{\small
\begin{itemize}
\item There is a set $[n]$ of players.
\item Each player $i$ has a dataset $x_i$ which is sampled as follows. 
	\begin{itemize}
	\item For each $i\in[n]$, there is a set $X_i$ of possible datasets, which is common knowledge. Let $X$ denote $X_1\times\dots\times X_n$.
	\item There is a distribution $\cX\in\Delta(X)$ over $X$, from which the $x_i$ are sampled: $(x_1,\dots,x_n)\from\cX$. 
	\item The distribution $\cX$ is not known to any of the players, but comes from a commonly known distribution $\cD$. That is, $\cX\from\cD$, for some $\cD\in\Delta(\Delta(X))$.
	\end{itemize}
\item There is an output space $Y$, and a function $f:\Delta(X_1\times\dots\times X_n)\to Y$ such that $\hat{y}=f(\cX)$ is the value which the players wish to learn.
	That is, the players want to learn some property of the unknown distribution $\cX$ from which their datasets were sampled.
	$Y$ and $f$ are common knowledge.
\item $\cY_0$ denotes the distribution of $\hat{y}$ given $f$ and $\cD$	.
\item There is a \emph{score function} $\score:\Delta(Y)\to\RR_+$, which varies with $f$ and $\cD$. 
	The score function $\score(\cdot)$ is maximized by the distribution $\hat{\cY}$ which puts probability $1$ on the true value $\hat{y}$.
	The score function $\score$ is common knowledge. 
	\begin{itemize}
	\item We require a natural \emph{monotonicity} property of the score function. Namely, let $\cY$ and $\cZ$ be any distributions, and let $z$ be a value in the support of $\cZ$. Then
	$$\score(\cY)\leq\score(\cY|z\from\cZ),$$
	where $z\from\cZ$ denotes the event that $z$ is sampled from the distribution $\cZ$.
	\item {\it Remark.} Let $\{\hat{y}|x_1,\dots,x_n\}$ denote the distribution of $\hat{y}$ given certain datasets $(x_1,\dots,x_n)\in X$.
	A consequence of the monotonicity condition is that given all of the datasets $x_1,\dots,x_n$ of all players in the model, 
	the best achievable score is $s\left(\{\hat{y}|x_1,\dots,x_n\}\right)$.
	\end{itemize}
\item A {\em collaboration outcome} is given by a permutation $\pi: [n] \to [n]$ 
	and a vector of output distributions $(\cZ_1,\dots,\cZ_n) \in (\Delta(Y))^n$ such that $\score(\cY_0)<\score(\cZ_{\pi(1)})<\dots<\score(\cZ_{\pi(n)})$.

	The intuition behind this condition is that, at time $t$, player $\pi(t)$ will publish $\cZ_{\pi(t)}$. 
	Since only results that improve on the ``state of the art'' can be published, we must have that the score $\score(\cZ_{\pi(t)})$ increases with the time of publication $t$.
\item  For a collaboration outcome $\omega = (\pi,\vec{\cZ})$, the player who publishes at time $t$ obtains a reward 
	$$R_t(\pi,\vec{\cZ}) = \beta^{t} \cdot (\score(\cZ_{\pi(t)})-\score(\cZ_{\pi(t-1)}))$$
	where $\beta \in (0,1]$ is a \emph{discount factor} which penalizes later publications.\footnote{This is motivated by market scoring rules \cite{Hansen},
	where experts are rewarded according to how much they improve existing predictions.} 
\item For each player $i$, we define $\alpha_i=\score(\{\hat{y}|x_i\})-\score(\cY_0)\in\RR_+$, where $\{\hat{y}|x_i\}$ is the distribution of $\hat{y}$ given that the $i^{th}$ dataset is $x_i$.
	This models the ``outside payoff'' that player $i$ could get if she does not collaborate and simply publishes on her own.
\item Players may learn information not only from their own data, but also from the prior publications of others.
	A \emph{learning bound vector} $\{\lambda_{\pi,i}\}_{\pi\in([n]\rarr[n]),i\in[n]}$ characterizes, for any publication order $\pi$,
	the maximum amount that each player $i$ can learn from prior publications. This notion is defined formally in Section \ref{sec:dataSharingMechs}.
\item We define $\pparams$ to be the collection of all common-knowledge parameters of the model:
	$$\pparams=(\cD,f,\score,\beta).$$
\end{itemize}
}

\subsection{Examples}\label{sec:examples}

To illustrate the range of settings to which our model applies, 
we describe several concrete model instantiations.

Recall that our goal is to build mechanisms to enable collaborations by sharing data, 
in settings where such collaboration would be beneficial to all parties.
Intuitively,
such settings occur when the result that can be obtained based on the union of all players' datasets is ``much better''
than the results that can be obtained based on the individual datasets: in other words, 
the ``size of the pie'' to be split between the collaborating players 
is at least as large as the sum of the ``slices'' obtained by players working individually.
This intuition is made rigorous in Lemma \ref{clm:superadditive},
where we discuss score functions which satisfy a superadditivity condition (Property \ref{property:superadditive}).

\paragraph{Toy Example I: Secret-sharing.}
We begin with a ``toy example'' based on secret-sharing. This artificial first example 
is a dramatic illustration that the size of reward from collaboration can be much larger than 
the sum of individual rewards without collaborating.

Consider a stylized secret-sharing model with a secret $\hat{y}$ drawn uniformly at random from $\zo^n$.  Each player's data consists of a share $x_i \in \zo^{n}$ such that $\hat{y} = x_1 \xor ... \xor x_n$ be the secret the players are trying to reconstruct.  
The shares are correlated and drawn from a distribution $\cX$ as follows:
\begin{itemize}
\item For each $i \in [n-1]$, $x_i$ is uniformly random in $\zo^{n}$.
\item The last share is chosen such that $x_{n} = \hat{y} \xor x_1 \xor \dots \xor x_{n-1}.$
\end{itemize} 

The players want to learn $f(\cX) = \hat{y}$. The score from publishing  a distribution $\cY$ is 
$\score(\cY) = H(\hat{y})-H(\hat{y}| \cY)$
where $H(\hat{y}) = n$ is the entropy of the uniformly random string $\hat{y}$ and $H(\hat{y} | \cY)$ is the entropy of $\hat{y}$ given the distribution $\cY$.

Without collaborating, each player $i$ only knows a uniformly random string $x_i$. Thus, $H(\hat{y}|x_i) = H(\hat{y}) = n$ and $\alpha_i = H(\hat{y} | x_i)  - H(\hat{y}) = 0$ for each player $i$. Consider the following collaboration mechanism:
\begin{itemize}
\item Each player contributes share $x_i$ to the mechanism.
\item The mechanism computes $\hat{y} = x_1 \xor \dots \xor x_n$.
\item The mechanism reveals $i^{th}$ digit $\hat{y}_i$ to each player $i$. 
\end{itemize}

When participating in this mechanism, the first player will publish a guess $\cY_1$ which is a distribution over $\zo^n$ where the first bit of $y \from\cY_1$ is always $\hat{y}_1$.
All other players learn $\hat{y}_1$ from player 1's publication. 
Proceeding inductively, the $i^{th}$ player will publish a guess $\cY_i$ such that the first $i$ bits are correct, that is, $(y_1,\dots,y_i)=(\hat{y}_1,\dots,\hat{y}_i)$ for any $y \from\cY_i$.
Note that since $\alpha_i = 0$ for each player $i$, and $H(\hat{y} | \cY_i) - H(\hat{y} | \cY_{i-1}) = 1 > \alpha_i$, this mechanism incentivizes players to collaborate.

\paragraph{Toy Example II: Network flow.}
Let $G=(V,E)$ be a graph. 
Let $\tilde{s},\tilde{t}\in V$ be vertices which are connected by some number of disjoint paths.
Consider a model where $V$, $\tilde{s}$, and $\tilde{t}$ are common knowledge, and each player's data consists of a disjoint subset of edges in $x_i\subseteq E$.
More precisely, $(x_1,\dots,x_n)\from\cX(E)$ where $\cX$ samples a partition of $E$.

The players want to learn the set of paths from $\tilde{s}$ to $\tilde{t}$. That is, $f(\cX(E))$ is the set of paths in $E$ from $\tilde{s}$ to $\tilde{t}$.
The score from publishing a distribution $\cZ$ over edges is
$$\score(\cZ)=|\{p:\mbox{$p$ is a path in $E$ from $\tilde{s}$ to $\tilde{t}$, and }\Pr_{z\from\cZ}\left[p\subseteq z\right]=1\}|.$$
In other words, the player's score is given by how many paths from $\tilde{s}$ to $\tilde{t}$ she knows with certainty to exist in $E$.
In some cases, it may be that no player knows any path from $\tilde{s}$ to $\tilde{t}$ based only on her own data, as illustrated by the simple example in the diagram below.
\begin{center}
\begin{tikzpicture}[every path/.style={>=latex},every node/.style={draw,circle},scale=0.8]
  \node            (s) at (0,0)  { S };
  \node            (a) at (1.5,0)  { ~ };
  \node            (t) at (1.5,-1.5) { T };
  \node        (b) at (0,-1.5) { ~ };
  \path[every node/.style={font=\rmfamily\small}]
    (s) edge[->] node [above] {$x_1$} (a)
    (s) edge[->] node [left] {$x_3$} (b)
    (a) edge[->] node [right] {$x_2$} (t)
    (b) edge[->] node [below] {$x_4$} (t);
\end{tikzpicture}
\end{center}

Consider the following collaboration mechanism:
\begin{itemize}
\item Each player contributes their edges $x_i$ to the mechanism.
\item The mechanism computes $E=x_1\cup\dots\cup x_n$, and the set $P=\{p_1,\dots,p_k\}$ of paths in $E$ that start at $\tilde{s}$ and end at $\tilde{t}$.
\item The mechanism reveals the $i^{th}$ path $p_i$ to player $i$. If $k<n$, then the last $k-n$ players will get no output. 
	If $k>n$, the ``extra'' paths are allocated arbitrarily to players.\footnote{\label{ft:arbitrary1}It may be beneficial to allocate the ``extra'' paths strategically in order to reward players more fairly, 
	or in order to make collaboration possible when the outside option values $\alpha_i$ are nonzero. However, in this example, we allocate them arbitrarily for simplicity.}
\end{itemize}

When participating in this mechanism, the first player will publish a guess $\cZ_1$ which (always) samples the set $\{p_1\}$. All other players learn $p_1$ from player 1's publication.
Then, the $i^{th}$ player will publish a guess $\cZ_i$ that samples the set $\{p_1,\dots,p_i\}$.
As long as $\score(\cZ_i) - \score(\cZ_{i-1}) \geq \alpha_i$ for all $i\in[n]$ (note that this is the case in the diagram), this mechanism incentivizes players to collaborate.

\paragraph{Example III: Correlating gene loci with disease}

This example is inspired by successful GWAS studies to identify gene loci associated with schizophrenia.
Consider a model where each player holds a set of patients' medical (and in particular, genetic) data $x_i$ which comes from some unknown patient distribution $\cX$. 
The players wish to learn the set $f(\cX)$ of gene loci that are correlated with the occurrence of schizophrenia in patients.

Let $\Gamma$ be the set of all gene loci. 
For $\gamma\in\Gamma$, define $\bI_\gamma$ to be 1 if $\gamma\in f(\cX)$ and 0 otherwise.
The score from publishing a distribution $\cZ$ over $\powset(\Gamma)$ (i.e. over subsets of gene loci) could be:\footnote{In practice, 
a more realistic scenario might be to model the \emph{extent} to which
particular gene loci are found to be correlated with the occurrence of schizophrenia, 
rather than classifying into binary categories ``correlated'' and ``not correlated''. 
This case could be modeled, for example, 
by letting $f(\cX)$ be a vector $((\gamma_1,p_1),\dots,(\gamma_N,p_N))$ where $\Gamma=\{\gamma_1,\dots,\gamma_N\}$ 
is the set of gene loci, and for each $j\in[N]$,
$p_j$ is the correlation coefficient between $\gamma_1$ and occurrence of schizophrenia.
While Example IV presents the simpler ``binary'' model for ease of exposition, 
we remark that with appropriate modifications to the score function and mechanism, 
our model can accommodate the more complex case of estimating correlations, too.}
$$\score(\cZ)=\sum_{\gamma\in f(\cX)}\Pr_{z\from\cZ}[\gamma\in z] - \sum_{\gamma\notin f(\cX)}\Pr_{z\from\cZ}[\gamma\in z].$$
This score function rewards players for assigning high probabilities to gene loci $\gamma$ which are actually correlated with schizophrenia, 
and penalizes them for assigning high probabilities to those which are not.
As in our previous examples, it turns out that in this setting, the reward that can be obtained based on pooling all the players' data
is much greater than the sum of the rewards that could be obtained individually, as illustrated in Figure \ref{fig:gwas}.

\begin{figure}[ht!]
\centering
\ifdoublecolumn
\includegraphics[width=\columnwidth]{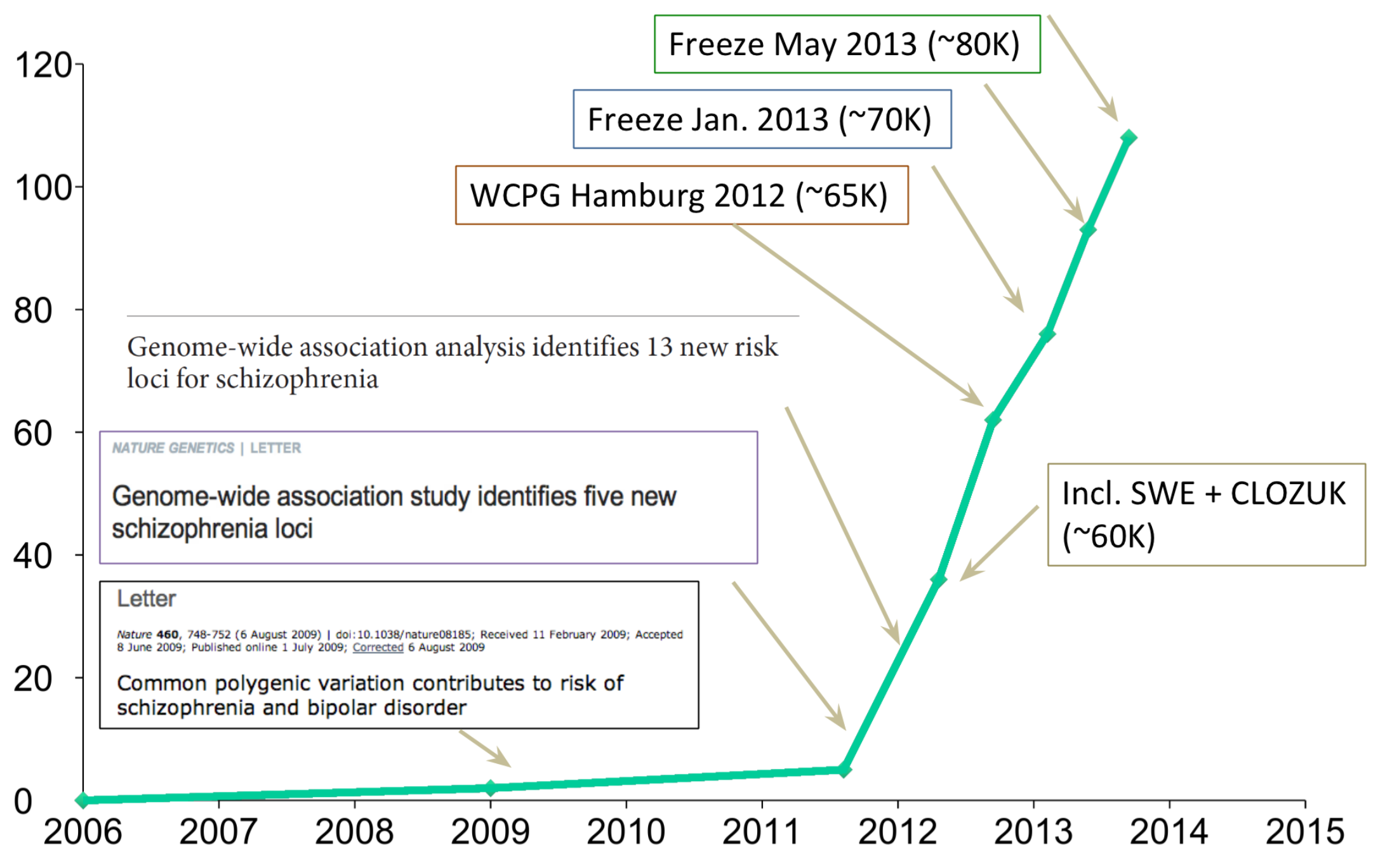}
\else
\includegraphics[width=0.7\textwidth]{schizo}
\fi
\caption{GWAS study success: the y-axis is the number of gene loci correlated with schizophrenia, 
and the x-axis is time (which corresponds to \emph{amount of data}, since the reason for the improved findings was accumulation of data over time).
Image \copyright Stephan Ripke}
\label{fig:gwas}
\end{figure}

Consider the following collaboration mechanism:\footnote{This is just one example of a reasonable mechanism for this model; we do not mean to claim that it is a canonical or optimal one. There are many variants which could make sense: for example, a simple modification would be to change the threshold $0.5$ in the second step.}
\begin{itemize}
\item Each player contributes some patient data $x_i$.
\item The mechanism computes $\cY^*=\{f(\cX)|x_1,\dots,x_n\}$, i.e. the distribution of $f(\cX)$ given all players' input data. 
	Let $\Gamma^*=\{\gamma\in\Gamma:\Pr_{y\from\cY^*}[\gamma\in y]>0.5\}$, that is, the set of gene loci that are more likely than not to be in $f(\cX)$, according to $\cY^*$.
\item The mechanism reveals to player $i$ the $i^{th}$ gene locus $\gamma_i$ in $\Gamma^*$. If $|\Gamma^*|<n$, then the last $k-n$ players will get no output. 
	If $|\Gamma^*|>n$, the ``extra'' gene loci are allocated arbitrarily.\footnote{As remarked in Footnote \ref{ft:arbitrary1}, 
	it can be beneficial to allocate the ``extra'' gene loci in a way which is not arbitrary, but instead optimized for making collaboration possible.
	In this example, for simplicity, we allocate them arbitrarily.}
\end{itemize}

When participating in this mechanism, the first player will publish a guess $\cZ_1$ which (always) samples the set $\{\gamma_1\}$. All other players learn $\gamma_1$ from player 1's publication.
Then, the $i^{th}$ player will publish a guess $\cZ_i$ that samples the set $\{\gamma_1,\dots,\gamma_i\}$.
Provided that $\score(\cZ_i) - \score(\cZ_{i-1}) \geq \alpha_i$ for all $i\in[n]$ (note that Figure \ref{fig:gwas} depicts exactly such a scenario), this mechanism incentivizes players to collaborate.

\paragraph{Example IV: Statistical estimation}
Our last example is one where -- in contrast to the examples so far -- there are \emph{decreasing} marginal returns from adding new information, and thus collaboration will not be feasible.

We consider a simple Bayesian model where the distribution $\cX$ is itself drawn from a ``distribution over distributions'' $\cD$. More concretely, each player $i$ receives a vector of $k_i$ samples $(x_{i,1},...,x_{i,k_i})$ drawn independently from a normal distribution $N(\mu,\sigma^2)$ with unknown mean $\mu$ and known variance $\sigma^2$. The mean $\mu$ is itself drawn from a commonly known prior distribution $\cD = N(m,1)$ with known mean $m$ and variance 1.
In this case, the ground set $X_i$ is $\bR^{k_i}$. The distribution $\cX(\mu,\sigma)$ is a product distribution over $\bR^{\sum_{i=1}^n k_i}$, where each component of $(x_{1,1},...,x_{n,k_n})$ is drawn independently from $N(\mu,\sigma)$. The players want to learn $f(\cX(\mu,\sigma))= \mu$. 

An estimator for $\mu$ is a random variable $\hat{\mu}$. The score of such a guess $\hat{\mu}$ is $\score(\hat{\mu})=  -\bE[(\hat{\mu} - \mu)^2]$. It is well known that if we have a vector $(x_{i,1},...,x_{i,k_i})$ of  random samples drawn from $N(\mu,\sigma)$, the estimator that minimizes the expected squared error to $\mu$ is  $\hat{\mu_i} = \frac{1}{k_i} \sum_{j=1}^{k_i} x_{i,j}$. Note that this is a normal random variable since each $x_{i,j}$ is sampled from normal random variable. The expectation of $\hat{\mu_i}$ is $\frac{1}{k_i}\cdot k_i \cdot\mu = \mu$ and the variance of $\hat{\mu_i}$ is $\frac{1}{k_i^2}\cdot k_i \cdot\sigma^2 = \frac{1}{k_i}\cdot \sigma^2$. Thus, $\score(\hat{\mu_i}) = \frac{1}{k_i}\cdot\sigma^2$. If a player published by herself and did not collaborate, her reward would be the difference 
$\alpha_i = \sigma^2 - \frac{1}{k_i} \cdot\sigma^2$
between the priorly known variance $\sigma^2$ and the variance $\frac{1}{k_i}\cdot \sigma^2$ of player $i$'s estimate. 

If the players collaborate, they can obtain the estimator $\hat{\mu^*}  =\frac{1}{\sum_{i=1}^n k_i} \sum_{i=1}^n \sum_{j=1}^{k_i} x_{i,j} $ which has variance  $\score(\hat{\mu^*}) = \frac{1}{\sum_{i=1}^n k_i}\sigma^2$. The reward for $\hat{\mu}^*$ is the reduction in variance  $\sigma^2 - s(\hat{\mu^*}) = \sigma^2 \cdot (1- \frac{1}{\sum_{i=1}^n k_i})$.
Note that in this case, the reward from an estimator only depends on the number of data points $N$  used to construct this estimator (in the above notation, $N = \sum_{i=1}^n k_i$). Furthermore, the reward $R(N) = \sigma^2 (1- \frac{1}{N})$ that one could obtain with $N$ data points is concave in $N$. Intuitively, if one only has $N=2$ data points, and gets 10 new ones, those 10 new data points are very valuable. However, if one already has $N = 2000000$ data points and gets 10 new ones, those 10 new data points do not increase the score very much. 

This setting is in contrast to our Example III, where the score seemed to increase in a convex way with the number of data points. Indeed, in this Bayesian example, we will always have that  
$$R(\sum_{i=1}^n k_i)= \sigma^2 (1- \frac{1}{\sum_{i=1}^n k_i}) \leq   \sigma^2 \sum_{i=1}^n (1-\frac{1}{k_i}) = \sum_{i=1}^n R(k_i).$$

In Section 2.5 we elaborate on why the above inequality is bad for collaboration. Intuitively,
the left-hand side is the ``size of the pie'' if all players were to collaborate, 
and the right-hand side is the sum of the rewards that each player could receive on her own.  
The inequality implies there is no way to ``slice the pie'' so that every player has a bigger reward than the 
$\alpha_i$ they can get without collaborating, and thus collaboration is impossible. 

In this simple Bayesian example, the marginal value of extra information will be decreasing. 
This raises the interesting question of \emph{when} 
the value of information is (and is not) not convex with the amount of information available.
For example, consider machine learning: learning problems whose objectives can be stated as minimizing a convex loss 
function (or maximizing a concave value function) seem to induce natural score functions 
which do not have increasing marginal returns, so
our model may be more applicable to problems with non-convex objectives. 
We remark that such non-convex learning problems, in which our model seems more applicable, 
are an area of interest in machine learning as solving them is lately becoming practical -- we refer to
Bengio and LeCun \cite{LeCun} for a more thorough discussion of this situation.

\subsection{Data-sharing mechanisms}\label{sec:dataSharingMechs}

We now return to the general formulation of our collaboration model, and
we seek to design a general data-sharing mechanism that takes as input the data of all the parties, 
computes an output distribution $\cY_i\in\Delta(Y)$ for each $i\in[n]$, and outputs $\cY_i$ to each player $i$.
The mechanism will output the $\cY_i$ values to players sequentially, in a particular order.
Upon receiving $\cY_i$, player $i$ produces a public output (i.e a publication in the research collaboration example) which we denote by $\cZ_i\in Y$.  

We note that the public output of player $i$ will not necessarily be the same as what was delivered by the data-sharing mechanism. 
Since player $i$ wants to maximize her reward, she will publish a result $\cZ_i$ that will maximize her reward, conditional on the information she has at the time of publication. 
This information includes, in addition to the output $\cY_i$ which she receives from the mechanism (and her knowledge of how the mechanism works\footnote{The mechanism description is common knowledge.}), 
also her own dataset $x_i\in X_i$, and all the outputs $\cZ_j$ of other players that published before her. 

Recall that a {\em collaboration outcome} $(\pi,\vec{\cZ})$ is given by a permutation $\pi: [n] \to [n]$ and a vector of output distributions $\vec{\cZ}=(\cZ_1,\dots,\cZ_n) \in (\Delta(Y))^n$ 
such that $\score(\cY_0)<\score(\cZ_{\pi(1)})<\dots<\score(\cZ_{\pi(n)})$. We now define a \emph{proposed collaboration outcome} $(\pi,\vec{\cY})$ 
as a permutation $\pi:[n]\to[n]$ together with a vector of \emph{proposed} outputs $\vec{\cY}=(\cY_1,\dots,\cY_n)\in (\Delta(Y))^n$ generated by a data-sharing mechanism,
satisfying $\score(\cY_0)<\score(\cY_{\pi(1)})<\dots<\score(\cY_{\pi(n)})$.

Recall also that we need to bound  how much player $i$ can learn from previous publications (and from her own dataset). 
We formally capture this with the notion of {\it learning bound vectors} $\lambda_{\pi,i}$, which give an upper bound on the amount that player $i$ learns 
from all previous publications when the order of publication is determined by permutation $\pi$.

\begin{definition}\label{def:learningBoundVector}
A {\em learning bound vector} 
\ifdoublecolumn
$$\vec{\lambda}=(\lambda_{\pi,i})_{\pi \in ([n]\rightarrow[n]), i \in [n]}$$
\else
$\vec{\lambda}=(\lambda_{\pi,i})_{\pi \in ([n]\rightarrow[n]), i \in [n]}$ 
\fi
is a non-negative vector such that, if $(\pi,\vec{\cY})$ is a collaboration outcome proposed by a data-sharing mechanism, and $\cZ_i$ is the best (i.e. highest-scoring) distribution that player $i$ can compute at the time $\pi^{-1}(i)$ of her publication,
then $\score(\cZ_i) \leq \score(\cY_i) + \lambda_{\pi,i}$.
%
Let $\Lambda = \bR^{n! \times n}_{+}$ denote the set of all learning bound vectors. 
\end{definition}

\begin{definition}\label{def:inferredOutputs}
For a learning bound vector $\vec{\lambda}$,
the set of {\em inferred output distributions} derived from a proposed collaboration outcome $(\pi,\vec{\cY})$ is given by the following expression:
\ifdoublecolumn
\begin{gather*}
\cI_{\vec{\lambda}}(\pi,\vec{\cY}) = \\ \{(\cZ_1,\dots,\cZ_n): \forall t\in[n],~ \score(\cY_{\pi(t)}) \leq \score(\cZ_{\pi(t)}) \leq \score(\cY_{\pi(t)}) + \lambda_{\pi,\pi(t)}\}.
\end{gather*}
\else
$$\cI_{\vec{\lambda}}(\pi,\vec{\cY}) = \{(\cZ_1,\dots,\cZ_n): \forall t\in[n],~ \score(\cY_{\pi(t)}) \leq \score(\cZ_{\pi(t)}) \leq \score(\cY_{\pi(t)}) + \lambda_{\pi,\pi(t)}\}.$$
\fi
\end{definition}

The intuition behind the above definition is that the amount of information that player $\pi(t)$ (namely, the player who publishes at time $t$) can learn from prior outputs 
is measured by how much her score increases based on these prior outputs.
This increase in score is bounded by $\lambda_{\pi,\pi(t)}$. 
Thus, her eventual output will be some $\cZ_{\pi(t)}$ with score between $\score(\cY_{\pi(t)})$ and $\score(\cY_{\pi(t)})+\lambda_{\pi,\pi(t)}$. 


\begin{remark}
In certain cases, $\lambda_{\pi,\pi(t)}$ measures exactly the amount of information that player $\pi(t)$ can learn from her data. However, in our definition $\lambda_{\pi,\pi(t)}$ is an upper bound, and we emphasize that it may be a loose upper bound on the amount of information $\pi(t)$ can learn. Our emphasis on this point comes from the following two reasons.
\begin{itemize} 
\item In general, the vector $\vec{\lambda} \in \bR^{n! \times n}$ has very high dimension, and finding such a vector is infeasible. We may want to approximate this vector via a low-dimensional encoding (as we will do below, where we encode learning bounds using $n$-dimensional vectors). Since this low-dimensional encoding will lose information, we will not be able to represent $\lambda_{\pi,\pi(t)}$ exactly, but may get a reasonable upper bound on its value.
\item For some other settings, we may not be able to derive a precise expression for $\lambda_{\pi,\pi(t)}$ in terms of expectations, but we may still be able to derive an upper bound on the amount of information that player $\pi(t)$ learns. 
\end{itemize}
\end{remark}

Now that we have established a formal definition of learning bound vectors, we proceed to formally define a data-sharing mechanism.

\begin{definition}\label{def:dataSharingMech}
For model parameters $\pparams$, a {\em data sharing mechanism} is a function 
$$M: X \times \Lambda \to ([n]\rightarrow[n]) \times (\Delta(Y))^n$$
which takes as inputs a vector $\vec{x}=(x_1,\dots,x_n)$ of datasets 
and
$\vec{\lambda}=(\lambda_{\pi,i})_{\pi \in ([n]\rightarrow[n]), i \in [n]}$ a learning bound vector,
and outputs an ordering $\pi$ of the players and an output vector $(\cY_1,\dots,\cY_n)\in (\Delta(Y))^n$.
\end{definition}

\begin{remark}
In the definition, for the sake of generality, we assume that the $\vec{\lambda}$ values are given as input to the mechanism.
We remark that in certain settings, these values can be computed directly from the inputs $x_i$ of the parties, 
as discussed in the examples of Section 1.1.2.
In this case, one may think of the mechanism $M:X\to ([n]\rightarrow[n]) \times (\Delta(Y))^n$ as having input domain $X$ only.
\end{remark}

\subsection{Collaborative equilibria}\label{sec:collabEquil}

In our model, each research group $\pi(t)$ will collaborate only if the credit they obtain 
from doing so is greater than the ``outside option'' reward $\alpha_{\pi(t)}$. 
We want to design a mechanism that guarantees collaboration whenever possible.
Accordingly, we define the following equilibrium concept. 

\begin{definition}\label{def:collabEquil}
Let $\pparams$ be the model parameters.
Let $(\vec{x},\vec{\lambda}) \in X \times \Lambda$ and let $(\pi,(\cY_1,\dots,\cY_n)) \in ([n]\rightarrow[n]) \times (\Delta(Y))^n$. 
We say that $(\pi,(\cY_1,\dots,\cY_n))$ is a \emph{collaborative equilibrium with respect to $(\vec{x},\vec{\lambda})$}
if for all inferred output distributions $\vec{\cZ}=(\cZ_1,\dots,\cZ_n) \in \cI(\pi,(\cY_1,\dots,\cY_n))$ and all $t\in[n]$,
it holds that $R_t(\pi,\vec{\cZ})\geq\alpha_{\pi(t)}$. 
\end{definition}



Our goal is to find data-sharing mechanisms for which collaboration is an equilibrium.  
Intuitively, since we are searching for a feasible permutation over a very high-dimensional space ($n!$-dimensional, to be precise), the problem will be $\NP$-complete 
(this is proven in Theorem \ref{thm:NPC}). 
However, there is a very natural condition on the learning vectors for which we can reduce the dimension of the search space and efficiently find a collaborative equilibrium.
The feasible case corresponds to the case where, for any player $j$, there is a bound on the amount of information that player $j$ could {\em teach} any other players. We denote this bound by $\mu_j$. Analogously, we could define $\mu_j$ to be a bound on the amount that player $j$ can {\em learn} from any other player. In this work, we describe only the first case, when $\mu_j$ represents a bound on how much information player $j$ can teach other players. The other case is analogous. 

We define a learning bound vector to be $\rankOne$ if it satisfies the following property.

\begin{definition}
A learning vector $\vec{\lambda} \in \Lambda$ is \emph{$\rankOne$} if there is a non-negative vector $(\mu_1,\dots,\mu_n)$ such that
$\lambda_{\pi,\pi(t)} = \sum_{\tau=1}^{t-1} \mu_{\pi(\tau)}$.
Let $\Lambda_1 \subset \Lambda$ denote the set of all $\rankOne$ learning vectors. 
\end{definition}

When $\vec{\lambda}$ is an $\rankOne$ learning vector, the total amount that player $\pi(t)$ learns from all prior outputs is $\sum_{\tau=1}^{t-1} \mu_{\pi(\tau)}$. 
In this case, we can give necessary and sufficient conditions for an equilibrium to exist (detailed in Theorem \ref{thm:necsuf} below), 
provided that the following Output Divisibility Condition is satisfied.


\paragraph{Output Divisibility Condition.}
Given the model parameters $\pparams$ and any real $0<\delta\leq 1$,\footnote{Recall (from the model description) that $\score(\{\hat{y}|\cX\})=\max_{\cY\in\Delta(Y)}(\score(\cY))$. 
Without loss of generality, we assume in our analysis that the score function is normalized so that its maximum value $\score(\{\hat{y}|\cX\})=1$.} 
there exists a distribution $\cY\in\Delta(Y)$ such that $\score(\cY)=\delta$.

\begin{remark}\label{rmk:outputDiv}
The Output Divisibility Condition holds for a wide variety of natural score functions.
In general, score functions which reward ``how close'' a distribution is to the true value $\hat{y}=f(\cX)$ will decrease (continuously) with the addition of random noise to a distribution.
Provided that this holds, the Output Divisibility Condition can be satisfied by taking the optimal distribution $\{\hat{y}|\cX\}$ and perturbing it with random noise:
the exact amount of noise to be added depends on the desired value of $\delta$.
To give a concrete example: in Example III (Gene loci), 
the perturbed distribution could simply add noise to the probabilities that each gene locus is sampled.
Here, ``adding noise'' can mean simply adding some $\eta\from N(0,\sigma^2)$ to the relevant parameters, where the magnitude of $\sigma$ depends on the 
precise formulation of the score function and the desired value of $\delta$.
\end{remark}

\begin{theorem}
\label{thm:necsuf}
Suppose that the Output Divisibility Condition holds.
Let $\vec{x}$ be a vector of inputs and $\vec{\lambda}$ be an $\rankOne$ learning bound vector. Let $\lambda_{\pi,\pi(t)} =\sum_{\tau=1}^{t-1} \mu_{\pi(\tau)}$. 
Then for $(\pi,\vec{\cY})$ to be a collaborative equilibrium, it is necessary and sufficient that
$$\sum_{t=1}^n \frac{\alpha_{\pi(t)}}{\beta^t} + \sum_{t=1}^n (n-t)\mu_{\pi(t)} \leq \score(\cY_{\pi(n)}) - \score(\cY_0).$$
\end{theorem}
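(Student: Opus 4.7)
The plan is to reduce the equilibrium condition (a universal statement over inferred output distributions and all time-steps) to a finite system of per-step worst-case inequalities, and then aggregate them via a telescoping sum. First, since $R_t(\pi,\vec{\cZ}) = \beta^t(\score(\cZ_{\pi(t)}) - \score(\cZ_{\pi(t-1)}))$ depends only on the two coordinates $\cZ_{\pi(t)}$ and $\cZ_{\pi(t-1)}$ of the inferred tuple, and the constraints defining $\cI_{\vec{\lambda}}(\pi,\vec{\cY})$ decouple across coordinates, the equilibrium requirement can be rewritten by swapping quantifiers as ``$\forall t:\ \min_{\vec{\cZ}\in\cI_{\vec{\lambda}}}R_t\geq\alpha_{\pi(t)}$''. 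Player $\pi(t)$'s worst-case reward is attained by pushing $\score(\cZ_{\pi(t)})$ to its lower bound $\score(\cY_{\pi(t)})$ and $\score(\cZ_{\pi(t-1)})$ to its upper bound $\score(\cY_{\pi(t-1)})+\lambda_{\pi,\pi(t-1)}$. With the convention $\cZ_{\pi(0)}\equiv\cY_0$ and $\lambda_{\pi,\pi(0)}=0$, the equilibrium is therefore equivalent to the system
$$\score(\cY_{\pi(t)}) - \score(\cY_{\pi(t-1)}) - \lambda_{\pi,\pi(t-1)} \;\geq\; \frac{\alpha_{\pi(t)}}{\beta^t}, \qquad t=1,\dots,n. \quad(\star)$$

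For necessity, I would sum $(\star)$ over $t$: the $\score(\cY_{\pi(t)}) - \score(\cY_{\pi(t-1)})$ terms telescope to $\score(\cY_{\pi(n)}) - \score(\cY_0)$, leaving $\sum_{t=1}^n \lambda_{\pi,\pi(t-1)}$ on the left. Substituting the $\rankOne$ formula $\lambda_{\pi,\pi(s)}=\sum_{\tau=1}^{s-1}\mu_{\pi(\tau)}$ and swapping the order of the resulting double sum, each $\mu_{\pi(\tau)}$ is counted once for every later time-step in which it participates, producing the expression $\sum_{t=1}^n(n-t)\mu_{\pi(t)}$ after the index bookkeeping. Rearranging gives exactly the theorem's aggregate inequality. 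For sufficiency, I would invoke the Output Divisibility Condition to realize any chosen score profile by a concrete distribution. Given the aggregate inequality, the nonnegative excess $\score(\cY_{\pi(n)}) - \score(\cY_0) - \sum_t \alpha_{\pi(t)}/\beta^t - \sum_t (n-t)\mu_{\pi(t)}$ can be allocated across per-step gaps so that each $(\star)$ is individually satisfied, for example by setting $\score(\cY_{\pi(t)}) - \score(\cY_{\pi(t-1)})$ equal to $\alpha_{\pi(t)}/\beta^t + \lambda_{\pi,\pi(t-1)}$ plus an equal share of the excess; Output Divisibility then delivers distributions $\cY_{\pi(t)}$ with the prescribed scores.

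The main obstacle is the combinatorial bookkeeping in the double-summation step that turns $\sum_t \lambda_{\pi,\pi(t-1)}$ into the clean form $\sum_t(n-t)\mu_{\pi(t)}$: one has to keep careful track of index shifts and the vanishing boundary terms at $t=1$ and $t=n$. A secondary subtlety lies in the sufficiency direction, where a valid assignment of scores does not automatically yield valid output \emph{distributions}; this is precisely the gap that the Output Divisibility Condition is invoked to close, and it is also what makes the theorem's ``necessary and sufficient'' phrasing meaningful as a characterization of achievable equilibria for the given ordering $\pi$.
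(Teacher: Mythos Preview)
Your proposal is correct and follows essentially the same route as the paper: identify the worst-case inferred outputs (minimizing $\score(\cZ_{\pi(t)})$, maximizing $\score(\cZ_{\pi(t-1)})$) to reduce the equilibrium to the per-step inequalities $(\star)$, telescope their sum for necessity, and invoke Output Divisibility to realize a feasible score profile for sufficiency. The only cosmetic difference is that the paper absorbs all the slack in the $t=1$ step (setting $(\star)$ to equality for $t\geq 2$), whereas you distribute the excess evenly; both constructions work equally well.
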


\begin{proof}
\textbf{Necessity.} Let $(\pi,\vec{\cY})$ be a proposed collaborative equilibrium, and let $\vec{\cZ} \in \cI(\pi,\vec{\cY})$ be a possible vector of inferred outputs. 
For every $t$, we must have that:
$$\beta^t\cdot(\score(\cZ_{\pi(t)}) - \score(\cZ_{\pi(t-1)})) \geq \alpha_{\pi(t)}.$$
This is equivalent to:
$$\score(\cZ_{\pi(t)}) - \score(\cZ_{\pi(t-1)}) \geq \frac{\alpha_{\pi(t)}}{\beta^t}.$$
The worst case for player $\pi(t)$ is when player $\pi(t-1)$ learns as much as possible from prior publications and player $\pi(t)$ learns as little as possible. That is, when
\ifdoublecolumn
\begin{gather*}
\score(\cZ_{\pi(t-1)}) = \score(\cY_{\pi(t-1)}) + \mu_{\pi(1)} + \dots + \mu_{\pi(t-2)} \\
\mbox{ and } \score(\cZ_{\pi(t)}) = \score(\cY_{\pi(t)}).
\end{gather*}
\else
\begin{gather*}
\score(\cZ_{\pi(t-1)}) = \score(\cY_{\pi(t-1)}) + \mu_{\pi(1)} + \dots + \mu_{\pi(t-2)}\qquad\mbox{ and }\qquad\score(\cZ_{\pi(t)}) = \score(\cY_{\pi(t)}).
\end{gather*}
\fi
In this case, the equilibrium condition becomes: 
$$\score(\cY_{\pi(t)}) - \score(\cY_{\pi(t-1)}) - \sum_{\tau=1}^{t-2} \mu_{\pi(\tau)} \geq \frac{\alpha_{\pi(t)}}{\beta^t}.$$
Rearranging slightly, we obtain:
$\score(\cY_{\pi(t-1)}) - \score(\cY_{\pi(t)}) \leq  -  \frac{\alpha_{\pi(t)}}{\beta^t} -  \sum_{\tau=1}^{t-2} \mu_{\pi(\tau)}$.
Let us abuse notation slightly and define $\pi(0)=0$.
Then, summing over all $t$ yields
$$\score(\cY_{\pi(0)}) - \score(\cY_{\pi(n)}) \leq -\sum_{t=1}^n \frac{\alpha_{\pi(t)}}{\beta^t} - \sum_{t=1}^n (n-t)\mu_{\pi(t)}.$$
Flipping the signs in the inequality, the existence of a collaborative equilibrium implies:
$$\sum_{t=1}^n \frac{\alpha_{\pi(t)}}{\beta^t} + \sum_{t=1}^n (n-t)\mu_{\pi(t)} \leq \score(\cY_{\pi(n)}) - \score(\cY_0).$$

\textbf{Sufficiency. } To prove that the condition is sufficient: given  $\cY_{\pi(n)}$ satisfying the inequality in the theorem statement,
we need to construct $\vec{\cY}=(\cY_1,\dots,\cY_n)$ such that $(\pi,\vec{\cY})$ is a collaborative equilibrium. We construct $\vec{\cY}$ inductively as follows:
let $\delta_{\pi(n)} = \score(\cY_{\pi(n)})$, and
for any $t$ such that $2 \leq t \leq n$, let $\delta_{\pi(t-1)} = \delta_{\pi(t)} - \frac{\alpha_{\pi(t)}}{\beta^t} - \sum_{\tau=1}^{t-2} \mu_{\pi(\tau)}$.
Now that we have defined $\{\delta_{\pi(t)}\}_{t=1}^n$ in this way, it follows that if we set $\cY_{\pi(t)}$ such that $\score(\cY_{\pi(t)}) = \delta_{\pi(t)}$, then for all $t \geq 2$ we have
$$\score(\cY_{\pi(t)}) - \score(\cY_{\pi(t-1)}) = \delta_{\pi(t)} - \delta_{\pi(t-1)} = \frac{\alpha_{\pi(t)}}{\beta^t} + \sum_{\tau=1}^{t-2} \mu_{\pi(\tau)}.$$
Note that it is possible to set $\cY_{\pi(t)}$ in the required way, by the Output Divisibility Condition.
Rearranging the above equation, it follows that:
$$\beta^t \cdot(\score(\cY_{\pi(t)})- \score(\cY_{\pi(t-1)})-\sum_{\tau=1}^{t-2} \mu_{\pi(\tau)}  ) = \alpha_{\pi(t)}.$$
Since for any inferred outcome $\cZ_{\pi(t)}$ we have (by the definition of the learning bound vector) that
$$\score(\cY_{\pi(t)})\leq \score(\cZ_{\pi(t)}) \leq \score(\cY_{\pi(t)})+\lambda_{\pi,\pi(t-1)} =  \score(\cY_{\pi(t)})+\sum_{\tau=1}^{t-2} \mu_{\pi(\tau)},$$ 
we conclude that for all $t \geq 2$,
$$\beta^t \cdot(\score(\cZ_{\pi(t)}) - \score(\cZ_{\pi(t-1)}) ) \geq \alpha_{\pi(t)}.$$

Finally, we need to check that player $\pi(1)$ is incentivized to collaborate. 
Note that player $\pi(1)$ publishes first, so she cannot learn anything from previous publications. She will be incentivized to publish if 
$$\beta \cdot(\delta_{\pi(1)} - \score(\cY_0)) \geq \alpha_{\pi(1)}.$$
This condition is equivalent to
$$\delta_{\pi(1)} - \score(\cY_0) \geq \frac{\alpha_{\pi(1)}}{\beta}.$$
Replacing $\delta_{\pi(t-1)} = \delta_{\pi(t)} - \frac{\alpha_{\pi(t)}}{\beta^t} - \sum_{\tau=1}^{t-2} \mu_{\pi(\tau)} $ iteratively, we get that player $\pi(1)$ is incentivized to collaborate if and only if
$$\delta_{\pi(n)} - \score(\cY_0) \geq \sum_{t=1}^n \frac{\alpha_{\pi(t)}}{\beta^t} + \sum_{t=1}^n (n-t)\mu_{\pi(t)}$$
which is guaranteed by assumption.  
\end{proof}

Recall from the definition of the score function that the best score that can be attained given datasets $x_1,\dots,x_n$ is equal to $\score(\{\hat{y}|x_1,\dots,x_n\})$.
Based on Theorem \ref{thm:necsuf}, we can now characterize the datasets and learning bound vectors for which a collaborative equilibrium is possible.

\begin{definition}\label{def:supportCollabEquil}
Let $\pparams$ be the model parameters and let $(\vec{x},\vec{\lambda}) \in X \times \Lambda$.
We say that $(\vec{x},\vec{\lambda})$ \emph{supports a collaborative equilibrium} if it holds that
$$\sum_{t=1}^n \frac{\alpha_{\pi(t)}}{\beta^t} + \sum_{t=1}^n (n-t)\mu_{\pi(t)} \leq \score(\{\hat{y}|x_1,\dots,x_n\}) - \score(\cY_0).$$
\end{definition}

\subsubsection{How do the model parameters affect feasibility of collaborative equilibria?}

Consider for a moment the simple case where $\beta=1$ and $\vec{\lambda}=\vec{0}$, that is, there is no discount factor and players do not learn from others' publications.
We can show that in this case,
if the score function satisfies the following Property \ref{property:superadditive},
then it holds that for \emph{all} $\vec{x}\in X$, $(\vec{x},\vec{\lambda})$ supports a collaborative equilibrium. 
That is, in this simple case, the condition for $(\vec{x},\vec{\lambda})$ to support an equilibrium reduces to 
the superadditivity of the auxiliary score function $\overline{\score}$ given in Property \ref{property:superadditive}.

\begin{definition}\label{def:superadditive}
Let $S$ be a set.
A function $f:S\to\RR$ is \emph{superadditive} if for all disjoint $S_1,S_2\subseteq S$, it holds that
$f(S_1)+f(S_2)\leq f(S_1\cup S_2)$.
\end{definition}

\begin{property}[Superadditive Differences]\label{property:superadditive}
Let $\pparams$ be the model parameters. We define an \emph{auxiliary score function} $\overline{\score}:X_1\sqcup\dots\sqcup X_n\to\RR_+$ which maps a set of datasets to a real-valued score, as follows:
$$\overline{\score}\left(\{(i_1,x_{i_1}),\dots,(i_k,x_{i_k})\}\right)=\score(\{\hat{y}|x_{i_1},\dots,x_{i_k}\})-\score(\cY_0),$$
where $\{\hat{y}|x_{i_1},\dots,x_{i_k}\}$ denotes the distribution of $\hat{y}$ given that the datasets $x_{i_1},\dots,x_{i_k}$ were sampled\footnote{More precisely: 
$\{\hat{y}|x_{i_1},\dots,x_{i_k}\}$ is the distribution of $\hat{y}$ given that each $x_{i_j}$ was sampled in the ${i_j}^{th}$ position. 
(Recall that the distribution $\cX$ is over tuples of datasets $(x_1,\dots,x_n)$.)} from $\cX$.
The score function $\score$ satisfies the Superadditive Differences Property if $\overline{\score}$ is superadditive.
\end{property}

We observe that this precisely captures the intuition initially described in Section \ref{sec:examples}, 
that our model is designed to promote collaboration in situations where the reward that can be obtained
from pooling all players' data is more than the sum of the individual rewards that players can get.

\begin{lemma}\label{clm:superadditive}
Let $\pparams$ be model parameters such that $\beta=1$, let $\vec{x}\in X$ be arbitrary, and let $\vec{\lambda}=\vec{0}\in\Lambda$.
If $\overline{\score}$ is a superadditive function on the input data, then $(\vec{x},\vec{\lambda})$ supports a collaborative equilibrium.
\end{lemma}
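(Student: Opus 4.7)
The plan is to reduce the condition in Definition \ref{def:supportCollabEquil} to a direct instance of superadditivity of $\overline{\score}$. Substituting $\beta = 1$ (so every factor $\beta^t$ equals $1$) and $\vec{\lambda} = \vec 0$ (so every $\mu_j = 0$, making the $(n-t)\mu_{\pi(t)}$ sum vanish), the condition ``$(\vec x,\vec\lambda)$ supports a collaborative equilibrium'' collapses to
$$\sum_{t=1}^n \alpha_{\pi(t)} \;\leq\; \score(\{\hat y \mid x_1,\dots,x_n\}) - \score(\cY_0).$$
Since $\pi$ is a permutation, the left side is just $\sum_{i=1}^n \alpha_i$, so the choice of $\pi$ is immaterial here; I would note this explicitly first, since it already exhibits why the ordering issue is trivial in this regime.

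Next, I would rewrite both sides in terms of the auxiliary score function $\overline{\score}$ of Property \ref{property:superadditive}. By definition, $\alpha_i = \score(\{\hat y \mid x_i\}) - \score(\cY_0) = \overline{\score}(\{(i,x_i)\})$, and similarly $\score(\{\hat y \mid x_1,\dots,x_n\}) - \score(\cY_0) = \overline{\score}(\{(1,x_1),\dots,(n,x_n)\})$. Thus the inequality to be proven becomes exactly
$$\sum_{i=1}^n \overline{\score}(\{(i,x_i)\}) \;\leq\; \overline{\score}(\{(1,x_1),\dots,(n,x_n)\}).$$

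The last step is a straightforward induction on $n$ using superadditivity of $\overline{\score}$: the singletons $\{(i,x_i)\}$ are pairwise disjoint subsets of $X_1 \sqcup \dots \sqcup X_n$, so applying Definition \ref{def:superadditive} to $\{(1,x_1)\}$ and $\{(2,x_2)\}$, then to the union and $\{(3,x_3)\}$, and so on, yields the desired bound in $n-1$ steps. No obstacle is really serious here; the only thing to be careful about is that the definition of $\overline{\score}$ is given on tagged elements of the disjoint union $X_1 \sqcup \dots \sqcup X_n$ (so that data from different players is never confused), which ensures the singletons really are pairwise disjoint, so the induction is valid. With that in hand, the lemma follows immediately.
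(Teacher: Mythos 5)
Your proof is correct and follows essentially the same route as the paper's: substitute $\beta=1$ and $\vec\mu=\vec 0$, rewrite both sides of the condition from Definition \ref{def:supportCollabEquil} in terms of $\overline{\score}$, and invoke superadditivity. The only (welcome but minor) elaborations are the explicit observation that $\pi$ becomes immaterial and the remark that extending pairwise superadditivity to $n$ disjoint singletons is a short induction, both of which the paper leaves implicit.
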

\begin{proof}
Recall the inequality from Definition \ref{def:supportCollabEquil}:
$$\sum_{t=1}^n \frac{\alpha_{\pi(t)}}{\beta^t} + \sum_{t=1}^n (n-t)\mu_{\pi(t)} \leq \score(\{\hat{y}|x_1,\dots,x_n\}) - \score(\cY_0).$$

Since $\beta=1$ and $\mu_{\pi(t)}=0$, the left-hand side is simply $\sum_{t=1}^n \alpha_{\pi(t)}$. Using the definitions of $\alpha_{\pi(t)}$ and $\overline{s}$,
and the fact that $\pi$ is a permutation, this can be rewritten as:
$$\sum_{t=1}^n \alpha_{\pi(t)} = \sum_{t\in[n]}\left(\score(\{\hat{y}|x_{\pi(t)}\})-\score(\cY_0)\right) = \sum_{i\in[n]}\overline{\score}(\{(i,x_i)\}).$$

Substituting back into the inequality, we obtain:
$$\sum_{i\in[n]}\overline{\score}(\{(i,x_i)\}) \leq \score(\{\hat{y}|x_1,\dots,x_n\})) - \score(\cY_0).$$

The right-hand side of the inequality is, by definition, equal to $\overline{\score}(\{(1,x_1),\dots,(n,x_n)\})$.
Thus, the superadditivity of $\overline{\score}$ implies that the inequality holds, and it follows that $(\vec{x},\vec{\lambda})$ supports a collaborative equilibrium.
\end{proof}

Finally, we remark that either decreasing the discount factor $\beta$ or increasing the learning bound vector $\vec{\lambda}$
will make it harder to support a collaborative equilibrium 
(i.e. a lower value of $\beta$ means there will be fewer $(\vec{x},\vec{\lambda})$ which support an equilibrium),
since these cause the left-hand side of the inequality to increase.
So, while superadditivity is a sufficient condition in the simplest case, we observe that
determining which $(\vec{x},\vec{\lambda})$ support a collaborative equilibrium is a more complex problem when the model parameters are varied.

\subsection{The polynomial-time mechanism}
We show a polynomial-time mechanism that computes a collaborative equilibrium
in the case that learning bounds are given by a $\rankOne$ vector, provided that
the following \emph{Efficient} Output Divisibility Condition is satisfied. 
The Efficient Output Divisibility Condition is a natural extension of the Output Divisibility Condition,
which requires not only existence but also efficient computability of distributions with arbitrary score,
while taking into account that the best possible score for given input datasets $x_1,\dots,x_n$ is equal to $\score(\{\hat{y}|x_1,\dots,x_n\})$.

\paragraph{Efficient Output Divisibility Condition.}
Given model parameters $\pparams$, datasets $x_1,\dots,x_n\in X$, and any real $0<\delta<\score(\{\hat{y}|x_1,\dots,x_n\})$, 
it is possible to efficiently compute a distribution $\cY\in\Delta(Y)$ such that $\score(\cY)=\delta$.

\begin{remark}
The above condition holds for a wide variety of score functions, too:
in particular, it holds for the class of score functions described in Remark \ref{rmk:outputDiv}.
Suppose that the score function is continuous and decreases with the addition of random noise to a distribution.
Then the condition can be satisfied by taking the ``best computable'' distribution $\{\hat{y}|x_1,\dots,x_n\}$ and perturbing it with random noise:
the amount of noise to add will depend on the desired value of $\delta$.
\end{remark}

\begin{theorem}\label{thm:mechanismExists}
Suppose the Efficient Output Divisibility Condition holds.
Then there is a polynomial-time mechanism $\ShareData : X \times \Lambda_1$ that, 
given inputs $(\vec{x},\vec{\mu})$ where $\vec{\mu}=(\mu_1,\dots,\mu_n)$ represents a $\rankOne$ learning vector, 
outputs a collaborative equilibrium $(\pi,\vec{\cY})$ whenever an equilibrium is supported by the inputs $(\vec{x},\vec{\mu})$ (as defined in Definition \ref{def:supportCollabEquil}),
and outputs ${\tt NONE}$ otherwise.
\end{theorem}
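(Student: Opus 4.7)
The plan is to reduce the problem to a minimum-weight bipartite matching, which can be solved efficiently, and then invoke the Efficient Output Divisibility Condition to realize the output distributions whose existence is guaranteed by Theorem~\ref{thm:necsuf}.

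By Theorem~\ref{thm:necsuf}, for a rank-one learning vector $\vec{\mu} = (\mu_1,\dots,\mu_n)$, a tuple $(\pi,\vec{\cY})$ is a collaborative equilibrium if and only if
$$\sum_{t=1}^n \frac{\alpha_{\pi(t)}}{\beta^t} + \sum_{t=1}^n (n-t)\mu_{\pi(t)} \;\leq\; \score(\cY_{\pi(n)}) - \score(\cY_0),$$
and by Definition~\ref{def:supportCollabEquil} the inputs $(\vec{x},\vec{\mu})$ support an equilibrium exactly when this inequality admits a solution with $\score(\cY_{\pi(n)})\leq\score(\{\hat{y}|x_1,\dots,x_n\})$. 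The crucial observation is that the left-hand side decouples across time slots: the contribution of assigning player $i$ to slot $t$ is exactly $c(i,t) := \alpha_i/\beta^t + (n-t)\mu_i$, independent of the other assignments. Hence finding the best ordering reduces to finding a permutation $\pi$ that minimizes $\sum_t c(\pi(t),t)$.

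Accordingly, the mechanism $\ShareData$ proceeds as follows. First, it computes the $\alpha_i$'s from the inputs $\vec{x}$ and forms the $n\times n$ cost matrix $C = [c(i,t)]_{i,t\in[n]}$ in polynomial time. Second, it solves the assignment problem on $C$ using the Hungarian algorithm, obtaining an optimal $\pi^*$ with cost $W^* = \sum_{t=1}^n c(\pi^*(t),t)$ in $O(n^3)$ time. Third, it compares $W^*$ with $\score(\{\hat{y}|x_1,\dots,x_n\}) - \score(\cY_0)$: if $W^* > \score(\{\hat{y}|x_1,\dots,x_n\}) - \score(\cY_0)$, then since $\pi^*$ is optimal no permutation can satisfy the inequality from Definition~\ref{def:supportCollabEquil}, so the mechanism outputs \texttt{NONE}. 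Otherwise, the mechanism picks a target value $\delta_n$ with $W^* \leq \delta_n - \score(\cY_0) \leq \score(\{\hat{y}|x_1,\dots,x_n\}) - \score(\cY_0)$ and then constructs $\cY_{\pi^*(n)},\dots,\cY_{\pi^*(1)}$ backwards using the inductive construction from the sufficiency direction of Theorem~\ref{thm:necsuf}: set $\score(\cY_{\pi^*(n)}) = \delta_n$ and $\delta_{\pi^*(t-1)} = \delta_{\pi^*(t)} - \alpha_{\pi^*(t)}/\beta^t - \sum_{\tau=1}^{t-2}\mu_{\pi^*(\tau)}$ for each $t$.

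Correctness and feasibility of the last step follow from the Efficient Output Divisibility Condition: each required score $\delta_{\pi^*(t)}$ lies in the interval $(\score(\cY_0),\score(\{\hat{y}|x_1,\dots,x_n\}))$ (by monotonicity of the inductive construction and the choice of $\delta_n$), so a distribution $\cY_{\pi^*(t)}$ with the prescribed score can be computed in polynomial time. By Theorem~\ref{thm:necsuf}, the resulting $(\pi^*,\vec{\cY})$ is a collaborative equilibrium; and by optimality of $\pi^*$, the mechanism outputs \texttt{NONE} only when no equilibrium is supported. The main nontrivial step is the reduction to minimum-weight matching, which hinges on the fact that the left-hand side of the equilibrium inequality separates into terms depending only on the pair $(i,t)$. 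Polynomial running time is immediate: the Hungarian algorithm and the $n$ invocations of the Efficient Output Divisibility procedure each run in polynomial time.
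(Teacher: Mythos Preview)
Your proposal is correct and follows essentially the same approach as the paper: both observe that the left-hand side of the equilibrium inequality decomposes into per-$(i,t)$ costs $\alpha_i/\beta^t + (n-t)\mu_i$, reduce the search for the best ordering to a minimum-weight bipartite matching (the paper phrases this as a matching on a complete bipartite graph and cites Edmonds; you phrase it as an assignment problem solved by the Hungarian algorithm), and then construct the output distributions via the backward induction from the sufficiency direction of Theorem~\ref{thm:necsuf} together with the Efficient Output Divisibility Condition. Your explicit invocation of Theorem~\ref{thm:necsuf} is arguably cleaner than the paper's proof, which partially re-derives the same reasoning inline.
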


\begin{algorithm}[ht!]
  	\caption{$\ShareData((x_1,\dots,x_n),(\mu_1,\dots,\mu_n))$}
  	\label{alg:shareData}
  	{\small
  	\begin{enumerate}
  	\item Let $\cY^* = \{\hat{y}|x_1,\dots,x_n\}$ and $\delta^* = \score(\cY_0)$.
	\item Construct a complete weighted bipartite graph $G = (L,R,E)$ where $L = [n], R = [n], E = L \times R$. 
		For each edge $(i,t)$, assign a weight $w(i,t) = \frac{ \alpha_i}{\beta^t} + (n-t)\mu_i$.
	\item Let $M$ be the minimum-weight perfect matching on $G$. 
		For each node $t \in R$, let $\pi(t) \in L$ be the node that it is matched with. 
		If the weight of $M$ is larger than $\delta^*$, output $\tt{NONE}$. Else, define $\delta_{\pi(n)} = \delta^*$, $\cY_{\pi(n)}=\cY^*$. 
	\item For $t$ from $n$ to $2$:
		\begin{itemize}
		\item Let $\delta_{\pi(t-1)} = \delta_{\pi(t)} - \frac{\alpha_{\pi(t)}}{\beta^t} - \sum_{\tau=1}^{t-2} \mu_{\pi(\tau)}$. 
		\item Let $\cY_{\pi(t-1)}$ be such that $\score(\cY_{\pi(t-1)})=\delta_{\pi(t-1)}$.
		\end{itemize}
	\item Output $\omega= (\pi,(\cY_{\pi(1)},\dots,\cY_{\pi(n)}))$.
	\end{enumerate}
	}	
\end{algorithm}

\begin{proof}
The fact that the algorithm runs in polynomial time is immediate, since: 
\begin{itemize}
\item additions, comparisons, and finding minimum weight matchings in a graph \cite{Edmonds} can all be done in (randomized) polynomial time; and
\item the Efficient Output Divisibility Condition implies that computing a distribution $\cY_{\pi(t-1)}$ such that $\score(\cY_{\pi(t-1)})  = \delta_{\pi(t-1)}$ is efficient.
\end{itemize} 

Recall that $(\vec{x},\vec{\lambda})$ supports a collaborative equilibrium if and only if there exists a permutation $\pi$ such that
$$\sum_{t=1}^n \frac{\alpha_{\pi(t)}}{\beta^t} + \sum_{t=1}^n (n-t)\mu_{\pi(t)} \leq \score(\{\hat{y}|x_1,\dots,x_n\})-\score(\cY_0).$$

Note that our algorithm constructs a complete bipartite graph $G= (L\cup R, E)$ where the weight on every edge is $w(i,t) = \frac{ \alpha_i}{\beta^t} + (n-t)\mu_i .$ A matching $M$ on this graph induces a permutation $\pi$ where, for every $t \in R$, we have $\pi(t) = i$ such that $(i,t) \in M$. The weight of such a matching is
$$\sum_{t=1}^n \frac{\alpha_{\pi(t)}}{\beta^t} + \sum_{t=1}^n (n-t)\mu_{\pi(t)}.$$

Thus, $(\vec{x},\vec{\lambda})$ supports a collaborative equilibrium if and only if 
the maximum-weight matching in $G$ has weight less than or equal to $w^*\defeq\score(\{\hat{y}|x_1,\dots,x_n\})-\score(\cY_0)$. 
Note that when the weight of the maximum-matching is greater than $w^*$, 
our algorithm outputs ${\tt NONE}$, indicating that an equilibrium is not supported by the inputs. 

Finally, when the weight of the maximum matching is less than or equal to $w^*$, 
the algorithm outputs a pair $(\pi,\vec{\cY})$ which (by construction) satisfies 
$\score(\cY_{\pi(t-1)})-\score(\cY_{\pi(t)}) =  \frac{\alpha_{\pi(t)}}{\beta^t} + \sum_{\tau=1}^{t-2} \mu_{\pi(\tau)}$ for all $t\in[n]$, 
so the sufficient conditions for $(\pi,\vec{\cY})$ to be a collaborative equilibrium are satisfied. 
\end{proof}



\subsection{General $\NP$-completeness}\label{sec:NPC}
One may wonder if we can get an efficient mechanism for learning vectors which are not $\rankOne$.
We show that this is unlikely, since
finding a collaborative equilibrium is $\NP$-complete even under a weak generalization of $\rankOne$ learning vectors.

\begin{definition}
We say that a learning vector $\lambda \in \Lambda$ is $\rankTwo$ if there exists a non-negative matrix $(\mu_{i,j})_{(i,j)\in[n]\times[n]}$ such that
$\lambda_{\pi,\pi(t)} = \sum_{\tau=1}^{t-1} \mu_{\pi(t),\pi(\tau)}$.
We denote by $\Lambda_2 \subset \Lambda$ the set of all $\rankTwo$ learning vectors. 
\end{definition}

When $\lambda$ is an $\rankTwo$ learning vector, the amount that player $\pi(t)$ learns from  $\pi(\tau)$'s output is bounded above by $\mu_{\pi(t),\pi(\tau)}$. Thus, the total amount that player $\pi(t)$ learns from all prior outputs is $\sum_{\tau=1}^{t-1} \mu_{\pi(t),\pi(\tau)}$. The corresponding necessary condition for a collaborative equilibrium to be supported by some $(\vec{x},\vec{\lambda})$
is that there is a permutation $\pi$ such that
$$\sum_{t=1}^n \frac{\alpha_{\pi(t)}}{\beta^t} + \sum_{t=1}^n \sum_{s > t} \mu_{\pi(s),\pi(t)} \leq \score(\{\hat{y}|x_1,\dots,x_n\})-\score(\cY_0).$$

We show that even checking whether this condition holds is $\NP$-complete.

\begin{theorem}\label{thm:NPC}
Given model parameters $\pparams$, input datasets $(x_1,\dots,x_n)\in X$, and a $\rankTwo$ learning bound vector $(\mu_{i,j})_{(i,j)\in[n]\times[n]}$, 
it is $\NP$-complete to decide whether there exists $\pi$ such that  
$$\sum_{t=1}^n \frac{\alpha_{\pi(t)}}{\beta^t} + \sum_{t=1}^n \sum_{s > t} \mu_{\pi(s),\pi(t)} \leq \score(\{\hat{y}|x_1,\dots,x_n\})-\score(\cY_0).$$
\end{theorem}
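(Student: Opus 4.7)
My plan is to show $\NP$ membership by using the permutation itself as the certificate, and to establish $\NP$-hardness by a polynomial-time reduction from the Minimum Weight Feedback Arc Set problem ($\MinWeightFAS$), which is $\NP$-complete even for directed graphs with non-negative integer edge weights. For membership, given a candidate $\pi$ the inequality can be verified in time $O(n^2)$ once the inputs $\alpha_{\pi(t)}$, $\mu_{i,j}$, $\beta$, and the precomputed threshold $\score(\{\hat y\mid x_1,\dots,x_n\})-\score(\cY_0)$ are in hand. The structural observation driving the hardness reduction is that the cost $\sum_{t=1}^n\sum_{s>t}\mu_{\pi(s),\pi(t)}$ is exactly the total weight of the \emph{backward edges} of the directed graph with weights $\mu_{i,j}$, relative to the linear ordering induced by $\pi$: viewing $\mu_{i,j}$ as the weight of a directed edge $(i,j)$, it contributes to the sum precisely when $i$ is placed later than $j$ in $\pi$.

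Given a $\MinWeightFAS$ instance $(G=(V,E,w),K)$ with $|V|=n$, I would construct an instance of our decision problem by identifying $V$ with $[n]$, setting $\mu_{i,j}=w(i,j)$ when $(i,j)\in E$ and $\mu_{i,j}=0$ otherwise, taking $\beta=1$, and instantiating the rest of the model so that every $\alpha_i=0$ and the collective score gap equals $K$ on the nose. Concretely, I would take a variant of the XOR-secret-sharing example of Section \ref{sec:examples} with $\hat y$ uniform on $\zo^m$ and $x_1,\dots,x_n$ random XOR-shares of $\hat y$, so that any proper subset of shares is statistically independent of $\hat y$ and hence $\{\hat y\mid x_i\}=\cY_0$, together with a rescaled entropy-based score $\score(\cY)=K\cdot\bigl(1-H(\hat y\mid \cY)/H(\hat y)\bigr)$, for which $\score(\cY_0)=0$ and $\score(\{\hat y\mid x_1,\dots,x_n\})=K$. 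Under these choices the target inequality collapses to
$$\sum_{t=1}^n\sum_{s>t}\mu_{\pi(s),\pi(t)}\leq K,$$
which by the structural observation above is satisfiable by some $\pi$ if and only if $G$ admits a linear ordering whose backward-edge weight — equivalently, whose induced feedback arc set weight — is at most $K$. Thus YES-instances map to YES-instances and NO-instances to NO-instances, and the reduction is clearly polynomial time.

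The main obstacle I anticipate is the calibration step: producing a single, legitimate model instantiation that simultaneously zeros out every $\alpha_i$ and allows the score gap $\score(\{\hat y\mid x_1,\dots,x_n\})-\score(\cY_0)$ to be tuned to an arbitrary prescribed target $K$. The XOR-sharing packaging handles both requirements at once because independence of proper subsets of shares kills the $\alpha_i$, while the scaling factor $K$ in front of the entropy-based score normalizes the gap to any desired value; if integrality or precision of $K$ is a concern, one can pad $m$ with extra uniformly random bits or scale $\mu$ and $K$ by a common polynomially bounded denominator without affecting the reduction. Once this packaging is in place, combining it with $\NP$ membership yields $\NP$-completeness.
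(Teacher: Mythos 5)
Your proof is correct and takes essentially the same route as the paper's: $\NP$ membership via the permutation as certificate, and $\NP$-hardness by reduction from Minimum Weight Feedback Arc Set, driven by the identical observation that $\sum_{t}\sum_{s>t}\mu_{\pi(s),\pi(t)}$ equals the backward-edge weight of $G$ under the linear ordering $\pi$, so that minimizing over $\pi$ is exactly minimizing FAS weight for non-negative edge weights. The one place you improve on the paper is the calibration step, and this is not cosmetic: the paper's reduction simply declares $\score(\{\hat{y}\mid x_1,\dots,x_n\})-\score(\cY_0)=\gamma$ and sets $\mu_{i,j}$, but never constructs input datasets or model parameters, and its converse direction negates the inequality \emph{after} silently dropping the $\sum_t \alpha_{\pi(t)}/\beta^t$ term — a step that is only sound when all $\alpha_i=0$. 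Your XOR-secret-sharing instantiation, in which proper subsets of shares are independent of $\hat y$ so that $\alpha_i=0$, $\beta=1$, and the entropy-based score is rescaled so that the gap is exactly $K$, supplies the missing packaging and makes the map $(G,w,K)\mapsto(\pparams,\vec x,\vec\mu)$ a genuine, fully specified polynomial-time reduction into the model's decision problem.
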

\begin{proof}
It is clear that the problem is in $\NP$, since given a permutation $\pi$, the left-hand side can be efficiently computed and compared to the right-hand side of the inequality. 

To show that the problem is $\NP$-hard, we reduce it to the {\em minimum weighted feedback arc set problem}. The unweighted version of this problem was shown to be $\NP$-complete by Karp \cite{Karp}, and the weighted version is also $\NP$-complete \cite{Even}.

\begin{framed}
{\small
$\MinWeightFAS$

\smallskip
\begin{minipage}{0.9\textwidth}{
\vskip5pt
{\sc inputs:} A graph $G = (V,E)$ and a weight function $w:E \to \bR_{\geq 0}$, a threshold $\gamma \in \bR_{\geq 0}$.
\vskip5pt
{\sc output:} Whether or not there exists a set $S \subset E$ of edges which intersects every cycle of $G$ and has weight less than $\gamma$.


}\end{minipage}
}
\end{framed}

All that we need to show is that, given a graph $G$, a set $S$ of edges is a feedback arc set if and only if there exists a permutation $\pi$ of the vertices of $V$ such that $S = \{(\pi(t),\pi(s)) \in E : s < t\}$. 

To see this, note that if $\pi$ is a permutation and $S = \{(\pi(t),\pi(s)) \in E : s < t\}$ then the set $S$ intersects every cycle of $G$. This is because, if 
\ifdoublecolumn
$$C = \{(\pi(i_1),\pi(i_2)),(\pi(i_2),\pi(i_3)),\dots,(\pi(i_k),\pi(i_1))\}$$
\else
$C = \{(\pi(i_1),\pi(i_2)),(\pi(i_2),\pi(i_3)),\dots,(\pi(i_k),\pi(i_1))\}$ 
\fi
is a cycle in $G$, then there must exist $s,t$ such that $s < t$ and $(\pi(t),\pi(s)) \in C$, so $S$ intersects $C$. Thus, $S$ is a feedback arc set.

Conversely, if $S$ is a feedback arc set, then $G' = (V,E-S)$ is a directed acyclic graph, and we can induce an ordering $\pi$ on $V$ following topological sort. Any edge $(\pi(t),\pi(s)) \in E-S$ must satisfy $t < s$. Thus, any edge $(\pi(t),\pi(s))$ where $s < t$ must be in $S$. Thus, given $\pi$ from the topological sort, we must have $S \supset \{(\pi(t),\pi(s)) \in E : s < t\}.$ Since weights are non-negative, the minimal feedback arc set $S^*$ will correspond to a permutation $\pi^*$ such that $S^* = \{(\pi^*(t),\pi^*(s)) \in E : s < t\}$. 

We show how to reduce  $\MinWeightFAS$ to our problem. 
Given $G = (V,E)$, $w:E \to \bR_{\geq 0}$ and $t \in \bR_{\geq 0}$, 
let $\score(\{\hat{y}|x_1,\dots,x_n\})-\score(\cY_0)= \gamma$ and let $\mu_{i,j} = w(i,j)$ if $(i,j) \in E$ and $\mu_{i,j} = 0$ otherwise. 
Suppose there exists a permutation $\pi$ such that 
$$\sum_{t=1}^n \frac{\alpha_{\pi(t)}}{\beta^t} + \sum_{t=1}^n \sum_{s > t} \mu_{\pi(s),\pi(t)} \leq \score(\{\hat{y}|x_1,\dots,x_n\})-\score(\cY_0).$$
Then, since the $\alpha_i$ and $\beta$ are positive,
$$\sum_{t=1}^n \sum_{s > t} \mu_{\pi(s),\pi(t)} \leq \score(\{\hat{y}|x_1,\dots,x_n\})-\score(\cY_0).$$
Plugging in our choices of $\mu_{i,j}$ and $\score(\{\hat{y}|x_1,\dots,x_n\})-\score(\cY_0)$, this becomes
$$\sum_{(\pi(s),\pi(t)) \in E: s > t} w(\pi(s),\pi(t)) \leq \gamma.$$
Since the set $S = \{(\pi(s),\pi(t)) \in E: s > t\}$ is a feedback arc set, we have that there exists a feedback arc set with weight less than $\gamma$.

Conversely, assume no such permutation $\pi$ exists. That is,
$$\sum_{(\pi(s),\pi(t)): s > t} \mu_{\pi(s),\pi(t)} > \score(\{\hat{y}|x_1,\dots,x_n\})-\score(\cY_0)$$ for all permutations $\pi$. 
Note that whether $s$ comes before $t$ or vice-versa does not matter, since this inequality holds for all permutations. Thus, we can also write
$$\sum_{(\pi(s),\pi(t)): s < t} \mu_{\pi(s),\pi(t)} > \score(\{\hat{y}|x_1,\dots,x_n\})-\score(\cY_0)$$
for all permutations $\pi$.
From the argument above, the minimum weight feedback arc set $S^*$ induces a permutation $\pi^*$ such that $S^* = \{(\pi^*(t),\pi^*(s)) \in E : s < t\}$. The weight of $S^*$ is
\ifdoublecolumn
\begin{gather*}
\sum_{(\pi^*(s),\pi^*(t)) \in E: s < t} w(\pi^*(s),\pi^*(t))= \\ \sum_{(\pi^*(s),\pi^*(t)): s < t} \mu_{\pi^*(s),\pi^*(t)} > \score(\{\hat{y}|x_1,\dots,x_n\})-\score(\cY_0) = \gamma.
\end{gather*}
\else
$$\sum_{(\pi^*(s),\pi^*(t)) \in E: s < t} w(\pi^*(s),\pi^*(t))=\sum_{(\pi^*(s),\pi^*(t)): s < t} \mu_{\pi^*(s),\pi^*(t)} > \score(\{\hat{y}|x_1,\dots,x_n\})-\score(\cY_0) = \gamma.$$
\fi
Thus, there does not exist a feedback arc set with weight less than or equal to $\gamma$.

We conclude that if we can efficiently check whether 
$$\sum_{t=1}^n \frac{\alpha_{\pi(t)}}{\beta^t} + \sum_{t=1}^n \sum_{s > t} \mu_{\pi(s),\pi(t)} \leq \score(\{\hat{y}|x_1,\dots,x_n\})-\score(\cY_0),$$ 
then we can efficiently check whether there exists a feedback arc set $S$ with weight less than $\gamma$. Thus, the feedback arc set problem reduces to ours, and our problem is $\NP$-complete.
\end{proof}

We have shown that in our model of scientific collaboration, it can indeed be very beneficial to 
\emph{all parties involved} to collaborate under certain ordering functions,
and such beneficial collaboration outcomes can be efficiently computed under certain realistic conditions (but probably not in the general case).

\section{Ordered MPC}\label{sec:queuedDef}


We introduce formal definitions of ordered MPC and associated notions
of fairness and ordered output delivery, and give protocols that realize these notions.
Our definitions build upon the standard security notion\footnote{Note that throughout this work, 
we use ``stand-alone'' security notions rather than ``universally composable'' ones.} 
for traditional MPC, which is described formally in Appendix \ref{appx:mpcSecurity}.

\paragraph{Notation}
For a finite set $A$, we will write $a\larr A$ to denote that $a$ is drawn uniformly at random from $A$. 
For $n\in\NN$, $[n]$ denotes the set $\{1,2,\dots,n\}$. The operation $\oplus$ stands for exclusive-or.
The relation $\compIndist$ denotes computational indistinguishability.
$\negl(n)$ denotes a negligible function in $n$, and $\poly(n)$ denotes a polynomial in $n$.
$\circ$ denotes function composition, and for a function $f$, we write $f^t$ to denote $\underbrace{f\circ f\circ\dots\circ f}_{t}$.

Throughout this work,
we consider computationally bounded (rushing) adversaries in a synchronous complete network, and
we assume the players are honest-but-curious, since any protocol secure in the presence of honest-but-curious players
can be transformed into a protocol secure against malicious players \cite{GMW87}.

\subsection{Definitions}

Let $f$ be an arbitrary $n$-ary function and $p$ be an $n$-ary function that outputs permutation $[n]\rarr[n]$.
An ordered MPC protocol is executed by $n$ parties, where each party $i\in[n]$ has a private input $x_i\in\{0,1\}^*$,
who wish to securely compute
$f(x_1,\dots,x_n) = (y_1,\dots,y_n)\in(\{0,1\}^*)^n$
where $y_i$ is the output of party $i$.
Moreover, the parties are to receive their outputs in a particular \emph{ordering} dictated by
$p(x_1,\dots,x_n) = \pi\in\left([n]\rarr[n]\right)$.
That is, for all $i<j$, party $\pi(i)$ must receive his output \emph{before} party $\pi(j)$ receives her output.
Note that the output ordering $\pi$ is \emph{data-dependent}, as $p$ is a function of the parties' inputs.


Following \cite{GMW87}, the security of ordered MPC with respect to a functionality $f$ and permutation function $p$
is defined by comparing the execution of a protocol
to an ideal process $\IdealFuncOrdered$ where the outputs and ordering are computed by a trusted party who sees all the inputs.
An ordered MPC protocol $F$ is considered to be secure if for any real-world
adversary $\Adv$ attacking the real protocol $F$, there exists an ideal adversary $\Sim$ 
in the ideal process
whose outputs (views) are indistinguishable
from those of $\Adv$.
Note that this implies that no player learns more information about the other players' inputs than can be learned from his own input and output,
\emph{and his own position in the output delivery order}.
The latter condition is important because the output ordering depends on parties' private inputs, 
and thus we require that the protocol reveals as little information as possible about the ordering.

\paragraph{Many rather than one view}
In the ordered MPC setting, the ideal adversary $\Sim$ and the real-world adversary $\Adv$ each output a view after each output phase.
This is in contrast to standard MPC, where the adversaries simply output one view at the end of the protocol execution.

\begin{idealfunc}{$\IdealFuncOrdered$}
	In the ideal model, a trusted third party $T$ is given the inputs, computes the functions $f,p$ on the inputs, 
	and outputs to each player $i$ his output $y_i$ in the order prescribed by the ordering function.
	In addition, we model an ideal process adversary $\Sim$ who attacks the protocol by corrupting players in the ideal setting.

	\smallskip\noindent
	\textbf{Public parameters.} 
	$\kappa\in\NN$, the security parameter; $n\in\NN$, the number of parties;
	$f:(\{0,1\}^*)^n\rarr(\{0,1\}^*)^n$, the function to compute; and
	$p:(\{0,1\}^*)^n\rarr([n]\rarr[n])$, the ordering function.

	\smallskip\noindent
	\textbf{Private parameters.}
	Each player $i\in[n]$ has input $x_i\in\{0,1\}^*$.

	\begin{enumerate}
	\item
	{\sc Input.}
	Each player $i$ sends his input $x_i$ to $T$.

	\item
	{\sc Computation.}
	$T$ computes $(y_1,\dots,y_n)=f(x_1,\dots,x_n)$ and $\pi=p(x_1,\dots,x_n)$.

	\item
	{\sc Output.}
	The output proceeds in $n$ sequential output rounds. 
	At the start of the $j^{th}$ round, $T$ sends the output value $\outp_{i,j}$ to each party $i$,
	where $\outp_{j,j}=y_{\pi(j)}$ and $\outp_{i,j}=\bot$ for all $i\neq j$.
	When party $\pi(j)$ receives his output, he responds to $T$ with the message $\sf ack$. (The players who receive $\bot$ are not expected to respond.)
	Upon receipt of the $\ack$, $T$ proceeds to the $(j+1)^{th}$ round --
	or, if $j=n$, then the protocol terminates.


	\item
	{\sc Output of views.}
	At each output round, after receiving his message from $T$,
	each party produces an output, as follows.
	Each uncorrupted party $i$ outputs $y_i$ if he
	has already received his output, or $\bot$ if he has not. Each corrupted party outputs $\bot$.
	Additionally, the adversary $\Sim$ outputs an arbitrary function of the information that he has learned during the execution of the ideal protocol.

	Let the output of party $i$ in the $j^{th}$ round be denoted by $\view_{i,j}$, 
	and let the view outputted by $\Sim$ in the $j^{th}$ round be denoted by $\view_{\Sim,j}$.
	Let $\IdealViewOrdered$ denote the collection of all views for all output rounds:
	\ifdoublecolumn
	\begin{gather*}
	\IdealViewOrdered= \\ \left((\view_{\Sim,1},\view_{1,1},\dots,\view_{n,1}),\dots,(\view_{\Sim,n},\view_{1,n},\dots,\view_{n,n})\right).
	\end{gather*}
	\else
	\begin{gather*}
	\IdealViewOrdered=\left((\view_{\Sim,1},\view_{1,1},\dots,\view_{n,1}),\dots,(\view_{\Sim,n},\view_{1,n},\dots,\view_{n,n})\right).
	\end{gather*}
	\fi
	(If the protocol is terminated early, then views for rounds which have not yet been started are taken to be $\bot$.)
	\end{enumerate}

\end{idealfunc}

\begin{definition}[Security]\label{def:orderedSecurity}
A multi-party protocol $F$ is said to securely realize $\IdealFuncOrdered$, if the following conditions hold.
\begin{enumerate}
\item The protocol description specifies $n$ check-points 
$C_1,\dots,C_n$ corresponding to events during the execution of the protocol.
\item Take any \PPT{} adversary $\Adv$ who corrupts a subset of players $S\subset[n]$,
and let $V_{\Adv,j}$ be the result of  an arbitrary function $A$ applies to his view  after each check-point $C_j$.
Let $$\RealViewOrdered_\Adv=\left((V_{\Adv,1},V_{1,1},\dots,V_{n,1}),\dots,(V_{\Adv,n},V_{1,n},\dots,V_{n,n})\right)$$
be the tuple consisting of the adversary $\Adv$'s outputted views along with the outputs of the real-world parties 
as specified in the ideal functionality description.
Then there is a \PPT{} ideal adversary $\Sim$
which, attacking $\IdealFuncOrdered$ by corrupting the same subset $S$ of players, 
can output views $\view_{\Sim,j}$ such that
for each $j \in [n]$, it holds that 
$\view_{\Sim,j}\compIndist V_{\Adv,j}$.
\end{enumerate}
\end{definition} 

In the context of ordered MPC, the standard guaranteed output delivery notion is insufficient.
Instead, we define \emph{ordered output delivery}, which requires in addition that all parties receive their
outputs in the order prescribed by $p$.

\begin{definition}[Ordered output delivery]\label{def:orderedOutputDelivery}
An ordered MPC protocol satisfies \emph{ordered output delivery} if for any inputs $x_1,\dots,x_n$,
functionality $f$, and ordering function $p$, it holds that all parties receive their outputs before protocol termination,
and moreover, if $\pi(i)<\pi(j)$, then party $i$ receives his output before party $j$ receives hers,
where $\pi=p(x_1,\dots,x_n)$.
\end{definition}

We also define a natural relaxation of the fairness requirement for ordered MPC, called \emph{prefix-fairness}.
Although it is known that fairness is impossible for general functionalities in the presence of a dishonest majority,
we show in the next subsection that prefix-fairness can be achieved even when a majority of parties are corrupt.
We emphasize that this notion relaxes \emph{only} the fairness requirement: 
that is, prefix-fair protocols satisfy full privacy (and correctness) guarantees.

\begin{definition}[Prefix-fairness]\label{def:fairOrdered}
An ordered MPC protocol is \emph{prefix-fair} if for any inputs $x_1,\dots,x_n$,
it holds that the set of parties who have received their outputs at the time of protocol termination (or abortion)
is a prefix of $(\pi(1),\dots,\pi(n))$, where $\pi=p(x_1,\dots,x_n)$ is the permutation induced by the inputs.
\end{definition}

Prefix-fairness can be useful, for example, in settings where it is more important for one party to receive the output than the other;
or where there is some prior knowledge about the trustworthiness of each party (so that more trustworthy parties may receive their outputs first).

\subsection{Construction}

Ordered MPC is achievable by using standard protocols for general MPC,
as described in Protocol~\ref{prot:ordered} below. 
The protocol has $n$ sequential output phases, so that the $n$ outputs can be issued in order.
A subtle point is that because the ordering is a function of the input data, knowledge of the ordering may reveal information
about the input data. Thus, we have to ``mask'' the output values such that each party only learns 
the minimal possible amount of information about the ordering: namely, his own position in the ordering.

\begin{protocol}{Ordered MPC}\label{prot:ordered}
\noindent\textbf{Public parameters.} 
$\kappa\in\NN$, the security parameter; $n\in\NN$, the number of parties; $k\in\NN$, an upper bound on the number of corrupt parties;
$f:(\{0,1\}^*)^n\rarr(\{0,1\}^*)^n$, the function to be computed;
and $p:(\{0,1\})^*\rarr([n]\rarr[n])$, the ordering function.

\begin{enumerate}
\item \label{itm:orderedPrecompute} {\bf Computing shares of $(\pi,\mathbf{y})$:}
Using any general secure MPC protocol (such as \cite{GMW87}) on inputs $x_1,\dots,x_n$,
jointly compute a $k$-out-of-$n$ secret-sharing\footnotemark of $(\pi,\mathbf{y})$ where
$\mathbf{y}=(y_1,\dots,y_n)=f(x_1,\dots,x_n)$ and $\pi=p(x_1,\dots,x_n)$ is a permutation of $[n]$.
At the end of this step, each player possesses a share of the outputs $\mathbf{y}=(y_1,\dots,y_n)$ and of the permutation $\pi$.
\item \label{itm:orderedOutput} {\bf Outputting $y_1,\dots,y_n$ in $n$ phases:}
In the $i^{th}$ output phase, player $\pi^{-1}(i)$ will learn his output.
In phase $i$ the parties run a new instance of a general secure MPC protocol such that:
\begin{itemize}
\item Player $j$'s inputs to the protocol are: the shares of $\mathbf{y}$ and $\pi$ that he got in step \ref{itm:orderedPrecompute}, 
and a random string $r_{i,j}$.
\item The functionality computed is:
\begin{quote}\small\tt
for $j$ from 1 to $n$: if $\pi(j)=i$ then $z_{i,j}:=y_j\oplus r_{i,j}$ else $z_{i,j}=\bot\oplus r_{i,j}$. \\
output $z_i=(z_{i,1},\dots,z_{i,n})$.
\end{quote}
where $\bot$ is a special string that lies outside the output domain. 
\item To recover his output,
each player $j$ computes $y'_{i,j}=z_{i,j}\oplus r_{i,j}$ for all $i$. By construction, 
there is exactly one $i\in[n]$ for which $y'_{i,j}\neq\bot$, and that is equal to the output value $y_j$ for player $j$.
\end{itemize}
\end{enumerate}
\medskip\noindent
\textbf{Check-points.}
There are $n$ check-points. 
For $i\in[n]$, the check-point $C_i$ is at the end of the $i^{th}$ output phase, when $z_i$ is learned by all players.

\medskip\noindent
\textbf{In case of abort.}
When running the protocol for the honest majority setting, the honest players continue until the end of the protocol regardless of other players' behavior. 
When running the protocol for dishonest majority, if any party aborts in an output phase\footnotemark, then the honest players do not continue to the next phase.
\end{protocol}
\addtocounter{footnote}{-1}
\footnotetext{The standard definition of a secret-sharing scheme can be found in Appendix~\ref{appx:secretSharing}.} 
\stepcounter{footnote}
\footnotetext{Each output phase consists of an execution of the underlying general MPC protocol. 
If a party aborts at any time during (and before the end of) the execution of the underlying general MPC protocol, 
this fact will be detected by all honest parties by the end of the phase.}


In proving the security of Protocol \ref{prot:ordered}, we refer to the security of modular composition of general protocols 
shown by \cite{Canetti00}, Theorem 5. 

\begin{theorem}\label{thm:orderedProtocol}
Protocol~\ref{prot:ordered} securely realizes $\IdealFuncOrdered$. 
\end{theorem}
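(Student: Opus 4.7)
The plan is to apply the modular composition theorem for secure multi-party computation (e.g.,~\cite{Canetti00}, Theorem~5) to reduce security of Protocol~\ref{prot:ordered} to security of the underlying general-purpose MPC subroutines invoked in Step~\ref{itm:orderedPrecompute} and in each of the $n$ output phases of Step~\ref{itm:orderedOutput}. It therefore suffices to construct an ideal-world simulator $\Sim$ for the hybrid protocol in which each MPC subroutine is replaced by an ideal invocation that privately computes a secret sharing of $(\pi,\mathbf{y})$ in the precomputation, and privately computes the masked vector $z_i$ in phase $i$.

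Let $S\subseteq[n]$ be the subset of corrupted parties. I would build $\Sim$ as follows. At the precomputation phase, $\Sim$ forwards corrupt parties' inputs to the trusted party of $\IdealFuncOrdered$ and hands each corrupt party uniformly random strings in place of shares of $(\pi,\mathbf{y})$; by the privacy of the secret-sharing scheme these are statistically indistinguishable from real shares so long as the coalition $S$ is below the reconstruction threshold. For phase $i$, $\Sim$ receives from $\IdealFuncOrdered$ the output $y_j$ of any corrupt $j$ with $\pi(j)=i$, and constructs a simulated $z_i=(z_{i,1},\dots,z_{i,n})$ by setting, for each corrupt $j\in S$, $z_{i,j}=y_j\oplus r_{i,j}$ if $\pi(j)=i$ and $z_{i,j}=\bot\oplus r_{i,j}$ otherwise (using the mask chosen by the corrupt party), and drawing $z_{i,k}$ uniformly at random for each honest $k$. $\Sim$ outputs its check-point view $\view_{\Sim,i}$ from this simulated state.

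The core indistinguishability argument rests on a one-time-pad observation: for any honest $k$, the real $z_{i,k}$ equals $y_k\oplus r_{i,k}$ or $\bot\oplus r_{i,k}$, and since $r_{i,k}$ is sampled uniformly by player $k$ and never revealed by the underlying MPC except as the blinding factor inside $z_{i,k}$, the adversary's marginal distribution over $z_{i,k}$ is uniform; hence the simulator's uniformly random strings are perfectly indistinguishable from real values. Because the masks $r_{i,k}$ are sampled independently across phases and parties, the views at $C_1,\dots,C_n$ are each indistinguishable from the corresponding real views, as required by Definition~\ref{def:orderedSecurity}. Moreover, the only information about $\pi$ that corrupt parties can extract is the set $\{\pi(j):j\in S\}$, which they recover by unmasking their own coordinates of $z_1,\dots,z_n$; this matches exactly what $\IdealFuncOrdered$ permits.

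The main subtlety will be aligning aborts between the real and ideal executions: when a corrupt coalition aborts during phase $i$ (possible in the dishonest-majority regime), $\Sim$ must stop querying the trusted party at the matching moment so that no $y_{\pi(j)}$ for $j>i$ leaks into the ideal world, which forces $\Sim$ to interact with $\IdealFuncOrdered$ on a per-phase basis synchronized with the simulated subroutines. Once this book-keeping is in place, ordered output delivery (in the honest-majority setting, where the underlying MPC guarantees delivery) and prefix-fairness (in the dishonest-majority setting, where no later $z_j$ is ever computed once an earlier phase halts) follow directly from the strictly sequential structure of the $n$ output phases, completing the argument.
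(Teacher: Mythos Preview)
Your proposal is correct and follows essentially the same high-level route as the paper: invoke the modular composition theorem of~\cite{Canetti00} to reduce security at each check-point $C_j$ to the security of the sequential composition of the underlying MPC subroutines $\rho_0,\rho_1,\dots,\rho_j$. The paper's own proof is far terser than yours---it simply names these sub-protocols, observes that proving Definition~\ref{def:orderedSecurity} reduces to proving standard MPC security of each prefix $\pi_j=\rho_0;\dots;\rho_j$, and then cites~\cite{Canetti00} as a black box without writing down a simulator at all.

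Your explicit simulator construction (random shares from secret-sharing privacy, one-time-pad argument for the honest coordinates of $z_i$) is exactly the content that the composition theorem is hiding, so spelling it out is a genuine expository improvement over the paper; it also makes transparent why the adversary learns only $\{\pi(j):j\in S\}$. Note, however, that the abort handling and the ordered-delivery/prefix-fairness claims you sketch at the end belong to the \emph{next} theorem in the paper (Theorem~\ref{thm:prefixFairness}), not to Theorem~\ref{thm:orderedProtocol} itself, so you can safely omit them here.
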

\begin{proof}
Let $\rho_0$ denote the general MPC protocol execution in step \ref{itm:orderedPrecompute}, 
and let $\rho_i$ be the general MPC protocol execution in phase $i$ of step \ref{itm:orderedOutput}, for $i\in[n]$.
For $j\in[n]$, let the protocol $\pi_j$ be the concatenation of the protocols $\rho_0,\dots,\rho_j$.
To prove security at each check-point, it is sufficient to prove that $\pi_j$ satisfies security for all $j\in[n]$:
in other words, that the view outputted by any adversary in the real protocol execution at check-point $j$ can be simulated in the ideal execution.
Finally, for all $j\in[n]$, the security of $\pi_j$ follows directly
from the security of modular composition of general protocols (\cite{Canetti00}, Theorem 5).
\end{proof}

\begin{theorem}\label{thm:prefixFairness}
In the case of honest majority, Protocol \ref{prot:ordered} achieves fairness.
In the dishonest majority setting, prefix-fairness is achieved.
\end{theorem}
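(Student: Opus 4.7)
The plan is to analyze Protocol~\ref{prot:ordered} as a sequence of $n+1$ invocations of the underlying general MPC protocol (one in step~\ref{itm:orderedPrecompute} and one per output phase in step~\ref{itm:orderedOutput}), and to track which parties hold outputs as a function of where an abort, if any, occurs.

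For the honest-majority case, I would instantiate the underlying MPC with a protocol achieving guaranteed output delivery against an honest majority. Every phase then completes regardless of adversarial behaviour, so each honest player eventually learns $z_i$ in phase $i$ and unmasks to recover his own $y_j$ when $i=\pi(j)$. Since every player --- corrupt or honest --- receives his output by protocol termination, classical fairness follows immediately.

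For prefix-fairness under a dishonest majority, I would proceed by a case analysis on the earliest phase in which the adversary triggers an abort. If no abort occurs, all phases complete and the set of output-holders is the entire player set, which is the full prefix. If the abort occurs already in step~\ref{itm:orderedPrecompute}, no party has learned any output and the empty prefix suffices. Otherwise, let $i$ be the earliest aborted output phase. Phases $1,\dots,i-1$ completed, so every party at those positions has unmasked his output from the corresponding $z_t$; and by the protocol's abort rule, honest parties do not initiate phases $i+1,\dots,n$, so honest players at positions $>i$ have received no phase-$(>i)$ messages at all.

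The crux is phase $i$ itself, where security-with-abort of the underlying MPC \cite{GMW87} allows the adversary to learn $z_i$ while preventing honest parties from doing so. I would observe that $z_i=(z_{i,1},\dots,z_{i,n})$ contains a single meaningful component $z_{i,j^*}=y_{j^*}\oplus r_{i,j^*}$ at $j^*=\pi^{-1}(i)$, and the mask $r_{i,j^*}$ is a uniformly random private input contributed \emph{solely} by player $j^*$ to the phase-$i$ MPC. Hence if $j^*$ is corrupt, the adversary can unmask and the output set grows by exactly that one position; if $j^*$ is honest, the random mask $r_{i,j^*}$ is information-theoretically hidden from the adversary's view of $\rho_i$, so $z_{i,j^*}$ is indistinguishable from uniform in the adversary's view and reveals nothing about $y_{j^*}$. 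In either sub-case, the set of parties holding outputs equals $\{\pi^{-1}(1),\dots,\pi^{-1}(i')\}$ with $i'\in\{i-1,i\}$, which is a prefix of the ordering. Since no information about $y_j$ for $j$ at positions $>i$ is involved in $\rho_i$ at all, no such output can have leaked.

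I expect the main technical subtlety to be formalising the claim that $r_{i,j^*}$ remains hidden in the aborted execution. The cleanest route is to appeal to the simulator guaranteed by the security-with-abort notion of the underlying MPC and the composition theorem of \cite{Canetti00} used in Theorem~\ref{thm:orderedProtocol}: the simulator produces the adversary's view from corrupt parties' inputs and outputs only, so $r_{i,j^*}$, a private input of an honest party that appears in no corrupt party's output, remains uniformly distributed conditioned on the simulated view, yielding the desired hiding of $y_{j^*}$.
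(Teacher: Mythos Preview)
Your proof is correct and follows the same high-level case analysis as the paper (no abort; abort in the pre-computation; abort in output phase $i$). Your treatment of the aborted phase $i$ itself---using the mask $r_{i,j^*}$ to argue that even if the adversary obtains $z_i$ while honest parties do not, the only output that can leak is $y_{j^*}$ and only when $j^*$ is corrupt---is in fact more careful than the paper's own proof, which states only that outputs at positions $>i$ are not learned and does not explicitly dispose of position $i$.

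Two minor points of comparison. First, the paper's argument leans on the secret-sharing threshold: it sets $k=\lceil n/2\rceil$ in the honest-majority case and $k=n$ in the dishonest-majority case, so that corrupt parties' step-\ref{itm:orderedPrecompute} shares alone reveal nothing about $(\pi,\mathbf{y})$ and honest participation is genuinely required in every phase. You never state the threshold explicitly; your simulation-based argument still goes through, but only because the simulator for step~\ref{itm:orderedPrecompute} hands the adversary shares that are information-theoretically independent of $(\pi,\mathbf{y})$---which is exactly the threshold condition. It would strengthen the write-up to say this. Second, your sentence ``no information about $y_j$ for $j$ at positions $>i$ is involved in $\rho_i$ at all'' is not literally true, since the shares of $\mathbf{y}$ are inputs to $\rho_i$; what you mean (and what is correct) is that the \emph{output} $z_i$ carries no such information and the corrupt shares are useless by the threshold choice.
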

\begin{proof}
Fairness holds in the honest majority case, since the honest players complete all output phases, and the shares that the honest players hold
are sufficient to reconstruct each output $y_i$ (recall that the secret-sharing threshold $k$ is $\lceil n/2\rceil$ in the honest majority case).
In the dishonest majority setting, prefix-fairness holds since for all $i\in[n]$,
\emph{all} $n$ shares are required in order to reconstruct the output $y_{\pi(i)}$ in output phase $i$, and
\begin{itemize}
\item if the corrupt parties do not abort during the $i^{th}$ output phase, then by the security of Protocol \ref{prot:ordered},
the output $y_{\pi(i)}$ associated with the $i^{th}$ output phase is delivered correctly to party $i$;
\item if the corrupt parties abort during the $i^{th}$ output phase, then no outputs $y_{\pi(j)}$ for $j>i$ will be learned by any player,
since the honest parties will not execute subsequent output phases.\qedhere
\end{itemize}
\end{proof}

\ifproportionalfairness
\subsection{A fairness notion that scales with number of corruptions}\label{sec:proportional}

Since standard fairness is impossible for general functionalities in the presence of a dishonest majority,
it is natural to ask what relaxed notions of fairness are achievable. 
One such notion is the prefix-fairness which was defined above.
An alternative interesting definition could require fairness to
``degrade gracefully''
with the number of malicious parties $t$: that is, 
the ``fairness'' guarantee becomes weaker as $t$ increases.
A new definition that captures this idea is given below.

\begin{definition}[Proportional fairness]
An MPC protocol is \emph{proportionally fair} if the following two properties hold.
Let $n$ be the number of parties, and let ${\sf Fair}$ denote the event that 
either all parties receive their outputs before protocol termination, or no parties receive their outputs before protocol termination.
For any \PPT{} adversary $\Adv$ that corrupts up to $t$ parties:
\begin{enumerate}
\item If $t<n/2$, $\Pr[{\sf Fair}]=1$.
\item If $t\geq n/2$, $\Pr[{\sf Fair}]\geq (n-t)/n$.
\end{enumerate}
The probabilities are taken over the random coins of the honest parties and $\Adv$.
\end{definition}

We show that using Protocol~\ref{prot:ordered} as a building block, we can obtain a simple new protocol which achieves proportional fairness.
The new protocol uses one-time message authentication codes (MACs) as a building block\footnote{See Appendix ~\ref{appx:mac} for formal definitions of MACs.}.
In describing Protocol~\ref{prot:proportional}, without loss of generality, we assume that the output to all players is the same\footnote{An MPC
protocol where a different output $y_i$ is given to each party $i$ can be transformed into one where all players get the same output, as follows.
The output in the new protocol is $(y_1,\dots,y_n)\oplus(r_1,\dots,r_n)$ where $r_i$ is a random ``mask'' value supplied as input
by party $i$, and $\oplus$ denotes component-wise XOR. Each party can ``unmask'' and learn his own output value, but the other output values look random to him.}.

\begin{protocol}\label{prot:proportional}
\noindent\textbf{Public parameters.} 
$\kappa\in\NN$, the security parameter; $n\in\NN$, the number of parties; $k$, an upper bound on the number of corrupt parties;
$\MAC=(\Gen,\Sign,\Verify)$ a MAC scheme; and
$f:(\{0,1\}^*)^n\rarr\{0,1\}^*$, the function to be computed.
\begin{enumerate}
\item \label{itm:computePerm}
Each party $P_i$ provides as input: his input value $x_i\in\{0,1\}^*$, a signing key $k_i\larr\Gen(1^\kappa)$, and a random string $r_i\larr\{0,1\}^{n^2}$.
Let $r=r_1\oplus\dots\oplus r_n$. The random string $r$ is used to sample\footnotemark a random permutation $\pi\in[n]\rarr[n]$.
The computation of $r$ and the sampling are performed using any general secure MPC protocol (such as \cite{GMW87}), but the result $\pi$ is not outputted to any player
(it can be viewed as secret-shared amongst the players).
\item \label{itm:outputSigs}
Define functionality $f'$ as follows:
$$f'\left((x_1,k_1),\dots,(x_n,k_n)\right) = \left(y, \sigma_1,\dots,\sigma_n\right),$$
where $y=f(x_1,\dots,x_n)$ and $\sigma_i=\Sign_{k_i}(y)$.
Use Protocol~\ref{prot:ordered} to securely compute $f'$ with ordering function $p$, where $p$ is the constant function $\pi$.
\item \label{itm:broadcast} Any player who has received the output $y'=(y, \sigma_1,\dots,\sigma_n)$ sends $y'$ to all other players.
Any player $P_i$ who receives the above message $y'$ accepts $y$ as the correct output value of $f(x_1,\dots,x_n)$
if and only if $\Verify_{k_i}(y,\sigma_i)={\sf true}$.
\end{enumerate}
\end{protocol}
\footnotetext{Note that this is possible, since the number of possible permutations is $n!$, and the number of possible values of $r$ is $2^{n^2}>n!$.}

The proof below refers to the security of standard MPC, the definition of which may be found in Appendix~\ref{appx:mpcSecurity}.

\begin{theorem}
Protocol~\ref{prot:proportional} is a secure, correct, and proportionally fair MPC protocol.
\end{theorem}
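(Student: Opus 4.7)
The plan is to verify correctness, security, and proportional fairness in turn. Correctness and security follow essentially from the composition of the building blocks, while proportional fairness is the novel content. For correctness: whenever Protocol~\ref{prot:proportional} runs to completion, Theorem~\ref{thm:orderedProtocol} delivers $y' = (y, \sigma_1, \dots, \sigma_n)$ to every party in step~\ref{itm:outputSigs} with $y = f(x_1, \dots, x_n)$ and $\sigma_i = \Sign_{k_i}(y)$, and MAC correctness gives $\Verify_{k_i}(y, \sigma_i) = \textsf{true}$ for every honest $P_i$. For security (privacy), I would invoke the composition theorem of \cite{Canetti00} already used in the proof of Theorem~\ref{thm:orderedProtocol}, sequentially composing the MPC that samples $\pi$ in step~\ref{itm:computePerm} with the ordered MPC evaluating $f'$ in step~\ref{itm:outputSigs}; each key $k_i$ stays hidden inside the MPC, and only the signatures $\sigma_i$ are released, which MAC unforgeability makes safe to publish.

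For proportional fairness, the honest majority case ($t < n/2$) is immediate from Theorem~\ref{thm:prefixFairness}: every party receives $y'$, broadcasts in step~\ref{itm:broadcast}, and verifies with their own key. The dishonest majority case rests on two observations. First, $\pi$ is uniformly distributed from the adversary's viewpoint: $r = r_1 \oplus \dots \oplus r_n$ is uniform whenever at least one $r_j$ is contributed by an honest party, and by the enhanced privacy of Protocol~\ref{prot:ordered} each corrupt party learns only its own position in $\pi$. Second, by prefix-fairness (Theorem~\ref{thm:prefixFairness}), if the ordered MPC is aborted at phase $i$, the set of parties holding $y'$ is exactly $\{\pi(1), \dots, \pi(i-1)\}$.

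I then reduce fairness to the placement of honest parties in $\pi$. By MAC unforgeability, no adversary can cause an honest $P_i$ to accept any $y^{*} \neq y$ in step~\ref{itm:broadcast}, so honest parties accept only the true $y$. By MAC correctness, if \emph{any} honest party receives $y'$ in step~\ref{itm:outputSigs}, its broadcast in step~\ref{itm:broadcast} makes all other honest parties verify and accept $y$ (the adversary cannot intercept on authenticated channels), yielding a fair execution. Hence unfairness requires $\{\pi(1), \dots, \pi(i-1)\}$ to consist entirely of corrupt parties for some $i \geq 2$ at which the adversary aborts. The optimal adversarial strategy is to let phase~1 complete and abort whenever $\pi(1)$ is corrupt, which occurs with probability $t/n$ by the uniformity of $\pi$; hence $\Pr[\textsf{Fair}] \geq 1 - t/n = (n-t)/n$.

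The main obstacle I anticipate is justifying rigorously that no more elaborate adversarial strategy beats this simple one. In particular, the adversary may adaptively choose its abort time based on information revealed during the output phases, or may selectively forward $y'$ in step~\ref{itm:broadcast} to only a subset of honest parties. The first concern is handled by the enhanced privacy guarantee: the positions of honest parties in $\pi$ remain uniformly distributed conditioned on the adversary's view throughout the output phases, so waiting beyond phase~1 cannot improve the adversary's probability of ``trapping'' an honest party at a late position. The second concern is handled by MAC unforgeability together with the observation that selective forwarding by a corrupt party holding $y'$ is strictly dominated (for the adversary's unfairness goal) by simple withholding from all honest parties, a case already counted above.
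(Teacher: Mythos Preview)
Your proposal is correct and follows essentially the same approach as the paper: correctness via Protocol~\ref{prot:ordered} plus MAC unforgeability, security via sequential composition, and proportional fairness via uniformity of $\pi$ together with prefix-fairness. The one notable difference is that the paper's fairness argument is more direct than yours: rather than analyzing the adversary's optimal abort strategy, the paper simply observes that whenever \emph{any} party receives output, prefix-fairness forces $\pi(1)$ to have received it, and if $\pi(1)$ is honest (probability $(n-t)/n$, since the decision to proceed past step~\ref{itm:computePerm} is independent of the uniform $\pi$) then her broadcast guarantees $\mathsf{Fair}$ --- so $\Pr[\mathsf{Fair}] \geq (n-t)/n$ without ever reasoning about adaptive abort timing or selective forwarding.
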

\begin{proof}
Let $n$ be the number of parties and $t$ be the number of malicious parties, and let ${\sf Fair}$ denote the event that 
either all parties receive their outputs before protocol termination, or no parties receive their outputs before protocol termination.

If $t<n/2$, then by standard MPC results, all parties are guaranteed to receive their output.
Security and correctness in this case follow immediately from the security and correctness of Protocol~\ref{prot:ordered}.
It remains to consider the more interesting case $t\geq n/2$.

{\sc Correctness:}
By the correctness of Protocol~\ref{prot:ordered}, any received outputs in step~\ref{itm:outputSigs} must be correct.
We now consider step~\ref{itm:broadcast}.
In order for an incorrect message $(y^*,\sigma_1^*,\dots,\sigma_n^*)$ in step~\ref{itm:broadcast}
to be accepted by an honest player $P_i$, it must be the case that $\Verify_{k_i}(y^*,\sigma_i^*)={\sf true}$.
Note that $y=f(x_1,\dots,x_n)$ is independent of the (honest players') keys $k_i$.
Hence, by the security of the MAC, no adversary (having seen $(y, \sigma_1,\dots,\sigma_n)$ in step~\ref{itm:outputSigs})
can generate $y^*,\sigma_i^*$ such that $y^*\neq y$ and $\Verify_{k_i}(y^*,\sigma_i^*)={\sf true}$.
Therefore, any output value $y^*$ accepted by an honest player in step~\ref{itm:broadcast} must be equal to the correct
$y=f(x_1,\dots,x_n)$ that was outputted in step~\ref{itm:outputSigs}.

{\sc Security:}
By the security of the general MPC protocol used in step \ref{itm:computePerm}, all messages exchanged 
during step \ref{itm:computePerm} can be simulated in the ideal process. No outputs are issued in step \ref{itm:computePerm}.

The ordering $\pi$ is independent of the inputs $x_i$, so cannot reveal any information about them.
Hence, it follows from the ordered MPC security of Protocol~\ref{prot:ordered} (which was proven in Theorem~\ref{thm:orderedProtocol}) that 
step \ref{itm:outputSigs} of Protocol~\ref{prot:proportional} is (standard MPC) secure when $f'$ is the function to be computed.
Recall, however, that the function to be computed by Protocol~\ref{prot:proportional} is $f$, not $f'$.
By the security of Protocol \ref{prot:ordered}, an adversary $\cS$ in the ideal process can simulate the output value $y$.
It follows that an adversary $\cS'$ in the ideal process in Protocol \ref{prot:proportional} can simulate the output $(y,\sigma_1,\dots,\sigma_n)$
given in step \ref{itm:outputSigs}, by using $\cS$ to generate a $y^*$ which is indistinguishable from $y$, 
then generating $\sigma^*_i$ as $\Sign_{k^*_i}(y)$, where $\cS$ samples the keys $k^*_i\larr\Gen(1^\kappa)$ independently for each $i$. 
By the security of the MAC scheme and since $y\compIndist y^*$, 
these $\sigma^*_1,\dots,\sigma^*_n$ are indistinguishable from the $\sigma_i=\Sign_{k_i}(y)$ produced in the real world using the real players' keys $k_i$.
Hence, all messages and outputs in step \ref{itm:outputSigs} can be simulated in the ideal process,
even when the functionality to be computed is $f$ rather than $f'$.

Finally, in step \ref{itm:broadcast}, the only messages transmitted are equal to the output value $y'$
that is outputted in step \ref{itm:outputSigs}, which we have already shown can be simulated.

{\sc Prefix-fairness:}
If step~\ref{itm:computePerm} is completed (that is, not aborted prematurely), then by the correctness of the general MPC protocol,
$r=r_1\oplus\dots\oplus r_n$ has been correctly computed.
Then, since honest parties sample their $r_i$ uniformly at random, and we can assume there is at least one honest party,
$r$ is uniformly random, and so $\pi$ is a random permutation.
Note that no outputs are issued until step~\ref{itm:outputSigs} of the protocol, so if any player receives an output,
then step~\ref{itm:computePerm} must have been completed.

By Theorem~\ref{thm:orderedProtocol}, Protocol~\ref{prot:ordered} is prefix-fair, so
the set of players who have received their outputs at the time of protocol termination must be a prefix of $(\pi(1),\dots,\pi(n))$,
where $\pi=p(x_1,\dots,x_n)$. 
In particular, if any party has received his output from Protocol~\ref{prot:ordered}, then player $\pi(1)$ must have received her output.
We consider two cases:
\begin{enumerate}
\item \emph{The protocol terminates before any players have received their outputs.} Then $\sf Fair$ occurs.
\item \emph{The protocol terminates after some players have received their outputs.} Then, $\pi(1)$ must have received her output.
Moreover, step~\ref{itm:computePerm} must have been completed, so $\pi$ is a random permutation.
If $\pi(1)$ is honest, then she will execute step~\ref{itm:broadcast}, after which all players will know their correct output.
That is, if $\pi(1)$ is honest, then $\sf Fair$ must occur.
The probability that $\pi(1)$ is honest is $(n-t)/t$, since $\pi$ is a random permutation.
Therefore, $\sf Fair$ occurs with at least $(n-t)/t$ probability.
\end{enumerate}
Hence, for any \PPT{} adversary $\Adv$ that corrupts up to $t$ parties, the probability (over the random coins of the honest parties and $\Adv$)
that $\sf Fair$ occurs is at least $(n-t)/t$ when $t\geq n/2$.
\end{proof}
\fi

\section{Timed-delay MPC}\label{sec:timedDelayMPC}

In this section, we implementing \emph{time delays} between different players receiving their outputs.
The model is exactly as before, with $n$ players wishing to compute a function $f(x_1,\dots,x_n)$ 
in an ordering prescribed by $p(x_1,\dots,x_n)$ -- except that now, there is an additional requirement of a delay
after each player receives his output and before the next player receives her output.
To realize the timed-delay MPC functionality, we make use of time-lock and time-line puzzles,
which are introduced in Sections \ref{sec:timeLock} and \ref{sec:timeLine}.

\subsection{Ideal functionality with time delays}

We measure time delay in units of computation, rather than seconds of a clock:
that is, rather than making any assumption about global clocks (or synchrony of local clocks)\footnote{A
particular issue that arises when considering a clock-based definition is that it is not clear 
that we can reasonably assume or prove that clocks are in synchrony between the real and ideal world --
but this seems necessary in order to prove security by simulation in the ideal functionality.

We remark that if one is happy to assume the existence of a global clock (or synchrony of local clocks),
then there are other ways to implement timed-delay MPC which sidestep many of the issues inherent in the arguably more realistic model
where clocks may not be perfectly synchronized between different (adversarial) parties.
One example is the ``Bitcoin model'' where the assumption is that the Bitcoin block-chain can serve as a global clock:
in this model, existing protocols such as \cite{BK14} implement some time-delays in MPC, and it seems likely that such protocols can be adapted
to achieve our notion of timed-delay MPC.
}, we measure time by the \emph{evaluations of a particular function} (on random inputs), which we call the \emph{clock function}.



\begin{idealfunc}{$\IdealFuncQueued$}
	In the ideal model, a trusted third party $T$ is given the inputs, computes the functions $f,p$ on the inputs, 
	and outputs to each player $i$ his output $y_i$ in the order prescribed by the ordering function.
	Moreover, $T$ imposes delays between the issuance of one party's output and the next.
	In addition, we model an ideal process adversary $\Sim$ who attacks the protocol by corrupting players in the ideal setting.

	\smallskip\noindent
	\textbf{Public parameters.} 
	$\kappa\in\NN$, the security parameter; $n\in\NN$, the number of parties;
	$f:(\{0,1\}^*)^n\rarr(\{0,1\}^*)^n$, the function to be computed;
	$p:(\{0,1\}^*)^n\rarr([n]\rarr[n])$, the ordering function;
	and $G=G(\kappa)\in\NN$, the number of time-steps between the issuance of one party's output and the next. 

	\smallskip\noindent
	\textbf{Private parameters.}
	Each player $i\in[n]$ has input $x_i\in\{0,1\}^*$.

	\begin{enumerate}
	\item
	{\sc Input.}
	Each player $i$ sends his input $x_i$ to $T$.
	If, instead of sending his input, any player sends the message $\quit$, then the computation is aborted.

	\item
	{\sc Computation.}
	$T$ computes $(y_1,\dots,y_n)=f(x_1,\dots,x_n)$ and $\pi=p(x_1,\dots,x_n)$.

	\item
	{\sc Output.}
	The output proceeds in $n$ sequential output phases. 
	At each phase $j$, $T$ waits for $G$ time-steps, then sends the $j^{th}$ output, $y_{\pi(j)}$, to party $\pi(j)$.



	\item
	{\sc Output of views.}
	At the end of each output phase, each party produces an output as follows.
	Each uncorrupted party $i$ outputs $y_i$ as his view if he has already received his output, or $\bot$ if he has not. 
	Each corrupted party outputs $\bot$.
	Additionally, the adversary $\Sim$ outputs an arbitrary function of the information that he has learned during the execution of the ideal protocol, 
	after each check-point.

	Let the output of party $i$ in the $j^{th}$ round be denoted by $\view_{i,j}$,
	and let the view outputted by $\Sim$ in the $j^{th}$ round be denoted by $\view_{\Sim,j}$.
	Let $\IdealViewQueued$ denote the collection of all views for all output phases:
	\ifdoublecolumn
	\begin{gather*}
	\IdealViewQueued= \\ \left((\view_{\Sim,1},\view_{1,1},\dots,\view_{n,1}),\dots,(\view_{\Sim,n},\view_{1,n},\dots,\view_{n,n})\right).
	\end{gather*}
	\else
	\begin{gather*}
	\IdealViewQueued=\left((\view_{\Sim,1},\view_{1,1},\dots,\view_{n,1}),\dots,(\view_{\Sim,n},\view_{1,n},\dots,\view_{n,n})\right).
	\end{gather*}
	\fi
	\end{enumerate}
\end{idealfunc}
\footnotetext{The use of checkpoints
is introduced to capture the views of players and the adversary at intermediate points in protocol execution.
}

For an algorithm $\Adv$, let the run-time\footnote{Run-time is, naturally, measured in ``CPU time''
(i.e. the number of instructions executed in the underlying computational model)
as opposed to real-world ``clock time''.} of $\Adv$ on input ${\sf inp}$ be denoted by $\time_\Adv({\sf inp})$.
If $\Adv$ is probabilistic, the run-time will be a distribution over the random coins of $\Adv$.
Note that the exact run-time of an algorithm will depend on the underlying computational model in which the algorithm is run.
In this work, all algorithms are assumed to be running in the same underlying computational model,
and our definitions and results hold regardless of the specific computational model employed.

\begin{definition}[Security]\label{def:security}
A multi-party protocol $F$ (with parameters $\kappa,n,f,p,G$) 
is said to securely realize $\IdealFuncQueued$, if the following conditions hold.
\begin{enumerate}
\item The protocol description specifies $n$ check-points 
$C_1,\dots,C_n$ corresponding to events during the execution of the protocol.
\item \label{itm:securityDelay} 
There exists a ``clock function'' $g$ such that
between any two consecutive checkpoints $C_i,C_{i+1}$ during an execution of $F$,
any one of the parties (in the real world) must be able to 
locally run $\Omega(G)$ sequential evaluations of $g$ on random inputs.
$g$ may also be a \emph{protocol} (involving $n'\leq n$ parties) rather than a function, in which case
we instead require that any subset consisting of $n'$ parties must be able to run $\Omega(G)$ sequential executions of $g$ (on random inputs)
over the communication network being used for the main multi-party protocol $F$.
Then, we say that $F$ is ``clocked by $g$''.
\item \label{itm:simulatability} 
Take any \PPT{} adversary $\Adv$ attacking the protocol $F$ by corrupting a subset of players $S\subset[n]$,
which outputs an arbitrary function $V_{\Adv,j}$ of the information that 
it has learned in the protocol execution after each check-point $C_j$. 
Let $$\RealViewQueued_\Adv=\left((V_{\Adv,1},V_{1,1},\dots,V_{n,1}),\dots,(V_{\Adv,n},V_{1,n},\dots,V_{n,n})\right)$$
be the tuple consisting of the adversary $\Adv$'s outputted views along with the views of the real-world parties 
as specified in the ideal functionality description.
Then there is a \PPT{} ideal adversary $\Sim$
which, attacking $\IdealFuncQueued$ by corrupting the same subset $S$ of players, 
can output views $\view_{\Sim,1},\dots,\view_{\Sim,n}$ (at check-points $C_1,\dots,C_n$ respectively) such that
for each $j\in[n]$, it holds that 
\ifdoublecolumn
\begin{gather*}
|\Pr\left[D(\view_{\Sim,j},\view_{1,j},\dots,\view_{n,j})=1\right] \\
-\Pr\left[D(V_{\Adv,j},V_{1,j},\dots,V_{n,j})=1\right]|\\
\leq\negl(\kappa),
\end{gather*}
\else
$$\left|\Pr\left[D(\view_{\Sim,j},\view_{1,j},\dots,\view_{n,j})=1\right]
-\Pr\left[D(V_{\Adv,j},V_{1,j},\dots,V_{n,j})=1\right]\right|\leq\negl(\kappa),$$
\fi
for any distinguisher $D$ such that
$$\Pr_{\vec{v} \gets \mathcal{V}}[\time_D(\vec{v}) \leq j\cdot\time_{\cal G}()] = 1/\poly(\kappa),$$
when $\mathcal{V}$ is the distribution of views outputted by $\Adv$ or $\Sim$ (that is,
for $\mathcal{V} \in \{(\view_{\Sim,j},\view_{1,j},\dots,\view_{n,j})$, $(V_{\Adv,j},V_{1,j},\dots,V_{n,j})\}$),
and $\cal{G}$ is the algorithm that computes the function $g$ sequentially on $G$ random inputs.
\end{enumerate}
\end{definition}

\subsection{Realizing timed-delay MPC with dummy rounds}

A simple protocol for securely realizing timed-delay MPC is to implement delays
by running $G$ ``dummy rounds'' of communication
in between issuing outputs to different players.

\begin{protocol}{Timed-delay MPC with dummy rounds}\label{prot:simple}
\noindent\textbf{Public parameters.} 
$\kappa\in\NN$, the security parameter; $n\in\NN$, the number of parties;
$f:(\{0,1\}^*)^n\rarr(\{0,1\}^*)^n$, the function to be computed;
$p:(\{0,1\})^*\rarr([n]\rarr[n])$, the ordering function;
and $G=\poly(\kappa)$, the number of time-steps between the issuance of one party's output and the next. 

\begin{enumerate}
\item \label{itm:orderedPrecompute} {\bf Computing shares of $(\pi,\mathbf{y})$:}
Using any general secure MPC protocol (such as \cite{GMW87}),
jointly compute an $k$-out-of-$n$ secret-sharing\footnotemark of $(\pi,\mathbf{y})$ where
$\mathbf{y}=(y_1,\dots,y_n)=f(x_1,\dots,x_n)$ and permutation $\pi=p(x_1,\dots,x_n)$ on the players' inputs.
At the end of this step, each player possesses a share of the outputs $\mathbf{y}=(y_1,\dots,y_n)$ and of the permutation $\pi$.
\item \label{itm:orderedOutput} {\bf Outputting $y_1,\dots,y_n$ in $n$ phases:}
The outputs will occur in $n$ phases: in the $i^{th}$ phase, player $\pi^{-1}(i)$ will learn his output.
In each phase, the players first run $G$ ``dummy rounds'' of communication.
A dummy round is a ``mini-protocol'' defined as follows (let this mini-protocol be denoted by $g_{\sf dum}$):
\begin{itemize}
\item each player initially sends the message $\chall$ to every other player;
\item each player responds to each $\chall$ he receives with a message $\resp$.
\end{itemize}
In each phase, after the dummy rounds have been completed, the parties will run a new instance of a general secure MPC protocol. 
In phase $i$:
\begin{itemize} 
\item Player $j$'s inputs to the protocol are: the shares of $\mathbf{y}$ and $\pi$ that he got in step \ref{itm:orderedPrecompute}, 
and a fresh random string $r_{i,j}$.
\item The functionality computed in each phase $i\in[n]$ is:
\begin{quote}\small\tt
for $j$ from 1 to $n$: if $\pi(j)=i$ then $z_{i,j}:=y_j\oplus r_{i,j}$ else $z_{i,j}=\bot\oplus r_{i,j}$. \\
output $z_i=(z_{i,1},\dots,z_{i,n})$.
\end{quote}
where $\bot$ is a special string that lies outside the output domain. 
\item To recover his output,
each player $j$ computes $y'_{i,j}=z_{i,j}\oplus r_{i,j}$ for all $i$. By construction, 
there is exactly one $i\in[n]$ for which $y'_{i,j}\neq\bot$, and that is equal to the output value $y_j$ for player $j$.
\end{itemize}
\end{enumerate}
\medskip\noindent
\textbf{Check-points.}
There are $n$ check-points. 
For $i\in[n]$, the check-point $C_i$ is at the end of the $i^{th}$ output phase, when $z_i$ is learned by all players.

\medskip\noindent
\textbf{In case of abort.}
When running the protocol for the honest majority setting, the honest players continue until the end of the protocol regardless of other players' behavior. 
When running the protocol for dishonest majority, if any party aborts in an output phase\footnotemark, then the honest players do not continue to the next phase.
\end{protocol}
\addtocounter{footnote}{-1}
\footnotetext{For the honest majority setting, we set $k=\lceil n/2\rceil$. For the dishonest majority setting, $k=n$.}
\stepcounter{footnote}
\footnotetext{Each output phase consists of an execution of the underlying general MPC protocol preceded by $G$ dummy rounds. 
If a party aborts before the completion of the $G$ dummy rounds, this fact will be detected by all parties in the dummy round in which the abort happens,
because every party is supposed to communicate with every other party in each dummy round.
If a party aborts at any time during (and before the end of) the execution of the underlying general MPC protocol, 
this fact will be detected by all honest parties by the end of the phase.}

\begin{theorem}\label{thm:simpleProtSecurity}
In the presence of honest majority, Protocol~\ref{prot:simple} securely realizes $\IdealFuncQueued$ clocked by $g_{\sf dum}$.
\end{theorem}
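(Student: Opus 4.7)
The plan is to verify each of the three conditions in Definition~\ref{def:security} in turn, leveraging the structure and security of Protocol~\ref{prot:ordered} (which was analyzed in Theorem~\ref{thm:orderedProtocol}), since Protocol~\ref{prot:simple} is essentially Protocol~\ref{prot:ordered} with $G$ dummy rounds of communication inserted before each output phase. Verifying condition~1 is immediate: the check-points $C_1,\dots,C_n$ are explicitly specified at the end of each output phase in the protocol description.

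For condition~2 (the timing requirement), I would argue as follows. Between check-points $C_{i}$ and $C_{i+1}$, the honest parties engage in exactly $G$ dummy rounds of the mini-protocol $g_{\sf dum}$, followed by a single general MPC execution for the output-unmasking step. By construction, completing these $G$ dummy rounds requires all participating parties to each sequentially execute $G$ instances of $g_{\sf dum}$ over the network, which trivially satisfies the $\Omega(G)$ requirement when taking $g = g_{\sf dum}$ as the clock protocol. (Because $g_{\sf dum}$ has $n'=n$ parties, I invoke the multi-party clause of condition~2; no party can locally shortcut the communication since every dummy round requires real challenge-response interaction across the network.)

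For condition~3 (simulatability), my plan is to reduce to Theorem~\ref{thm:orderedProtocol} via hybrid composition. Let $\rho_0$ denote the MPC execution in step~\ref{itm:orderedPrecompute} and $\rho_i$ the MPC execution in the $i$th output phase of step~\ref{itm:orderedOutput}; let $\pi_j$ be the concatenation of $\rho_0,\rho_1,\dots,\rho_j$, interleaved with the $G$ dummy rounds preceding each $\rho_i$. For any PPT real-world adversary $\Adv$ corrupting a subset $S\subset[n]$, I construct an ideal adversary $\Sim$ as follows: $\Sim$ first invokes the simulator guaranteed by Theorem~\ref{thm:orderedProtocol} for the underlying ordered-MPC security (which itself follows from modular composition \cite{Canetti00}), yielding simulated transcripts for $\rho_0,\dots,\rho_j$ at each check-point $C_j$; then $\Sim$ completes the view by simulating the dummy-round transcripts, which is trivial since every dummy-round message is the fixed string $\chall$ or $\resp$ with no dependence on any private input. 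Concatenating the two pieces yields $\view_{\Sim,j}$ indistinguishable from $V_{\Adv,j}$ at every check-point.

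The main obstacle is a subtle one concerning the distinguisher class in condition~3: the definition only requires indistinguishability against distinguishers whose run-time is significantly less than $j\cdot\time_{\cal G}()$, which is automatically satisfied for \PPT{} distinguishers in this protocol since $G=\poly(\kappa)$, so the adversary's run-time bound matches the standard MPC security guarantee. (In contrast, later in the paper when dealing with time-lock based protocols, this timing-bounded distinguisher condition will be essential and nontrivial; here it is essentially vacuous.) Finally, I note that honest majority combined with the threshold $k=\lceil n/2\rceil$ for the secret-sharing ensures that the protocol proceeds through all $n$ output phases regardless of corrupt-party behavior, so complete output delivery with the correct spacing between check-points is guaranteed.
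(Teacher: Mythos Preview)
Your proposal is correct and follows essentially the same approach as the paper: reduce to Theorem~\ref{thm:orderedProtocol} by observing that Protocol~\ref{prot:simple} is Protocol~\ref{prot:ordered} with deterministic dummy rounds inserted, simulate the MPC portions via the ordered-MPC simulator, and append the fixed $\chall$/$\resp$ transcripts; then note that condition~\ref{itm:securityDelay} holds trivially because the parties literally execute $G$ runs of $g_{\sf dum}$ between checkpoints. Your additional remarks about the distinguisher-runtime clause being vacuous here and about honest majority guaranteeing all phases complete are correct elaborations that the paper omits, but the core argument is the same.
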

\begin{proof}
Let $\Adv$ be any \PPT{} adversary attacking Protocol \ref{prot:simple} by corrupting a subset of players $S\subset[n]$, and
let $$\RealViewQueued_\Adv=\left((V_{\Adv,1},V_{1,1},\dots,V_{n,1}),\dots,(V_{\Adv,n},V_{1,n},\dots,V_{n,n})\right)$$
be the tuple consisting of the adversary $\Adv$'s outputted views along with the views of the real-world parties 
(as specified in the description of $\IdealFuncQueued$).
In order to show that condition~\ref{itm:simulatability} of the security definition (Definition~\ref{def:security}) holds,
we need to show that there is a \PPT{} ideal adversary $\Sim$ which, given access to $\IdealFuncQueued$ and corrupting the same subset $S$ of players, 
can output views $\view_{\Sim,j}$ such that $\RealViewQueued_\Adv \compIndist \IdealViewQueued$.

Recall that the adversary's view can be any function of the inputs of the corrupt parties 
and the messages that the corrupt parties see during the protocol execution. In particular, it is sufficient to show that
there is an ideal adversary $\Sim$ which can output views $\view_{\Sim,j}$ which are indistinguishable from
the transcript of all the messages that the corrupt parties see during the real protocol execution.

Protocol~\ref{prot:simple} consists of sequential executions of the underlying general MPC protocol and the mini-protocol $g_{\sf dum}$.
When the mini-protocol executions are removed from Protocol \ref{prot:simple}, the resulting protocol is identical to Protocol \ref{prot:ordered}.
Hence, by Theorem \ref{thm:orderedProtocol}, 
there is an ideal adversary $\Sim'$ which, given access to $\IdealFuncQueued$ and corrupting the same subset $S$ of players, 
can output views $\view_{\Sim',j}$ such that which are indistinguishable from
the transcript of all the messages that the corrupt parties see \emph{during the $n+1$ executions of the underlying general MPC protocol}
within Protocol \ref{prot:simple}.
The only other messages that are sent in Protocol 2 are the ``dummy'' messages
$\chall$ and $\resp$, which are fixed messages that do not depend on the players' inputs.
In fact, the transcript of an execution of $g_{\sf dum}$ is a deterministic sequence of $\chall$ and $\resp$.
It follows that there exists an ideal adversary $\Sim$ which, by calling $\Sim'$ 
and adding the deterministic transcript corresponding to each execution of $g_{\sf dum}$,
can output views $\view_{\Sim,j}$ which are indistinguishable from 
the transcript of all the messages that the corrupt parties see during the real execution of Protocol \ref{prot:simple}.

Finally, it remains to show that condition~\ref{itm:securityDelay} of the security definition (Definition~\ref{def:security}) is satisfied.
The players are literally running $g$ over the MPC network $G$ times in between issuing outputs,
so it is clear that condition~\ref{itm:securityDelay} holds.
\end{proof}

One downside of the simple solution above is that it requires all (honest) parties to be online 
and communicating until the last player receives his output.
To address this, in Section \ref{sec:protocols} we propose an alternative solution based on \emph{timed-release cryptography}, 
at the cost of an additional assumption that all players have comparable computing speed (within a logarithmic factor).

\subsection{Realizing timed-delay MPC with time-lock puzzles}\label{sec:protocols}

Informally, a time-lock puzzle is a primitive which allows ``locking'' of data, such that it will be released after a pre-specified
time delay, and no earlier. Our next protocol, instead of issuing outputs to players in the clear, gives to each party
his output \emph{locked} into a time-lock puzzle; and in order to enforce the desired ordering, 
the delays required to unlock the puzzles are set to be an increasing sequence.
We first give the definition of time-lock puzzles (in Section \ref{sec:timeLock}) then describe and prove security of
our time-lock-based protocol (in Section \ref{sec:protocolDescriptionTimeLocks}).

\subsubsection{Time-lock puzzles}\label{sec:timeLock}

The delayed release of data in MPC protocols can be closely linked to the problem of ``timed-release crypto'' in general,
which was introduced by \cite{May} and constructed first by \cite{RSW96} with their proposal of \emph{time-lock puzzles}.
We assume time-lock puzzles with a particular structure (that is present in all known implementations): namely,
 the passage of ``time'' will be measured by sequential evaluations of a function ($\TimeStep$).
Unlocking a $t$-step time-lock puzzle can be considered analogous to following a chain of $t$ pointers,
at the end of which there is a special value $x_t$ (e.g. a decryption key) that allows retrieval of the locked data.

\begin{center}
\ifdoublecolumn
\medskip
\begin{tikzpicture}[->,>=stealth',shorten >=1pt,auto,node distance=2cm,
  thick,main node/.style={draw},key node/.style={draw,circle,fill=yellow!20},data node/.style={draw,fill=yellow!20}]
  \node[main node] (1) {$x$};
  \node[main node] (2) [right of=1] {$x_1$};
  \node[main node] (3) [right of=2] {$x_2$};
  \node (4) [right of=3] {$~\dots~$};
  \node[key node] (5) [right of=4] {$x_t$};

  \node[data node,below=0.8cm of 5] (6) {\footnotesize locked data};

  \path[every node/.style={font=\sffamily\scriptsize}]
    (1) edge node [above] {\scriptsize $x_1=f(x)$} (2)
    (2) edge node [above] {\scriptsize $x_2=f(x_1)$} (3)
    (3) edge node [above] {\scriptsize $x_3=f(x_2)$} (4)
    (4) edge node [above] {\scriptsize $x_n=f(x_{n-1})$} (5)
    (5) edge[dotted] node [left] {\scriptsize unlock} (6);
\end{tikzpicture}
\medskip
\else
\medskip
\begin{tikzpicture}[->,>=stealth',shorten >=1pt,auto,node distance=2.7cm,
  thick,main node/.style={draw},key node/.style={draw,circle,fill=yellow!20},data node/.style={draw,fill=yellow!20}]
  \node[main node] (1) {$x$};
  \node[main node] (2) [right of=1] {$x_1$};
  \node[main node] (3) [right of=2] {$x_2$};
  \node (4) [right of=3] {$~\dots~$};
  \node[key node] (5) [right of=4] {$x_t$};

  \node[data node,below=0.8cm of 5] (6) {locked data};

  \path[every node/.style={font=\sffamily\small}]
    (1) edge node [above] {$x_1=f(x)$} (2)
    (2) edge node [above] {$x_2=f(x_1)$} (3)
    (3) edge node [above] {$x_3=f(x_2)$} (4)
    (4) edge node [above] {$x_n=f(x_{n-1})$} (5)
    (5) edge[dotted] node [left] {unlock} (6);
\end{tikzpicture}
\medskip
\fi
\end{center}

\begin{definition}[Time-lock puzzle scheme]
A \emph{time-lock puzzle scheme} is a tuple of \PPT{} algorithms 
\ifdoublecolumn
$$T=(\Lock,\TimeStep,\Unlock)$$
\else
$T=(\Lock,\TimeStep,\Unlock)$ 
\fi
as follows:
\begin{itemize}
\item $\Lock(1^\kappa,d,t)$ takes parameters $\kappa\in\NN$ the security parameter, $d\in\{0,1\}^\ell$ the data to be locked,
and $t\in\NN$ the number of steps needed to unlock the puzzle, and outputs a time-lock puzzle 
$P=(x,t,b,a)\in\{0,1\}^n\times\NN\times\{0,1\}^{n''}\times\{0,1\}^{n'}$ 
where $\ell,n,n',n''=\poly(\kappa)$.
\item $\TimeStep(1^\kappa,x',a')$ takes parameters $\kappa\in\NN$ the security parameter, a bit-string $x'\in\{0,1\}^n$, 
and auxiliary information $a'$, and outputs a bit-string $x''\in\{0,1\}^n$.
\item $\Unlock(1^\kappa,x',b')$ takes parameters $\kappa\in\NN$ the security parameter, 
a bit-string $x'\in\{0,1\}^n$, and auxiliary information $b'\in\{0,1\}^{n'}$,
and outputs some data $d'\in\{0,1\}^\ell$.
\end{itemize}
\end{definition}


To unclutter notation, we will sometimes omit the initial security parameter of these functions (writing e.g. simply $\Lock(d,t)$).
We now define some auxiliary functions.
For a time-lock puzzle scheme $T=(\Lock,\TimeStep,\Unlock)$ and $i\in\NN$, let $\IterateTimeStep^T_i$ denote the following function:
\ifdoublecolumn
\begin{gather*}
\IterateTimeStep^T(i,x,a)= \\ \underbrace{\TimeStep(\TimeStep(\dots (\TimeStep(x,a),a) \dots),a)}_{i}.
\end{gather*}
\else
$$\IterateTimeStep^T(i,x,a)=\underbrace{\TimeStep(\TimeStep(\dots (\TimeStep(x,a),a) \dots),a)}_{i}.$$
\fi
Define $\UnlockFull^T$ to be the following function:
$$\UnlockFull^T((x,t,b,a))=\Unlock(\IterateTimeStep^T(t,x,a),b),$$
that is, the function that should be used to unlock a time-lock puzzle outputted by $\Lock$.

The following definitions formalize correctness and security for time-lock puzzle schemes.

\begin{definition}[Correctness]
A time-lock puzzle scheme $T=(\Lock,\TimeStep,\Unlock)$ is \emph{correct}
if the following holds (where $\kappa$ is the security parameter):
$$\Pr_{(x,t,b,a)\larr\Lock(d,t)}\left[\UnlockFull^T((x,t,b,a))\neq d \right]\leq\negl(\kappa).$$
\end{definition}

\begin{definition}[Security]
Let $T=(\Lock,\TimeStep,\Unlock)$ be a time-lock puzzle scheme. 
$T$ is \emph{secure} if it holds that:
for all $d,d'\in\{0,1\}^\ell, t=\poly(\kappa)$, if there exists an adversary $\Adv$ 
that solves the time-lock puzzle $\Lock(d,t)$, that is,
$$\Pr_{P\gets \Lock(d,t)}[\Adv(P)=d]=\eps\mbox{ for some non-negligible }\eps,$$
then for each $j\in[t]$, there exists an adversary $\Adv_j$ such that
$$\Pr_{P'\gets \Lock(d',j)}\left[\Adv_j(P')=d'\right] \geq1-\negl(\kappa),\mbox{ and}$$
$$\Pr_{\substack{P \gets \Lock(d,t),\\ P'\gets \Lock(d',j)}}\left[\time_\Adv(P) \geq (t/j) \cdot \time_{\Adv_j}(P')~|~\Adv(P)=d\right] \geq 1-\negl(\kappa).$$
\end{definition}

\subsubsection{Protocol based on time-lock puzzles}\label{sec:protocolDescriptionTimeLocks}

Because of the use of time-lock puzzles by different parties in the protocol that follows,
we require an additional assumption 
that all players have comparable computing power (within a logarithmic factor).


\paragraph{Relative-Delay Assumption.}
The difference in speed of performing computations between any two parties $i,j\in[n]$ is at most a factor of $B=O(\log(\kappa))$.

\begin{protocol}{Timed-delay MPC with time-lock puzzles}\label{prot:timeLock}
\noindent\textbf{Public parameters.} 
$\kappa\in\NN$, the security parameter; $n\in\NN$, the number of parties;
$f:(\{0,1\}^*)^n\rarr(\{0,1\}^*)^n$, the function to be computed;
$p:(\{0,1\})^*\rarr([n]\rarr[n])$, the ordering function;
$B=O(\log(\kappa))$, the maximum factor of difference between any two parties' computing power;
$G=\poly(\kappa)$, the number of time-steps between the issuance of one party's output and the next;
and $T=\{\Lock,\TimeStep,\Unlock\}$ a time-lock puzzle scheme.

\medskip\noindent
{\sc Inputs.} 
Each party $i$ has input $x_i$.

\medskip\noindent
{\sc Protocol steps.}
Let $(y_1,\dots,y_n)=f(x_1,\dots,x_n)$ and $\pi=p(x_1,\dots,x_n)$.
Define $t_1=1$ and $t_{i+1}=(B\cdot G+1)\cdot t_{i}$ for $i\in[n-1]$. 
Compute $(P_1,\dots,P_n)$, where
each $P_i=(x_i,t_{\pi(i)},a_i,b_i)$ is a time-lock puzzle computed as
$$P_i=\Lock(y_i\oplus r_i,t_{\pi(i)}),$$
where each $r_i$ is a random string provided as input randomness by party $i$.

\medskip\noindent
{\sc Outputs.} 
For each $i\in[n]$, the puzzle $P_i$ is outputted to party $i$.
The players all receive their respective outputs at the same time, 
then recovers his output $y_i$ by solving his time-lock puzzle, and finally ``unmasking'' the result by XORing with his random input $r_i$.

\medskip\noindent
\textbf{Check-points.}
There are $n$ check-points. For $i\in[n]$, the check-point $C_i$ is the event of party $\pi(i)$ learning his eventual output $y_{\pi(i)}$
(i.e. when he finishes solving his time-lock puzzle).

\end{protocol}

For the following theorem, 
we assume that each player $i$ uses the optimal algorithm to solve his puzzle $P_i$ 
that outputs the correct answer.
Without this assumption, any further protocol analysis would not make sense: there can always be a ``lazy'' player who willfully uses
a very slow algorithm to solve his puzzle, who will as a result learn his eventual output much later in the order than he could otherwise have done.
The property that we aim to achieve is that every player \emph{could} learn his output at his assigned position in the ordering $\pi$,
with appropriate delays before and after he learns his output.

\begin{theorem}\label{thm:timeLockProtocol}
Suppose that the Relative-Delay Assumption holds, 
and each player $i$ uses the optimal algorithm to solve his puzzle $P_i$ 
that outputs (with overwhelming probability) the correct answer.
Then, Protocol~\ref{prot:timeLock} securely realizes $\IdealFuncQueued$ 
when there is an honest majority.
\end{theorem}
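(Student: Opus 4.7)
The plan is to verify each of the three conditions of Definition \ref{def:security} separately, relying on the standard MPC security of step~1 (for the pre-computation of puzzles) and the security of the time-lock puzzle scheme for the output phase. The first condition is immediate: the $n$ check-points $C_1,\dots,C_n$ are specified in the protocol as the events ``party $\pi(i)$ finishes solving $P_{\pi(i)}$''.

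For condition~2, I would take the clock function to be $g=\TimeStep$. Measuring CPU time in units of one $\TimeStep$ evaluation and normalizing so that the fastest party performs one operation per unit time, the Relative-Delay Assumption says the slowest party performs one operation every $B$ units of time. The earliest that any party can finish $P_{\pi(i+1)}$ (difficulty $t_{i+1}=(BG+1)\,t_i$) is therefore at real time $t_{i+1}$, while the latest that the slowest party finishes $P_{\pi(i)}$ (difficulty $t_i$) is at real time $B\cdot t_i$. The gap $t_{i+1}-Bt_i=(B(G-1)+1)\,t_i$ is large enough that the slowest party can still perform $(G-1+1/B)\,t_i=\Omega(G)$ sequential $\TimeStep$ evaluations between $C_i$ and $C_{i+1}$.

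For condition~3, I would build an ideal simulator $\Sim$ through a hybrid argument. The honest-majority assumption guarantees that step~1's MPC delivers outputs (the puzzles $P_i$) to all honest parties regardless of adversarial deviation, so the adversary's view of step~1 reduces to a standard MPC transcript together with the puzzles $\{P_i\}_{i\in S}$ for corrupt $i$, and can be produced by the MPC simulator of \cite{GMW87}. At checkpoint $C_j$, the simulator already knows $y_{\pi(k)}$ for every $k\leq j$ with $\pi(k)\in S$ (supplied by $\IdealFuncQueued$ when output phase $k$ completed), and uses these to construct real puzzles $P_{\pi(k)}=\Lock(y_{\pi(k)}\oplus r_{\pi(k)},t_k)$, threading through the adversary's own randomness $r_{\pi(k)}$ that was fed into the MPC simulator in step~1. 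For corrupt parties at future positions $\pi(k)$ with $k>j$, the simulator substitutes a dummy puzzle $\Lock(0,t_k)$. Indistinguishability of each substitution follows from the security of the time-lock puzzle scheme, together with the observation that the distinguisher at $C_j$ has runtime only $j\cdot\time_{\mathcal{G}}()=O(jG)$, while every substituted puzzle has difficulty $t_k\geq t_{j+1}=(BG+1)^j$, which the distinguisher cannot afford to solve.

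The main obstacle is the quantitative hybrid argument for the puzzle substitutions: across $O(n)$ hybrids, we need each switch from a real to a dummy puzzle to fool a distinguisher whose runtime budget at $C_j$ grows linearly in $j$, while the puzzles substituted have difficulty growing exponentially as $(BG+1)^j$. This gap (exponential versus linear) gives the needed security slack, but care is required so that the total distinguishing advantage summed over hybrids remains negligible. A secondary subtlety is the adversary-chosen randomness $r_i$ inside each puzzle for corrupt $i$: the simulator must use the exact value $r_i$ that the adversary supplied, so consistency is maintained by extracting $r_i$ from the MPC simulator's transcript rather than sampling fresh randomness.
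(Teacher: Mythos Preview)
Your proposal is essentially correct and follows the same overall structure as the paper: verify the delay condition using the Relative-Delay Assumption and the puzzle-difficulty recursion $t_{i+1}=(BG+1)t_i$, then build the simulator by invoking the underlying MPC simulator on puzzles in which the ``future'' ones (positions $>j$) are replaced by dummies, with indistinguishability coming from time-lock security against a time-bounded distinguisher.

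The one substantive difference is the choice of clock function. You take $g=\TimeStep$ and argue directly that the fastest party needs real time $t_{i+1}$ to finish $P_{\pi(i+1)}$. But the time-lock security definition is a \emph{relative} statement: it does not assert that the optimal solver takes $\Omega(t)$ $\TimeStep$-evaluations, only that any solver for a $t$-puzzle yields a proportionally faster solver for a $j$-puzzle. The paper sidesteps this by letting the clock function be $\Adv'_{i^*}$, the derived solver for $\Lock(0^\ell,1)$ obtained by applying the security definition to the actual (optimal) algorithm party $i^*$ is running; then $\tau_i/t_i$ is, by definition, the time for one clock tick, and the arithmetic $(t_{i+1}-t_i)/B=G\cdot t_i\geq G$ goes through without any implicit lower-bound assumption on the optimal solving time. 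Your computation $t_{i+1}-Bt_i$ is a sound worst-case gap, but to convert it into ``$\Omega(G)$ evaluations of $\TimeStep$'' you are tacitly assuming the optimal solver cannot beat iterated $\TimeStep$, which is stronger than what the definition provides. The paper's choice of clock function is what makes the argument go through from the definition alone; otherwise the two proofs coincide (the paper's dummy puzzle is $\Lock(r_i,t_{\pi(i)})$ rather than $\Lock(0,t_k)$, but this is immaterial).
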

\begin{proof}
First, we prove that condition~\ref{itm:securityDelay} of the security definition (Definition~\ref{def:security}) is satisfied.
Let $\Adv_i$ denote the algorithm that party $i$ uses to solve his time-lock puzzle,
and let the time at which party $i$ learns his answer $y_i$ be denoted by $\tau_i = \time_{\Adv_i}(P_i)$.
By the security of the time-lock puzzles, there exists an algorithm $\Adv'_i$ that player $i$ could use to solve the puzzle $\Lock(0^\ell,1)$ in time $\tau_i/t_i$.
Moreover, by the Relative-Delay Assumption, it holds that no player can solve the puzzle $\Lock(0^\ell,1)$ more than $B$ times faster than another player:
that is, $\max_i(\tau_i/t_i)\leq B\cdot \min_i(\tau_i/t_i)$.
It follows that even the slowest player (call him $i^*$) would be able to run $t_i/B$ executions of $\Adv'_{i^*}$ within time $\tau_i$, for any $i$.

Without loss of generality, assume that the ordering function $p$ is the identity function.
Consider any consecutive pair of checkpoints $C_i,C_{i+1}$. These checkpoints occur at times $\tau_i$ and $\tau_{i+1}$, by definition.
We have established that in time $\tau_{i}$, player $i^*$ can run $t_{i}/B$ executions of $\Adv'_{i^*}$,
and in time $\tau_{i+1}$, he can run $t_{i+1}/B$ executions of $\Adv'_{i^*}$.
It follows that in between the two checkpoints (i.e. in time $\tau_{i+1}-\tau_i$), he can run $(t_{i+1}-t_{i})/B$ executions of $\Adv'_{i^*}$.
Substituting in the equation $t_{i+1}=(B\cdot G+1)\cdot t_i$ from the protocol definition, 
we get that player $i^*$ can run $G\cdot t_i$ executions of $\Adv'_{i^*}$ between checkpoints $C_i$ and $C_{i+1}$.
Since $t_i\geq1$ for all $i$, this means that $i^*$ can run at least $G$ executions of $\Adv'_{i^*}$ between \emph{any} consecutive pair of checkpoints.
Hence, condition~\ref{itm:securityDelay} holds.

We now prove condition \ref{itm:simulatability}.
Let $\cal G$ be the algorithm that evaluates $\Adv'_{i^*}$ sequentially $G$ times on random inputs.
It is sufficient to show that for any adversary $\Adv$ attacking the protocol by corrupting a subset $S\subset[n]$ of players,
which outputs a view $V_{\Adv,j}$ at each checkpoint $j$ which is the \emph{transcript of all messages that it has seen so far},
there is an ideal adversary $\Sim$ which outputs views $\view_{\Sim,1},\dots,\view_{\Sim,n}$ such that for any $j\in[n]$, 
for any distinguisher $D$ whose run-time satisfies the conditions in Definition \ref{def:security}, item \ref{itm:simulatability},
\ifdoublecolumn
\begin{gather*}
\left|\Pr\left[D(\view_{\Sim,j},\view_{1,j},\dots,\view_{n,j})=1\right]
-\Pr\left[D(V_{\Adv,j},V_{1,j},\dots,V_{n,j})=1\right]\right| \\ \leq\negl(\kappa).
\end{gather*}
\else
$$\left|\Pr\left[D(\view_{\Sim,j},\view_{1,j},\dots,\view_{n,j})=1\right]
-\Pr\left[D(V_{\Adv,j},V_{1,j},\dots,V_{n,j})=1\right]\right|\leq\negl(\kappa).$$
\fi

Recall that there are $n$ sequential output stages in the ideal functionality $\IdealFuncQueued$.
Consider an ideal adversary $\Sim$ attacking $\IdealFuncQueued$ by corrupt a set of parties $S\subset[n]$.
Let $\vec{inp}$ denote the vector of inputs and input randomness of the corrupt parties (note that these are known to $\Sim$).
Take any $i\in[n]$.
In the ideal protocol execution, $\Sim$ learns the following in the $\pi(i)^{th}$ output stage:
\begin{itemize}
\item nothing, if $i\notin S$; or
\item the input value $x_i$, the input randomness $r_i$, and the eventual output $y_{i}$ if $i\in S$.
\end{itemize}
Note that as a result, $\Sim$ learns $\pi(i)$ at output stage $\pi(i)$, for each $i\in S$.
The delay values $t_1,\dots,t_n$ are a fixed sequence of values independent of the parties' inputs, so they are known to $\Sim$.
Thus, at each check-point $j\in[n]$, the ideal adversary $\Sim$ can compute $n$ time-lock puzzles
$$
\hat{P}_{j,i}=\begin{cases}
\Lock(y_i\oplus r_i,t_{\pi(i)}) & \mbox{ if } 1\leq i\leq j \\
\Lock(r_i,t_{\pi(i)}) & \mbox{ if } j<i\leq n
\end{cases}.
$$
Let the ideal adversary $\Sim$ output the following view at each check-point $j$:
$$\view_{\Sim,j}=\Sim_j(\vec{inp},(\hat{P}_{j,1},\dots,\hat{P}_{j,n})),$$
where $\Sim_j$ is the ideal adversary (for the underlying general MPC protocol) that simulates the adversary's $j^{th}$ view $V_{\Adv,j}$.

We now analyze the distribution of the puzzles $\hat{P}_{j,i}$.
For the range $1\leq i\leq j$, the puzzle $\hat{P}_{j,i}$ is by definition identically distributed to 
the puzzle that is outputted to player $i$ in the real execution of Protocol \ref{prot:timeLock}.
Now take any $j\in[n]$, and let $D$ be any distinguisher whose run-time satisfies the conditions in Definition \ref{def:security}, item \ref{itm:simulatability}.
Recall that the players are assumed to solve the time-lock puzzles using the optimal algorithm.
Hence, it follows from the security of the underlying time-lock puzzle scheme 
that for any $i$ in the range $j<i\leq n$,
$$\left|\Pr\left[D(P_{j,i})=1\vphantom{\hat{P}}\right]-\Pr\left[D(\hat{P}_{j,i})=1\right]\right|\leq\negl(\kappa).$$
Since we defined the outputs of $\Sim$ to be 
\ifdoublecolumn
$$\view_{\Sim,j}=\Sim_j(\vec{inp},(\hat{P}_{j,1},\dots,\hat{P}_{j,n}))$$
\else
$\view_{\Sim,j}=\Sim_j(\vec{inp},(\hat{P}_{j,1},\dots,\hat{P}_{j,n}))$
\fi
 for $j\in[n]$, it follows that
\ifdoublecolumn
\begin{gather*}
\left|\Pr\left[D(\view_{\Sim,j},\view_{1,j},\dots,\view_{n,j})=1\right]
-\Pr\left[D(V_{\Adv,j},V_{1,j},\dots,V_{n,j})=1\right]\right| \\ \leq\negl(\kappa)
\end{gather*}
\else
$$\left|\Pr\left[D(\view_{\Sim,j},\view_{1,j},\dots,\view_{n,j})=1\right]
-\Pr\left[D(V_{\Adv,j},V_{1,j},\dots,V_{n,j})=1\right]\right|\leq\negl(\kappa)$$
\fi
as required.
We conclude that Protocol~\ref{prot:timeLock} securely realizes $\IdealFuncQueued$ clocked by $\Adv'_{i^*}$.
\end{proof}

A few remarks are in order.
In Protocol~\ref{prot:timeLock}, 
all the parties can stop interacting as soon as all the puzzles are outputted.
When the locking algorithm $\Lock(d,t)$ has run-time that is  independent of the delay $t$,
the run-time of Protocol~\ref{prot:timeLock} is also independent of the delay parameters.
(This is achievable using the \cite{RSW96} time-lock construction, for example.)
Alternatively, using a single time-line puzzle in place of the time-lock puzzles in Protocol~\ref{prot:timeLock}
can improve efficiency, since the time required to generate a time-line puzzle is dependent only on the longest delay $t_n$,
whereas the time required to generate $n$ separate time-lock puzzles depends on the sum of all the delays, $t_1+\dots+t_n$.

\subsection{Time-line puzzles}\label{sec:timeLine}
We now introduce the more general, novel definition of \emph{time-line} puzzles,
which can be useful for locking together many data items with different delays for a single recipient, or for locking
data for a group of people. 
In the latter case, it becomes a concern that computation speed will vary between parties: 
indeed, the scheme will be unworkable if some parties have orders of magnitude more computing power than others,
so some assumption is required on the similarity of computing power among parties, 
such as the Relative-Delay Assumption of Section \ref{sec:protocolDescriptionTimeLocks}.
When a time-line puzzle is given to a single recipient, then no additional assumptions are required.

We remark that time-line puzzles could be used (instead of a set of time-lock puzzles) to realize Protocol \ref{prot:timeLock}.
More generally, we present this new notion because we believe that time-line puzzles may be of independent interest as a timed-release primitive.

In some ways, a time-line puzzle can be thought of as a primitive that packages a sequence of time-lock puzzles together
into a unified system about which we can reason and give security guarantees.
However, time-line puzzles can also provide concrete advantages over a collection of time-lock puzzles.
For example, when issuing many time-lock puzzles to one recipient, the recipient has to run the computation for all of the puzzles in parallel:
that is, he does $O(m\cdot t)$ computation where $m$ is the number of data items and $t$ is the time-delay.
If instead he gets a time-line puzzle, he only has to run one puzzle's worth of computation in order to unlock all the data items:
that is, he does only $O(t)$ computation, just like for a single time-lock puzzle.

\begin{definition}[Time-line puzzles]\label{def:systemTimeLocks}
A \emph{time-line puzzle scheme} is a family of \PPT{} algorithms 
\ifdoublecolumn
$$\cT=\{(\Lock_m,\TimeStep_m,\Unlock_m)\}_{m\in\NN}$$
\else
$\cT=\{(\Lock_m,\TimeStep_m,\Unlock_m)\}_{m\in\NN}$ 
\fi
as follows:
\begin{itemize}
\item $\Lock_m(1^\kappa,(d_1,\dots,d_m),(t_1,\dots,t_m))$ takes parameters $\kappa\in\NN$ the security parameter, 
$(d_1,\dots,d_m)\in(\{0,1\}^\ell)^m$ the data items to be locked,
and $(t_1,\dots,t_m)\in\NN^m$ the number of steps needed to unlock each data item (respectively),
and outputs a puzzle 
\ifdoublecolumn
\begin{gather*}
P=(x,(t_1,\dots,t_m),(b_1,\dots,b_m),a) \\ \in\{0,1\}^n\times\NN\times(\{0,1\}^{n''})^m\times\{0,1\}^{n'}
\end{gather*}
\else
$$P=(x,(t_1,\dots,t_m),(b_1,\dots,b_m),a)\in\{0,1\}^n\times\NN\times(\{0,1\}^{n''})^m\times\{0,1\}^{n'}$$ 
\fi
where $n,n',n''=\poly(\kappa)$, and $a$ can be thought of as auxiliary information.
\item $\TimeStep_m(1^\kappa,x',a')$ takes parameters $\kappa\in\NN$ the security parameter, 
a bit-string $x'\in\{0,1\}^n$,
and auxiliary information $a'$, and outputs a bit-string $x''\in\{0,1\}^n$.
\item $\Unlock_m(1^\kappa,x',b')$ takes parameters $\kappa\in\NN$ the security parameter, 
a bit-string $x'\in\{0,1\}^n$, and auxiliary information $b'\in\{0,1\}^{n'}$,
and outputs some data $d'\in\{0,1\}^\ell$.
\end{itemize}
\end{definition}

In terms of the ``pointer chain'' analogy above, solving a time-line puzzle may be thought of as following a
pointer chain where not one but many keys are placed along the chain, at different locations $t_1,\dots,t_m$.
Each key $x_{t_i}$ in the pointer chain depicted below enables the ``unlocking'' of the locked data $b_i$:
for example, $b_i$ could be the encryption of the $i^{th}$ data item $d_i$ under the key $x_{t_i}$.

\begin{center}
\ifdoublecolumn
\medskip
\else
\medskip
\begin{tikzpicture}[->,>=stealth',shorten >=1pt,auto,node distance=1.4cm,
  thick,main node/.style={draw},key node/.style={draw,circle,fill=yellow!20},data node/.style={draw,fill=yellow!20}]
  \node[main node] (1) {$x$};
  \node[main node] (2) [right of=1] {$x_1$};
  \node (3) [right of=2] {$~\dots~$};
  \node[main node] (4) [right of=3] {$x_{t_1-1}$};
  \node[key node] (5) [right of=4] {$x_{t_1}$};
  \node[main node] (6) [right of=5] {$x_{t_1+1}$};
  \node (7) [right of=6] {$~\dots~$};
  \node[key node] (8) [right of=7] {$x_{t_2}$};
  \node (9) [right of=8] {$~\dots~$};
  \node[main node] (10) [right of=9] {$x_{t_m-1}$};
  \node[key node] (11) [right of=10] {$x_{t_m}$};

  \node[data node] (12) [below of=5] {$b_1$};
  \node[data node] (13) [below of=8] {$b_2$};
  \node[data node] (14) [below of=11] {$b_m$};

  \path[every node/.style={font=\sffamily\small}]
    (1) edge node [above] {} (2)
    (2) edge node [above] {} (3)
    (3) edge node [above] {} (4)
    (4) edge node [above] {} (5)
    (5) edge node [above] {} (6)
    (6) edge node [above] {} (7)
    (7) edge node [above] {} (8)
    (8) edge node [above] {} (9)
    (9) edge node [above] {} (10)
    (10) edge node [above] {} (11);
  \path[every node/.style={font=\sffamily\small}]
    (5) edge[dotted] node [left] {unlock} (12)
    (8) edge[dotted] node [left] {unlock} (13)
    (11) edge[dotted] node [left] {unlock} (14);
\end{tikzpicture}
\medskip
\fi
\end{center}

Using similar notation to that defined for time-lock puzzles: for a time-line puzzle scheme $\cT$,
let $\IterateTimeStep^{\cT}_{m}$ denote the following function:
\ifdoublecolumn
\begin{gather*}
\IterateTimeStep^{\cT}_{m}(i,x,a)= \\ \underbrace{\TimeStep_m(\TimeStep_m(\dots (\TimeStep_m(x,a),a) \dots),a)}_{i}.
\end{gather*}
\else
$$\IterateTimeStep^{\cT}_{m}(i,x,a)=\underbrace{\TimeStep_m(\TimeStep_m(\dots (\TimeStep_m(x,a),a) \dots),a)}_{i}.$$
\fi
Define $\UnlockFull^{\cT}_{m,i}$ to be the following function:
\ifdoublecolumn
\begin{gather*}
\UnlockFull^{\cT}_{m,i}((x,t_i,b_i,a))= \\ \Unlock_m(\IterateTimeStep^{\cT}_{m}(t_i,x,a),b_i),
\end{gather*}
\else
$$\UnlockFull^{\cT}_{m,i}((x,t_i,b_i,a))=\Unlock_m(\IterateTimeStep^{\cT}_{m}(t_i,x,a),b_i),$$
\fi
that is, the function that should be used to unlock the $i^{th}$ piece of data locked by a time-line puzzle which was generated by $\Lock_m$.
We now define correctness and security for time-line puzzle schemes.

\begin{definition}[Correctness]
A time-line puzzle scheme $\cT$ is \emph{correct} if for all $m=\poly(\kappa)$
and for all $i\in[m]$, it holds that
$$\Pr_{(x,\vec{t},\vec{b},a)\larr\Lock_m(\vec{d},\vec{t})}\left[\UnlockFull^\cT_{i}((x,t_i,b_i,a))\neq d_i \right]\leq\negl(\kappa),$$
where $\kappa$ is the security parameter, $\vec{d}=(d_1,\dots,d_m)$, and $\vec{t}=(t_1,\dots,t_m)$.
\end{definition}

	
Security for time-line puzzles involves more stringent requirements than security for time-lock puzzles.
We define security in terms of two properties which must be satisfied: \emph{timing} and \emph{hiding}.
The timing property is very similar to the security requirement for time-lock puzzles,
and gives a guarantee about the relative amounts of time required to solve different time-lock puzzles.
The hiding property ensures (informally speaking) that the ability to unlock any given data item that is
locked in a time-line puzzle does not imply the ability to unlock any others. 
The security definition (Definition~\ref{def:timeLineSec}, below) refers to the following security experiment.

\begin{framed}
\begin{center}
The experiment $\HidingExp_{\Adv,\cT}(\kappa)$
{\small \begin{enumerate}
	\item $\Adv$ outputs $m=\poly(\kappa)$ and data vectors $\vec{d}_0,\vec{d}_1\in(\{0,1\}^\ell)^m$ and a time-delay vector $\vec{t}\in\NN^m$.
	\item The challenger samples $(\beta_1,\dots,\beta_m)\larr \{0,1\}^m$, computes the time-line puzzle 
			$(x,\vec{t}, \vec{b}, a)=\Lock_m(1^\kappa,((d_{\beta_1})_1,\dots,(d_{\beta_m})_m), \vec{t})$, and sends $(x,a)$ to $\Adv$.
	\item $\Adv$ sends a query $i\in[m]$ to the challenger. The challenger responds by sending $b_i$ to $\Adv$. 
			This step may be repeated up to $m-1$ times. Let $I$ denote the set of queries made by $\Adv$.
	\item $\Adv$ outputs $i'\in[m]$ and $\beta'\in\{0,1\}$.
	\item The output of the experiment is $1$ if $i'\notin I$ and $\beta' = \beta_{i'}$. Otherwise, the output is $0$.
\end{enumerate} }
\end{center}
\end{framed}

\begin{definition}[Security]\label{def:timeLineSec}
Let 
\ifdoublecolumn
$$\cT=\{(\Lock_m,\TimeStep_m,\Unlock_m)\}_{m\in\NN}$$
\else
$\cT=\{(\Lock_m,\TimeStep_m,\Unlock_m)\}_{m\in\NN}$ 
\fi
be a time-line puzzle scheme. 
$T$ is \emph{secure} if it satisfies the following two properties.

\begin{itemize}
\item {\sc Timing:}
For all $m=\poly(\kappa)$ and $\vec{d},\vec{d}'\in(\{0,1\}^\ell)^m$ and $\vec{t}=(t_1,\dots,t_m)$, 
if there exists an adversary $\Adv$ that solves any one of the puzzles defined by the time-line, that is,
\ifdoublecolumn
$$\Pr_{P \gets \Lock_m(\vec{d},\vec{t})}[\Adv(P)=d_i]=\eps$$
for some non-negligible $\eps$ and some $i\in[m]$,
\else
$$\Pr_{P \gets \Lock_m(\vec{d},\vec{t})}[\Adv(P)=d_i]=\eps\mbox{ for some non-negligible }\eps\mbox{ and some }i\in[m],$$
\fi
then for all $j\in[t_i]$ and all $\vec{t'}\in[t_m]^m$, there exists an adversary $\Adv_{j,\vec{t'}}$ such that
$$\Pr_{P' \gets \Lock_m(\vec{d}',\vec{t'})}[\Adv_{j,\vec{t'}}(P')=d_j] \geq1-\negl(\kappa),\mbox{ and}$$
\ifdoublecolumn
\begin{align*}
\Pr_{\substack{P \gets \Lock_m(\vec{d},\vec{t}) \\ P' \gets \Lock_m(\vec{d}',\vec{t}')}}[ & \time_\Adv(P) \geq (t'_j/t_i) \cdot \time_{\Adv_{j,\vec{t'}}}(P')\\ ~|~ & \Adv(\Lock_m(\vec{d},\vec{t}))=d_i] \geq 1-\negl(\kappa).
\end{align*}
\else
$$\Pr_{\substack{P \gets \Lock_m(\vec{d},\vec{t}) \\ P' \gets \Lock_m(\vec{d}',\vec{t}')}}[\time_\Adv(P) \geq (t'_j/t_i) \cdot \time_{\Adv_{j,\vec{t'}}}(P')~|~\Adv(\Lock_m(\vec{d},\vec{t}))=d_i] \geq 1-\negl(\kappa).$$
\fi
\item {\sc Hiding:}
For all \PPT{} adversaries $\Adv$, it holds that
$$\Pr[\HidingExp_{\Adv,\cT}(\kappa) = 1] \leq 1/2 + \negl(\kappa).$$
\end{itemize}
\end{definition}

In Appendix \ref{appx:timeLine}, we describe and prove the security of two constructions of time-line puzzle schemes.
One of these schemes is based on a concrete assumption (specifically, on the sequentiality of modular exponentiation, 
like the time-lock puzzles of \cite{RSW96}), whereas the other is based on the existence of a 
``black-box'' inherently-sequential hash function.

\section*{Acknowledgements}

We would like to thank Yehuda Lindell for an interesting discussion on the nature of fairness in multiparty computation,
and we are grateful to Juan Garay, Bj\"{o}rn Tackmann, and Vassilis Zikas for an illuminating discussion about measures of partial fairness in MPC.
We thank Silvio Micali and Ron Rivest for helpful comments about the data-sharing model.

\printbibliography

\begin{appendix}

\section{MPC security definition}\label{appx:mpcSecurity}

\begin{idealfunc}{$\IdealFuncMPC$}
	In the ideal model, a trusted third party $T$ is given the inputs, computes the function $f$ on the inputs, 
	and outputs to each player $i$ his output $y_i$.
	In addition, we model an ideal process adversary $\Sim$ who attacks the protocol by corrupting players in the ideal setting.

	\medskip\noindent
	\textbf{Public parameters.} 
	$\kappa\in\NN$, the security parameter; $n\in\NN$, the number of parties; and
	$f:(\{0,1\}^*)^n\rarr(\{0,1\}^*)^n$, the function to be computed.

	\medskip\noindent
	\textbf{Private parameters.}
	Each player $i\in[n]$ holds a private input $x_i\in\{0,1\}^*$.

	\begin{enumerate}
	\item
	{\sc Input.}
	Each player $i$ sends his input $x_i$ to $T$.

	\item
	{\sc Computation.}
	$T$ computes $(y_1,\dots,y_n)=f(x_1,\dots,x_n)$.

	\item
	{\sc Output.}
	For each $i\in[n]$, $T$ sends the output value $y_i$ to party $i$.

	\item
	{\sc Output of views.}
	After the protocol terminates,
	each party produces an output, as follows.
	Each uncorrupted party $i$ outputs $y_i$ if he
	has received his output, or $\bot$ if not. Each corrupted party outputs $\bot$.
	Additionally, the adversary $\Sim$ outputs an arbitrary function of the information that he has learned during the execution of the ideal protocol.

	Let the output of party $i$ be denoted by $\view_{i}$, 
	and let the view outputted by $\Sim$ be denoted by $\view_{\Sim}$.
	Let $\IdealViewMPC$ denote the collection of all the views:
	$$\IdealViewMPC=\left(\view_{\Sim},\view_{1},\dots,\view_{n}\right).$$
	\end{enumerate}
\end{idealfunc}

\begin{definition}[Security]\label{def:mpcSecurity}
A multi-party protocol $F$ is said to securely realize $\IdealFuncMPC$ if for any \PPT{} adversary $\Adv$ attacking the protocol $F$ by corrupting a subset of players $S\subset[n]$,
there is a \PPT{} ideal adversary $\Sim$
which, attacking $\IdealFuncMPC$ by corrupting the same subset $S$ of players, 
can output a view $\view_{\Sim}$ such that
$$V_\Adv \compIndist \view_{\Sim},$$
where $V_{\Adv}$ is the view outputted by the real-world adversary $\Adv$
(this may be an arbitrary function of the information that $\Adv$ learned in the protocol execution).

\end{definition} 
\section{Secret-sharing schemes}\label{appx:secretSharing}

We recall the standard definition of a secret-sharing scheme.

\begin{definition}[Secret sharing scheme \cite{Shamir:1979:SS:359168.359176}]
A \emph{$k$-out-of-$N$ secret sharing scheme} is a pair of algorithms $(\Share,\Recon)$ as follows.
$\Share$ takes as input a secret value $s$ and outputs a set of \emph{shares} $S=\{s_1,\dots,s_N\}$ such that
the following two properties hold.
\begin{itemize}
\item \emph{Correctness}: For any subset $S'\subseteq S$ of size $|S'|\geq k$, it holds that $\Recon(S')=s$, and
\item \emph{Privacy}: For any subset $S'\subseteq S$ of size $|S'|< k$, it holds that $H(s)=H(s|S')$, where $H$ denotes the binary entropy function.
\end{itemize}
$\Recon$ takes as input a (sub)set $S'$ of shares and outputs:
$$\Recon(S')=\begin{cases}\bot & \mbox{if} \qquad |S'|<k \\ s & \mbox{if} \qquad\exists S \mbox{ s.t. } S'\subseteq S \mbox{ and } \Share(s)=S \mbox{ and } |S'|\geq k \end{cases}.$$
\end{definition}
\ifproportionalfairness
	\input{APPX-mac}
\fi
\section{Constructions of time-line puzzles}\label{appx:timeLine}

\subsection{Black-box construction from inherently-sequential hash functions}\label{subsec:BBTimeLine}

\begin{definition}[Inherently-sequential hash function]
Let $\cH_\kappa=\{h_s:\{0,1\}^\kappa \rarr\{0,1\}^\kappa\}_{k\in\{0,1\}^n}$ for $n=\poly(\kappa)$ be a family of functions and 
suppose that evaluating $h_s(r)$ for $r\larr\{0,1\}^\kappa$ takes time $\Omega(T)$.
$\cH_\kappa$ is said to be \emph{inherently-sequential}
if evaluating $h_s^t(r)$ for $s\larr\{0,1\}^n,r\larr\{0,1\}^\kappa$ takes time $\Omega(t\cdot T)$,
and the output of $h_s^t(r)$ is pseudorandom.
\end{definition}

The time-line puzzle construction in this section relies on the following assumption about the existence of inherently-sequential functions.

\begin{assumption}\label{assume:seqHash}
There exists a family of functions $$\hatH_\kappa=\{\hath_s:\{0,1\}^\kappa \rarr\{0,1\}^\kappa\}_{s\in\{0,1\}^n}$$  which is inherently-sequential (where $n=\poly(\kappa)$).
\end{assumption}

\begin{definition}
$\SeqTimeLine$ is a time-line puzzle defined as follows, where $\hatH_\kappa$ is the inherently-sequential hash function family from Assumption~\ref{assume:seqHash}:
\begin{itemize}
\item $\Lock_m(1^\kappa,(d_1,\dots,d_m),(t_1,\dots,t_m))$ takes input data $(d_1,\dots,d_m)\in\{0,1\}^\kappa$, 
samples random values $s\larr\{0,1\}^n,x\larr\{0,1\}^\kappa$, and outputs the puzzle 
$$P=\left(x,(t_1,\dots,t_m),s,\left( d_1\oplus \hath_s^{t_1}(x),\dots,d_m\oplus \hath_s^{t_m}(x)\right)\right).$$
\item $\TimeStep_m(1^\kappa,i,x',a')$ outputs $\hath_{a'}(x')$.
\item $\Unlock_m(1^\kappa,x',b')$ outputs $x'\oplus b'$.
\end{itemize}
\end{definition}

It is clear that $\SeqTimeLine$ satisfies correctness, so we proceed to prove security.

\begin{theorem}\label{thm:seqTimeLine}
If Assumption~\ref{assume:seqHash} holds, then $\SeqTimeLine$ is a secure time-line puzzle.
\end{theorem}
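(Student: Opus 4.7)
The plan is to verify that $\SeqTimeLine$ satisfies the two properties required for a secure time-line puzzle: the timing property and the hiding property. The hiding property will turn out to be information-theoretic, whereas the timing property reduces almost immediately to the inherent-sequentiality of $\hath$ via the observation that successfully recovering the $i$-th data item is equivalent, up to a single known XOR, to computing $\hath_s^{t_i}(x)$.

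For hiding, I would analyze the experiment $\HidingExp_{\Adv, \cT}(\kappa)$ directly. The adversary's total view consists of $(x, s)$ together with the set $\{b_i\}_{i \in I}$ of queried ciphertext components, where $I \subseteq [m] \setminus \{i'\}$ and $|I| \leq m-1$. Each $b_i = (d_{\beta_i})_i \oplus \hath_s^{t_i}(x)$ depends only on $\beta_i$ and quantities that are either public or chosen by the adversary. Since the bits $\beta_1,\dots,\beta_m$ are sampled independently and uniformly by the challenger and the challenge index $i'$ is not queried, conditioning on the entire view leaves $\beta_{i'}$ uniformly distributed. Hence any (even unbounded) adversary wins with probability exactly $1/2$, yielding the hiding property.

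For timing, suppose $\Adv$ outputs $d_i$ on input $P = (x,\vec{t},s,\vec{b}) \gets \Lock_m(\vec{d},\vec{t})$ with non-negligible probability $\eps$. Since $b_i = d_i \oplus \hath_s^{t_i}(x)$ is part of the input $P$, the value $\hath_s^{t_i}(x)$ can be recovered from $\Adv$'s output by one additional XOR on every successful run. By Assumption \ref{assume:seqHash}, any algorithm that outputs $\hath_s^{t_i}(x)$ requires time $\Omega(t_i \cdot T)$, where $T$ is the time needed for a single evaluation of $\hath_s$. Hence $\time_\Adv(P) = \Omega(t_i \cdot T)$ whenever $\Adv(P)=d_i$. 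I would then take $\Adv_{j,\vec{t'}}$ to be the naive iterative algorithm: on input $P' = (x',\vec{t'},s',\vec{b'})$, compute $y = \hath_{s'}^{t'_j}(x')$ by $t'_j$ sequential applications of $\hath_{s'}$ and output $b'_j \oplus y$. By the correctness of $\Lock_m$ this always outputs $d'_j$, and its running time is $O(t'_j \cdot T)$. Taking ratios, $\time_\Adv(P)/\time_{\Adv_{j,\vec{t'}}}(P') = \Omega(t_i/t'_j)$, giving the required inequality $\time_\Adv(P) \geq (t'_j/t_i)\cdot\time_{\Adv_{j,\vec{t'}}}(P')$ after absorbing constants into the sequentiality bound.

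The main obstacle lies in the timing step, specifically in justifying that the lower bound $\time_\Adv(P) = \Omega(t_i\cdot T)$ applies to \emph{any} adversary $\Adv$ and not merely to the honest iterative evaluator. This requires interpreting Assumption \ref{assume:seqHash} as a strong sequentiality statement that rules out shortcuts via parallelism, preprocessing, or use of the auxiliary information contained in $P$ (notably $s$ and the other $b_k$'s). A secondary subtlety is matching constants between the $\Omega(t_i T)$ lower bound on $\time_\Adv$ and the $O(t'_j T)$ upper bound on $\time_{\Adv_{j,\vec{t'}}}$, which is seamless when $t'_j = O(t_i)$ but would require extra care in regimes where $t'_j$ greatly exceeds $t_i$; I would handle the latter either by restricting to the regime implicit in the ``$j \in [t_i]$'' condition of Definition \ref{def:timeLineSec} or by invoking the sequentiality assumption slightly more carefully to match the intended ratio.
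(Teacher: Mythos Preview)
Your proposal is correct and, for the timing property, follows essentially the same reduction as the paper: recovering $d_i$ is equivalent (up to one XOR with the known $b_i$) to computing $\hath_s^{t_i}(x)$, and Assumption~\ref{assume:seqHash} then forces $\Omega(t_i\cdot T)$ time; the honest iterator furnishes the matching upper bound for $\Adv_{j,\vec{t'}}$.

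Where you differ from the paper is in explicitly separating out and proving the \emph{hiding} property. The paper's proof addresses only the timing argument and then simply asserts security; it never treats $\HidingExp$ directly. Your observation that hiding is in fact information-theoretic---because the adversary's view in $\HidingExp$ contains only $(x,s)$ and the $b_i$ for $i\in I$, and $\beta_{i'}$ for $i'\notin I$ is independent of all of these---is correct and is a genuine addition over the paper's treatment. The subtleties you flag at the end (auxiliary information in $P$ possibly giving shortcuts, and constant-matching between the $\Omega$ and $O$ bounds) are real but are handled at exactly the same level of rigor as in the paper, which likewise appeals to the assumption as a black-box statement about sequentiality.
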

\begin{proof}
Given a time-line puzzle, in order to correctly output a piece of locked data $d_i$, the adversary $\Adv$ must compute the associated mask $\hath_s^{t_i}(x)$.
This is because 
\begin{itemize}
\item all components of the puzzle apart from the masked value $d_i\oplus \hath_s^{t_i}(x)$ are independent of the locked data $d_i$, and
\item the mask $\hath_s^{t_i}(x)$ is pseudorandom (by Assumption~\ref{assume:seqHash}), so the masked value $d_i\oplus \hath_s^{t_i}(x)$ is indistinguishable
from a truly random value without knowledge of the mask.
\end{itemize}
Moreover, by Assumption~\ref{assume:seqHash}, since $\hatH_\kappa$ is an inherently-sequential function family,
it holds that there is no (asympotically) more efficient way for a \PPT{} adversary to compute $\hath_s^{t_i}(x)$
than to sequentially compute $\hath_s$ for $t_i$ iterations. It follows that $\SeqTimeLine$ is a secure time-line puzzle.
\end{proof}

\subsection{Concrete construction based on modular exponentiation}\label{subsec:RSWTimeLine}

In this subsection we present an alternative construction quite similar in structure to  the above, but based on a concrete hardness assumption. 
Note that the \cite{RSW96} time-lock puzzle construction was also based on this hardness assumption, and our time-line puzzle may be viewed as a natural ``extension''
of their construction.

\begin{assumption}\label{assume:rsw}
Let $\RSA_\kappa$ be the distribution generated as follows: sample two $\kappa$-bit primes $p,q$ uniformly at random and output $N=pq$.
The family of functions $\cH^{square}=\{h_N:\ZZ_N\rarr\ZZ_N\}_{N\larr\RSA_\kappa}$, where the index $N$ is drawn from distribution $\RSA$
and $h_N(x)=x^2\mod N$, is inherently-sequential.
\end{assumption}

\begin{definition}\label{def:rswTimeLinePuzzle}
$\RSWTimeLine$ is a time-line puzzle defined as follows:
\begin{itemize}
\item $\Lock_m(1^\kappa,(d_1,\dots,d_m),(t_1,\dots,t_m))$ takes input data $(d_1,\dots,d_m)\in\{0,1\}^\kappa$,
samples random $\kappa$-bit primes $p,q$, sets $N=pq$, and outputs the puzzle
$$P=\left(x,(t_1,\dots,t_m),N,\left( d_1\oplus h_N^{t_1}(x),\dots,d_m\oplus h_N^{t_m}(x)\right)\right).$$
\item $\TimeStep_m(1^\kappa,i,x',a')$ outputs $h_{a'}(x')=x'^2\mod a'$.
\item $\Unlock_m(1^\kappa,x',b')$ outputs $x'\oplus b'$.
\end{itemize}
\end{definition}

Again, it is clear that $\RSWTimeLine$ satisfies correctness, so we proceed to prove security.

\begin{theorem}\label{thm:squareTimeLine}
If Assumption~\ref{assume:rsw} holds, $\RSWTimeLine$ is a secure time-line puzzle.
\end{theorem}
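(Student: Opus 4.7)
The essential observation is that $\RSWTimeLine$ (Definition~\ref{def:rswTimeLinePuzzle}) is structurally identical to the generic construction $\SeqTimeLine$ from Subsection~\ref{subsec:BBTimeLine}, with the abstract inherently-sequential hash family $\hatH_\kappa$ instantiated by the concrete squaring family $\cH^{square}=\{h_N(x)=x^2\bmod N\}_{N\larr\RSA_\kappa}$. Specifically, one can match the roles term-by-term: the generic key $s$ corresponds to the modulus $N$ (which plays the role of the auxiliary information $a$), the iterated hash $\hath_s^{t_i}(x)$ corresponds to $h_N^{t_i}(x)$, and the masking of each data item $d_i$ via XOR with the iterated hash output is identical. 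Consequently, the plan is to reduce Theorem~\ref{thm:squareTimeLine} to Theorem~\ref{thm:seqTimeLine}.

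First I would verify that Assumption~\ref{assume:rsw} is precisely the instantiation of Assumption~\ref{assume:seqHash} for the family $\cH^{square}$: namely, Assumption~\ref{assume:rsw} states that $\cH^{square}$ is inherently-sequential, which by definition means that a single squaring takes time $\Omega(T)$, that $t$ iterated squarings on a random input take time $\Omega(tT)$, and that the output of $h_N^t(x)$ is pseudorandom. These are exactly the hypotheses used in the proof of Theorem~\ref{thm:seqTimeLine} to establish both the timing and hiding properties of $\SeqTimeLine$.

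Having established the syntactic and hypothetical correspondence, I would then invoke Theorem~\ref{thm:seqTimeLine} directly. Concretely, for the timing property: any adversary that outputs a locked datum $d_i$ must produce the mask $h_N^{t_i}(x)$, since all other components of the puzzle are independent of $d_i$ and the mask is pseudorandom (so the masked value is indistinguishable from random without it). By Assumption~\ref{assume:rsw}, producing $h_N^{t_i}(x)$ requires $\Omega(t_i)$ sequential squarings, giving the required run-time lower bound relating $\time_\Adv$ to $\time_{\Adv_{j,\vec{t'}}}$ by the ratio $t'_j/t_i$. For the hiding property: in $\HidingExp_{\Adv,\RSWTimeLine}(\kappa)$ the adversary receives only $(x,N)$ together with the queried $b_i$ values, and the bit $\beta_{i'}$ affects only the unqueried entry $b_{i'}$, so the adversary's view is information-theoretically independent of $\beta_{i'}$ and it wins with probability exactly $1/2$.

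The main step, rather than an obstacle, is just making the correspondence between the two constructions rigorous and pointing out that the hiding argument is in fact statistical (not computational), so it goes through unchanged; and confirming that the ratio $t'_j/t_i$ appearing in the timing clause of Definition~\ref{def:timeLineSec} is preserved under the instantiation, since the per-step cost of $h_N$ is the same for every puzzle drawn from the scheme.
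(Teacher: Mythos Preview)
Your proposal is correct and takes essentially the same approach as the paper: the paper's proof simply states that the result follows from Assumption~\ref{assume:rsw} in exactly the same way that Theorem~\ref{thm:seqTimeLine} follows from Assumption~\ref{assume:seqHash}, and refers the reader to that earlier proof. Your write-up is in fact more detailed than the paper's, since you explicitly spell out the hiding argument (and correctly observe it is information-theoretic, as the unqueried $b_{i'}$ is never handed to the adversary).
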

\begin{proof}
This follows from Assumption~\ref{assume:rsw} in exactly the same way as Theorem~\ref{thm:seqTimeLine} follows from Assumption~\ref{assume:seqHash},
so we refer the reader to the proof of Theorem~\ref{thm:seqTimeLine}.
\end{proof}

An advantage of this construction over $\SeqTimeLine$ is that the $\Lock$ algorithm can be much more efficient.
In the case of black-box inherently-sequential hash functions, 
we can only assume that the values $\hath_s^{t}(x)$ (which are XORed with the data values by the $\Lock$ algorithm)
are computed by sequentially evaluating $\hath_s$ for $t$ iterations -- that is, there is a linear dependence on $t$.
However, $\Lock$ can implemented much faster with the $\RSWTimeLine$ construction, as follows.
Since $p,q$ are generated by (and therefore, available to) the $\Lock$ algorithm, the $\Lock$ algorithm can efficiently compute $\phi(N)$. 
Then, $h_N^t(x)$ can be computed very efficiently by first computing $e=2^t\mod\varphi(N)$,
then computing $h_N^t(x)=x^e\mod N$. Exponentiation (say, by squaring) has only a logarithmic dependence on the security parameter.

Finally, we note that although both of the time-line puzzle constructions presented here lock $\kappa$ bits of data per puzzle 
(for security parameter $\kappa$), this is not at all a necessary restriction.
Using encryption, it is straightforwardly possible to lock much larger amounts of data for any given parameter sizes 
of the time-line puzzles presented here: for example, one can encrypt the data as $\Enc_k(d)$ using a secure secret-key encryption scheme, 
then use the given time-line puzzle schemes to lock the key $k$ (which is much smaller than $d$) under which the data is encrypted.
Such a scheme, with the additional encryption step, would be much more suitable for realistic use.

\end{appendix}

\end{document}